\newcommand{\blind}{1}
\newcommand{\darkred}[1]{\textcolor{red!70!black}{#1}}
\newcommand{\darkblue}[1]{\textcolor{blue!70!black}{#1}}
\definecolor{myblue}{rgb}{0.0265,    0.6137,    0.8135}
\def\bar{\widebar}
\def\tilde{\widetilde}
\def\bm{\mathbf}
\newtheorem{theorem}{Theorem}
\newtheorem{lemma}{Lemma}
\newtheorem{proposition}{Proposition}
\newtheorem{definition}{Definition}
\newtheorem{example}{Example}
\newtheorem{remark}{Remark}
\newcommand{\aaa}{\boldsymbol \alpha}
\newcommand{\bo}{\boldsymbol }
\newcommand{\cc}{\boldsymbol c}
\newcommand{\zz}{\boldsymbol z}
\newcommand{\II}{\mathbf I}
\newcommand{\LL}{\mathbf L}
\newcommand{\PP}{\mathbf P}
\newcommand{\LLambda}{\bo{\Lambda}}
\newcommand{\one}{\mathbf 1}
\newcommand{\zero}{\mathbf 0}
\newcommand{\mca}{\mathcal A}
\newcommand{\diag}{\text{\normalfont{diag}}}
\newcommand{\vect}{\text{\normalfont{vec}}}
\newcommand{\rank}{\text{\normalfont{rank}}}
\newcommand{\gom}{\text{\normalfont{GoM}}}
\def\tilde{\widetilde}
\newcommand{\bcolon}{\boldsymbol{:}}
\newcommand{\bcdot}{\boldsymbol\cdot}
\DeclareMathOperator*{\argmax}{arg\,max}
\tikzstyle{qedge}=[->,thick]
\tikzstyle{neuron}=[draw,circle,minimum size=25pt,inner sep=0pt, fill=black!10]
\tikzstyle{hidden}=[draw,circle,minimum size=25pt,inner sep=0pt, fill=white]
\tikzset{>={Latex[width=3mm,length=2mm]}}
\tikzstyle{arr}=[->, thick, black]
\tikzset{
    double color fill/.code 2 args={
        \pgfdeclareverticalshading[%
            tikz@axis@top,tikz@axis@middle,tikz@axis@bottom%
        ]{diagonalfill}{100bp}{%
            color(0bp)=(tikz@axis@bottom);
            color(50bp)=(tikz@axis@bottom);
            color(50bp)=(tikz@axis@middle);
            color(50bp)=(tikz@axis@top);
            color(100bp)=(tikz@axis@top)
        }
        \tikzset{shade, left color=#1, right color=#2, shading=diagonalfill}
    }
}
\def\spacingset#1{\renewcommand{\baselinestretch}%
{#1}\small\normalsize} \spacingset{1}
\begin{document}
\if1\blind
{
\title{Dimension-Grouped Mixed Membership Models for Multivariate Categorical Data}
\date{}

\author[1]{Yuqi Gu}
\author[2]{Elena A. Erosheva}
\author[3]{Gongjun Xu}
\author[4]{David B. Dunson}

\affil[1]{Department of Statistics, Columbia University}
\affil[2]{Department of Statistics, School of Social Work, and the Center for Statistics and the Social Sciences, University of Washington}
\affil[3]{Department of Statistics, University of Michigan}
\affil[4]{Department of Statistical Science, Duke University}

\maketitle
} \fi

\if0\blind
{
  \bigskip
  \bigskip
  \bigskip
  \begin{center}
    {\LARGE
    Dimension-Grouped Mixed Membership Models for Multivariate Categorical Data
    \par}
\end{center}
  \medskip
} \fi

\begin{abstract}
Mixed Membership Models (MMMs) are a popular family of latent structure models for complex multivariate data.  Instead of forcing each subject to belong to a single cluster, MMMs incorporate a vector of subject-specific weights characterizing partial membership across clusters.  With this flexibility come challenges in uniquely identifying, estimating, and interpreting the parameters. In this article, we propose a new class of \textit{Dimension-Grouped} MMMs (Gro-M$^3$s) for multivariate categorical data, which improve parsimony and interpretability.
In Gro-M$^3$s, observed variables are partitioned into groups such that the latent membership is constant for variables within a group but can differ across groups.
Traditional latent class models are obtained when all variables are in one group, while traditional MMMs are obtained when each variable is in its own group.
The new model corresponds to a novel decomposition of probability tensors.
Theoretically, we derive transparent identifiability conditions for both the unknown grouping structure and model parameters in general settings.
Methodologically, we propose a Bayesian approach for Dirichlet Gro-M$^3$s to inferring the variable grouping structure and estimating model parameters. Simulation results demonstrate good computational performance and empirically confirm the identifiability results. 
We illustrate the new methodology through 
{applications to a functional disability survey dataset and a personality test dataset}.
\end{abstract}

\noindent\textbf{Keywords}:
Bayesian Methods, 
Grade of Membership Model, 
Identifiability,  
Mixed Membership Model, Multivariate Categorical Data, 
Probabilistic Tensor Decomposition.

\section{Introduction}\label{sec-intro}

Mixed membership models (MMMs) are a popular family of latent structure models for complex multivariate data. Building on classical latent class and finite mixture models \citep[][]{mclachlan2000}, which assign each subject to a single cluster, MMMs include a vector of probability weights characterizing partial membership.
MMMs have seen many applications in a wide variety of fields, including social science surveys \citep{erosheva2007aoas}, topic modeling and text mining \citep{blei2003lda},
population genetics and bioinformatics \citep{pritchard2000genetics, saddiki2015glad},
biological and social networks \citep{airoldi2008mmsbm}, collaborative filtering \citep{mackey2010mixed}, and data privacy \citep{manrique2012jasa}; see \cite{airoldi2014handbook} for more examples.

Although MMMs are conceptually  appealing and very flexible, with the rich modeling capacity come challenges in identifying, accurately estimating, and interpreting the parameters.
MMMs have been popular in many applications, yet key theoretical issues remain understudied.
The handbook of \cite{airoldi2014handbook} emphasized theoretical difficulties of MMMs ranging from non-identifiability to multi-modality of the likelihood.  Finite mixture models have related challenges, and the additional complexity of the \textit{individual-level} mixed membership incurs extra difficulties.
A particularly important case is MMMs for multivariate categorical data, such as survey response 
 \citep{woodbury1978gom, erosheva2007aoas, manrique2012jasa}.  In this setting, MMMs provide an attractive alternative to the latent class model of  
 \cite{goodman1974}.
 However, little is known about what is fundamentally identifiable and learnable from observed data under such models.

 Identifiability is a key property of a statistical model, meaning that the model parameters can be uniquely obtained from the observables.
An identifiable model is a prerequisite for reproducible statistical inferences and reliable applications.
Indeed, interpreting parameters estimated from an unidentifiable model is meaningless, and may lead to misleading conclusions in practice. 
It is thus important to study the identifiability of MMMs and to provide theoretical support to back up the conceptual appeal. Even better would be to expand the MMM framework to allow 
variations that aid interpretability and identifiability.
With this motivation, and focused on mixed membership modeling of multivariate categorical data, this paper makes the following key contributions.

We propose a new class of models for multivariate categorical data, which retains the same flexibility offered by MMMs, while favoring greater parsimony and interpretability.
The key innovation is to allow the $p$-dimensional latent membership vector to belong to $G$ $(G\ll p)$ groups; memberships are the same for different variables within a group but can differ across groups.
We deem the new model the \textit{Dimension-Grouped Mixed Membership Model} (Gro-M$^3$).
Gro-M$^3$ improves interpretability by allowing the potentially high-dimensional observed variables to belong to a small number of meaningful groups. 
Theoretically, we show that both the continuous model parameters, and the discrete variable grouping structure, can be identified from the data for models in the Gro-M$^3$ class under transparent conditions on how the variables are grouped.
{This challenging identifiability issue is addressed by carefully leveraging the dimension-grouping structure to write the model as certain structured tensor products, and then invoking Kruskal's fundamental theorem on the uniqueness of three-way tensor decompositions \citep{kruskal1977three, allman2009}.}

To illustrate the methodological usefulness of the proposed class of models, we consider a special case in which each subject's mixed membership proportion vector follows a Dirichlet distribution. This is among the most popular modeling assumptions underlying various MMMs \citep{blei2003lda, erosheva2007aoas, manrique2012jasa, zhao2018gmm}.
For such a Dirichlet Gro-M$^3$, we employ a Bayesian inference procedure and develop a Metropolis-Hastings-within-Gibbs algorithm for posterior computation. 
The algorithm has excellent computational performance.
Simulation results demonstrate this approach can accurately learn the identifiable quantities of the model, including both the variable-grouping structure and the continuous model parameters. This also empirically confirms the model identifiability result.

The rest of this paper is organized as follows. Section \ref{sec-model} reviews existing mixed membership models, introduces the proposed Gro-M$^3$, and provides an interesting probabilistic tensor decomposition perspective of the models.
Section \ref{sec-id} is devoted to the study of the identifiability of the new model.
Section \ref{sec-bayes} focuses on the Dirichlet distribution induced Gro-M$^3$ and proposes a Bayesian inference procedure.
Section \ref{sec-simu} includes simulation studies and Section \ref{sec-real} applies the new model to reanalyze the NLTCS disability survey data.
Section \ref{sec-discuss} provides discussions.

\section{Dimension-Grouped Mixed Membership Models}\label{sec-model}
\subsection{Existing Mixed Membership Models}
In this subsection, we briefly review the existing MMM literature to give our proposal appropriate context.
Let $K$ be the number of extreme latent profiles.
Denote the $K$-dimensional probability simplex by $\Delta^{K-1} = \{(\pi_1,\ldots, \pi_K): \pi_k\geq 0 \text{ for all } k,\; \sum_{k=1}^K \pi_k = 1\}$.
Each subject $i$ has an individual proportion vector $\bo\pi_i = (\pi_{i,1}, \ldots, \pi_{i,K})\in\Delta^{K-1}$, which indicates the degrees to which subject $i$ is a member of the $K$ extreme profiles.
The general mixed membership models summarized in \cite{airoldi2014handbook} have the following distribution,
\begin{align}\label{eq-mmm}
	p\left(\left\{y^{(r)}_{i,1},\ldots,y_{i,p}^{(r)}\right\}_{r=1}^R\right)
	=&~\int_{\Delta^{K-1}} \prod_{j=1}^p \prod_{r=1}^R \left( \sum_{k=1}^K \pi_{i,k} f(y_{i,j}^{(r)}\mid \bo\lambda_{j,k}) \right) dD_{\aaa}(\bo \pi_i),
\end{align}
where $\bo \pi_i = (\pi_{i,1},\ldots,\pi_{i,K})$ follows the distribution $D_{\aaa}$ and is integrated out; the $\aaa$ refers to some generic population parameters depending on the specific model.
The hierarchical Bayesian representation for the model in \eqref{eq-mmm} can be written as follows.
\begin{align*}
y_{ij}^{(1)}, \ldots, y_{ij}^{(R)} \mid z_{ij}=k & \stackrel{\text{i.i.d.}}{\sim}  \text{Categorical}([d_j];~\bo\lambda_{j,k}),~~ j\in[p];\\ 
z_{i1}, \ldots, z_{ip} \mid \bo\pi_i &\stackrel{\text{i.i.d.}}{\sim}  \text{Categorical}([K];~\bo\pi_i),~~ i\in[n];\\
\bo\pi_1,\ldots,\bo\pi_n 
&\stackrel{\text{i.i.d.}}{\sim} D_{\aaa}.
\end{align*}
{where ``i.i.d.'' is short for ``independent and identically distributed''.}
The number $p$ in \eqref{eq-mmm} is the number of ``characteristics'', and $R$ is the number of ``replications'' per characteristic. 
As shown in \eqref{eq-mmm}, for each characteristic $j$, there are a corresponding set of $K$ conditional distributions indexed by parameter vectors $\{\bo\lambda_{j,k}:\, k=1,\ldots,K\}$.
Many different mixed membership models are special cases of the general setup \eqref{eq-mmm}.
For example, the popular Latent Dirichlet Allocation (LDA) \citep{blei2003lda, blei2012acm, anandkumar2014tensor} for topic modeling takes a document $i$ as a subject, and assumes there is only $p=1$ distinct characteristic (one single set of $K$ topics which are distributions over the word  vocabulary) with $R > 1$ replications (a document $i$ contains $R$ words which are conditionally i.i.d. given $\bo\pi_i$); LDA further specifies $D_{\aaa}(\bo \pi_i)$ to be the Dirichlet distribution with parameters $\bo\alpha=(\alpha_1,\ldots,\alpha_K)$.

Focusing on MMMs for  multivariate categorical data, there are generally many characteristics with $p \gg 1$ and one replication of each characteristic with $R=1$ in \eqref{eq-mmm}.
Each variable $y_{i,j}\in\{1,\ldots, d_j\}$ takes one of  $d_j$ unordered categories.
For each subject $i$, the observables $\bo y_i = (y_{i,1}, \ldots, y_{i,p})^\top$ are a vector of $p$ categorical variables. 
MMMs for such data are traditionally called Grade of Membership models (GoMs) \citep{woodbury1978gom}.
GoMs have been extensively used in applications, such as disability survey data \citep{erosheva2007aoas}, scholarly publication data \citep{erosheva2004pnas}, and data disclosure risk and privacy \citep{manrique2012jasa}.
GoMs are also useful for psychological measurements where data are Likert scale responses to psychology survey items, and educational assessments where data are students' correct/wrong answers to test questions \citep[e.g.][]{shang2021aoas}.

In GoMs, the conditional distribution $f(y_{i,j}\mid \bo\lambda_{j,k})$ in \eqref{eq-mmm} can be written as 
$\mathbb P(y_{i,j} \mid \bo\lambda_{j,k}) = \prod_{c_j=1}^{d_j} \lambda_{j,c_j,k}^{\mathbb I(y_{i,j}=c_j)}$.
Hence, the probability mass function of $\bo y_i$ in a GoM is
\begin{align}\label{eq-gompmf}
	p^{\text{GoM}}\left(y_{i,1},\ldots,y_{i,p}\mid \bo\Lambda, \aaa\right)
	&= \int_{\Delta^{K-1}}
	\prod_{j=1}^p \left[ \sum_{k=1}^K \pi_{i,k}  \prod_{c_j=1}^{d_j} \lambda_{j, c_j, k}^{\mathbb I(y_{i,j}=c_j)} \right] dD_{\aaa}(\bo\pi_i).
\end{align}
The hierarchical Bayesian representation for the model in \eqref{eq-gompmf} can be written as follows.
\begin{align*}
y_{ij}\mid z_{ij}=k & \stackrel{\text{i.i.d.}}{\sim}  \text{Categorical}([d_j];~\bo\lambda_{j,k}),~~ j\in[p];\\ 
z_{i1}, \ldots, z_{ip} \mid\bo\pi_i &\stackrel{\text{i.i.d.}}{\sim}  \text{Categorical}([K];~\bo\pi_i),~~ i\in[n];\\
\bo\pi_1,\ldots,\bo\pi_n 
&\stackrel{\text{i.i.d.}}{\sim} D_{\aaa}.
\end{align*}
See a graphical model representation of the GoM with sample size $n$ in Figure \ref{fig-graph}(b), where individual latent indicator variables $(z_{i,1}, \ldots, z_{i,p})\in[K]^p$ are introduced to better describe the data generative process. 

\begin{figure}[h!]\centering

\resizebox{0.47\textwidth}{!}{
    \begin{tikzpicture}[scale=1.8]
    \node (v1)[neuron] at (0, 0) {$y_{i,1}$};
    \node (v2)[neuron] at (0.8, 0) {$y_{i,2}$};
    \node (v3)[neuron] at (1.6, 0) {$\cdots$};
    \node (v4)[neuron] at (2.4, 0) {$\cdots$};
    \node (v5)[neuron] at (3.2, 0) {$\cdots$};
    \node (v6)[neuron] at (4, 0)   {$y_{i,p}$};

    \node (h0)[hidden] at (2, 1.8) {$z_i$};
    \node (h0_text)[] at (3, 1.82) {$z_i\in[K]$};
    
    \node (hout)[] at (2, 3.6) {$\bo\nu$};
    
    \draw[qedge] (hout) -- (h0);

    \draw[qedge] (h0) -- (v1) node [midway,above=-0.12cm,sloped] {};
    
    \draw[qedge] (h0) -- (v2) node [midway,above=-0.12cm,sloped] {};
    
    \draw[qedge] (h0) -- (v3) node [midway,above=-0.12cm,sloped] {};
    
    \draw[qedge] (h0) -- (v4) node [midway,above=-0.12cm,sloped] {};
    
    \draw[qedge] (h0) -- (v5) node [midway,above=-0.12cm,sloped] {};
    
    \draw[qedge] (h0) -- (v6) node [midway,above=-0.12cm,sloped] {}; 

    \draw[rounded corners, thick] (-0.4, -0.5) rectangle (4.4, 3);
    \node (nn) at (-0.1, 2.7) {\large $n$};
    
    \node (lambda1)[] at (0, -0.8) {$\mathbf{\Lambda}_{1,:,:}$};
    \node (lambda2)[] at (0.8, -0.8) {$\mathbf{\Lambda}_{2,:,:}$};
    \node (lambda3)[] at (1.6, -0.8) {$\cdots$};
    \node (lambda4)[] at (2.4, -0.8) {$\cdots$};
    \node (lambda5)[] at (3.2, -0.8) {$\cdots$};
    \node (lambda6)[] at (4, -0.8) {$\mathbf{\Lambda}_{p,:,:}$};
    
    \draw[qedge] (lambda1) -- (v1);
    \draw[qedge] (lambda2) -- (v2);
    \draw[qedge] (lambda3) -- (v3);
    \draw[qedge] (lambda4) -- (v4);
    \draw[qedge] (lambda5) -- (v5);
    \draw[qedge] (lambda6) -- (v6);
    
    \node at (2, -1.2) {(a) Latent Class Model};
    \node at (2, -1.5) {(Probabilistic CP Decomposition)};
\end{tikzpicture}
}
\hfill
\resizebox{0.47\textwidth}{!}{
    \begin{tikzpicture}[scale=1.8]
    \node (v1)[neuron] at (0, 0) {$y_{i,1}$};
    \node (v2)[neuron] at (0.8, 0) {$y_{i,2}$};
    \node (v3)[neuron] at (1.6, 0) {$\cdots$};
    \node (v4)[neuron] at (2.4, 0) {$\cdots$};
    \node (v5)[neuron] at (3.2, 0) {$\cdots$};
    \node (v6)[neuron] at (4, 0)   {$y_{i,p}$};

    \node (h1)[hidden] at (0, 1.2) {$z_{i,1}$};
    \node (h2)[hidden] at (0.8, 1.2) {$\cdots$};
    \node (h3)[hidden] at (1.6, 1.2) {$\cdots$};
    \node (h4)[hidden] at (2.4, 1.2) {$\cdots$};
    \node (h5)[hidden] at (3.2, 1.2) {$\cdots$};
    \node (h6)[hidden] at (4, 1.2)   {$z_{i,p}$};
    
    \node (h0)[hidden] at (2, 2.4) {$\bo\pi_i$};
    \node (h0_text)[] at (3, 2.45) {$\bo\pi_i\in \Delta^{K-1}$};
    
    \node (hout)[] at (2, 3.6) {$\aaa$};
    
    \draw[qedge] (hout) -- (h0);

    \draw[qedge] (h0) -- (h1); 
    \draw[qedge] (h0) -- (h2);
    \draw[qedge] (h0) -- (h3);
    \draw[qedge] (h0) -- (h4); 
    \draw[qedge] (h0) -- (h5);
    \draw[qedge] (h0) -- (h6);

    \draw[qedge] (h1) -- (v1) node [midway,above=-0.12cm,sloped] {}; 
    
    \draw[qedge] (h2) -- (v2) node [midway,above=-0.12cm,sloped] {};  
    
    \draw[qedge] (h3) -- (v3) node [midway,above=-0.12cm,sloped] {}; 
    
    \draw[qedge] (h4) -- (v4) node [midway,above=-0.12cm,sloped] {}; 
    
    \draw[qedge] (h5) -- (v5) node [midway,above=-0.12cm,sloped] {}; 
    
    \draw[qedge] (h6) -- (v6) node [midway,above=-0.12cm,sloped] {}; 

    \draw[rounded corners, thick] (-0.4, -0.5) rectangle (4.4, 3);
    \node (nn) at (-0.1, 2.7) {\large $n$};
    
    \node (lambda1)[] at (0, -0.8) {$\mathbf{\Lambda}_{1,:,:}$};
    \node (lambda2)[] at (0.8, -0.8) {$\mathbf{\Lambda}_{2,:,:}$};
    \node (lambda3)[] at (1.6, -0.8) {$\cdots$};
    \node (lambda4)[] at (2.4, -0.8) {$\cdots$};
    \node (lambda5)[] at (3.2, -0.8) {$\cdots$};
    \node (lambda6)[] at (4, -0.8) {$\mathbf{\Lambda}_{p,:,:}$};
    
    \draw[qedge] (lambda1) -- (v1);
    \draw[qedge] (lambda2) -- (v2);
    \draw[qedge] (lambda3) -- (v3);
    \draw[qedge] (lambda4) -- (v4);
    \draw[qedge] (lambda5) -- (v5);
    \draw[qedge] (lambda6) -- (v6);
    
    \node at (2, -1.2) {(b) Grade of Membership Model, $z_{i,j}\in[K]$};
    \node at (2, -1.5) {(Probabilistic Tucker Decomposition)};
\end{tikzpicture}
}

\bigskip
\resizebox{0.47\textwidth}{!}{
    \begin{tikzpicture}[scale=1.8]
    \draw [dotted, thick, rounded corners, red!70!black] (-0.3, -0.4) rectangle (4.3, 1.55);
    
    \node (nn) at (-0.1, 2.7) {\large $n$};
        
    \node (v1)[neuron] at (0, 0) {$y_{i,1}$};
    \node (v2)[neuron] at (0.8, 0) {$y_{i,2}$};
    \node (v3)[neuron] at (1.6, 0) {$\cdots$};
    \node (v4)[neuron] at (2.4, 0) {$\cdots$};
    \node (v5)[neuron] at (3.2, 0) {$\cdots$};
    \node (v6)[neuron] at (4, 0)   {$y_{i,p}$};
    
    \node (h1)[hidden] at (1.0, 1.2) {$z_{i,1}$};
    \node (h2)[hidden] at (2.0, 1.2) {$\cdots$};
    \node (h3)[hidden] at (3.0, 1.2) {$z_{i,G}$};

    \node (h0)[hidden] at (2, 2.4) {$\bo\eta_i$};
    \node (h0_text)[] at (3.2, 2.45) {$f(\bo\eta_i)\in \Delta^{K-1}$};
    
    \node (gmat) at (0.6, 3.6) {\darkred{$\LL$}};
    \draw[qedge] (gmat) -- (0.6, 1.55);
    
    \node (h_mu) at (1.2, 3.6) {$\bo\mu$};
    \node (h_sigma) at (2.8, 3.6) {$\mathbf{\Sigma}$};
    
    \draw[qedge] (h_mu) -- (h0);
    \draw[qedge] (h_sigma) -- (h0);

    \draw[qedge] (h0) -- (h1); 
    \draw[qedge] (h0) -- (h2);
    \draw[qedge] (h0) -- (h3);

    \draw[qedge, red!70!black, thick] (h1) -- (v1) node [midway,above=-0.12cm,sloped] {}; 
    
    \draw[qedge, red!70!black, thick] (h1) -- (v2) node [midway,above=-0.12cm,sloped] {};
    
    \draw[qedge, red!70!black, thick] (h2) -- (v3) node [midway,above=-0.12cm,sloped] {};
    
    \draw[qedge, red!70!black, thick] (h2) -- (v4) node [midway,above=-0.12cm,sloped] {}; 
    
    \draw[qedge, red!70!black, thick] (h3) -- (v5) node [midway,above=-0.12cm,sloped] {}; 
    
    \draw[qedge, red!70!black, thick] (h3) -- (v6) node [midway,above=-0.12cm,sloped] {}; 
    
    \draw[rounded corners, thick] (-0.4, -0.5) rectangle (4.4, 3);

    \node (lambda1)[] at (0, -0.8) {$\mathbf{\Lambda}_{1,:,:}$};
    \node (lambda2)[] at (0.8, -0.8) {$\mathbf{\Lambda}_{2,:,:}$};
    \node (lambda3)[] at (1.6, -0.8) {$\cdots$};
    \node (lambda4)[] at (2.4, -0.8) {$\cdots$};
    \node (lambda5)[] at (3.2, -0.8) {$\cdots$};
    \node (lambda6)[] at (4, -0.8) {$\mathbf{\Lambda}_{p,:,:}$};
    
    \draw[qedge] (lambda1) -- (v1);
    \draw[qedge] (lambda2) -- (v2);
    \draw[qedge] (lambda3) -- (v3);
    \draw[qedge] (lambda4) -- (v4);
    \draw[qedge] (lambda5) -- (v5);
    \draw[qedge] (lambda6) -- (v6);

    \node at (2, -1.2) {(c) (New) Gro-M$^3$, $f(\bo\eta_i)$ logit normal};
    \node at (2, -1.5) {(Probabilistic Hybrid Decomposition)};
\end{tikzpicture}
}
\hfill
\resizebox{0.47\textwidth}{!}{
    \begin{tikzpicture}[scale=1.8]
    \draw [dotted, thick, rounded corners, red!70!black] (-0.3, -0.4) rectangle (4.3, 1.55);
    \node (nn) at (-0.1, 2.7) {\large $n$};
        
    \node (v1)[neuron] at (0, 0) {$y_{i,1}$};
    \node (v2)[neuron] at (0.8, 0) {$y_{i,2}$};
    \node (v3)[neuron] at (1.6, 0) {$\cdots$};
    \node (v4)[neuron] at (2.4, 0) {$\cdots$};
    \node (v5)[neuron] at (3.2, 0) {$\cdots$};
    \node (v6)[neuron] at (4, 0)   {$y_{i,p}$};
    
    \node (h1)[hidden] at (1.0, 1.2) {$z_{i,1}$};
    \node (h2)[hidden] at (2.0, 1.2) {$\cdots$};
    \node (h3)[hidden] at (3.0, 1.2) {$z_{i,G}$};
    
    \node (h0)[hidden] at (2, 2.4) {$\bo\pi_i$};
    \node (h0_text)[] at (3, 2.45) {$\bo\pi_i\in \Delta^{K-1}$};

    \node (gmat) at (0.6, 3.6) {\darkred{$\LL$}};
    \draw[qedge] (gmat) -- (0.6, 1.55);
    
    \node (hout)[] at (2, 3.6) {$\aaa$};
    
    \draw[qedge] (hout) -- (h0);

    \draw[qedge] (h0) -- (h1); 
    \draw[qedge] (h0) -- (h2);
    \draw[qedge] (h0) -- (h3);

    \draw[qedge, red!70!black, thick] (h1) -- (v1) node [midway,above=-0.12cm,sloped] {}; 
    
    \draw[qedge, red!70!black, thick] (h1) -- (v2) node [midway,above=-0.12cm,sloped] {};
    
    \draw[qedge, red!70!black, thick] (h2) -- (v3) node [midway,above=-0.12cm,sloped] {};
    
    \draw[qedge, red!70!black, thick] (h2) -- (v4) node [midway,above=-0.12cm,sloped] {}; 
    
    \draw[qedge, red!70!black, thick] (h3) -- (v5) node [midway,above=-0.12cm,sloped] {}; 
    
    \draw[qedge, red!70!black, thick] (h3) -- (v6) node [midway,above=-0.12cm,sloped] {}; 

    \draw[rounded corners, thick] (-0.4, -0.5) rectangle (4.4, 3);
    
    \node (lambda1)[] at (0, -0.8) {$\mathbf{\Lambda}_{1,:,:}$};
    \node (lambda2)[] at (0.8, -0.8) {$\mathbf{\Lambda}_{2,:,:}$};
    \node (lambda3)[] at (1.6, -0.8) {$\cdots$};
    \node (lambda4)[] at (2.4, -0.8) {$\cdots$};
    \node (lambda5)[] at (3.2, -0.8) {$\cdots$};
    \node (lambda6)[] at (4, -0.8) {$\mathbf{\Lambda}_{p,:,:}$};
    
    \draw[qedge] (lambda1) -- (v1);
    \draw[qedge] (lambda2) -- (v2);
    \draw[qedge] (lambda3) -- (v3);
    \draw[qedge] (lambda4) -- (v4);
    \draw[qedge] (lambda5) -- (v5);
    \draw[qedge] (lambda6) -- (v6);
    
    \node at (2, -1.2) {(d) (New) Gro-M$^3$, $\bo\pi_i$ Dirichlet};
    \node at (2, -1.5) {(Probabilistic Hybrid Decomposition)};
\end{tikzpicture}
}

\caption{Graphical model representations of LCMs in (a), GoMs in (b), and the proposed family of Gro-M$^3$s with two examples in (c), (d). Shaded nodes $\{y_{i,j}\}$ are observed variables, white nodes are latent variables, quantities outside each solid box are population parameters.
In (c) and (d), the dotted red box is the key dimension-grouping structure, where the red edges from $\{z_{i,g}\}$ to $\{y_{i,j}\}$ correspond to entries of ``1'' in the grouping matrix $\LL$.}
\label{fig-graph}
\end{figure}
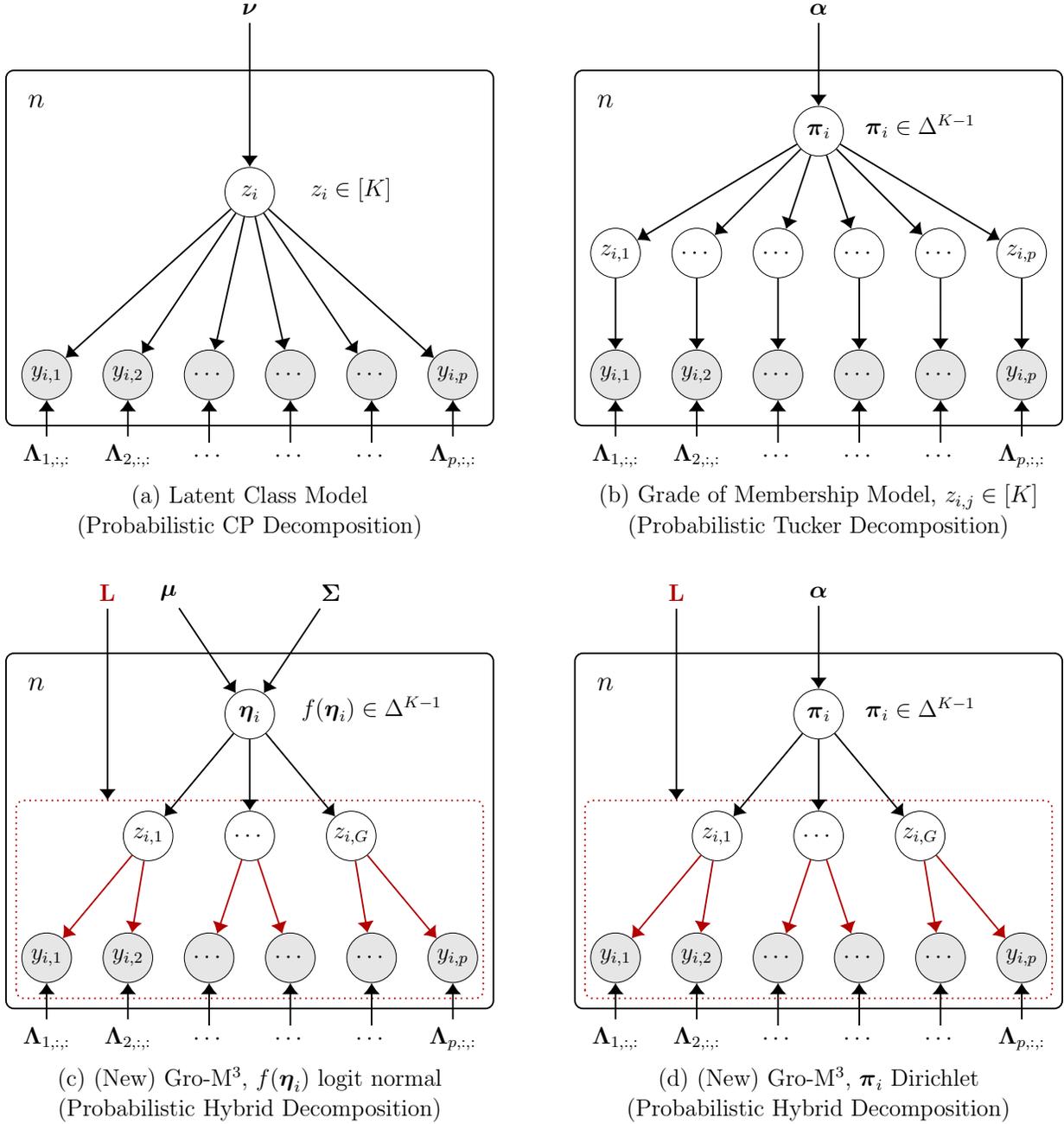

We emphasize that the case with $p>1$ and $R=1$ is fundamentally different from the topic models with $p=1$ and $R>1$, with the former typically involving many more parameters. 
{
This is because the ``bag-of-words'' assumption in the topic model with $R>1$ disregards word order in a document and assumes all words in a document are \emph{exchangeable}. 
In contrast, our mixed-membership model for multivariate categorical data {does not assume} a subject's responses to the $p$ items in a survey/questionnaire are exchangeable. 
In other words, given a subject's mixed membership vector $\bo\pi_i$, his/her responses to the $p$ items are {independent \emph{but not} identically distributed} (because they follow categorical distributions governed by $p$ different sets of parameters $\{\bo \lambda_{j,k} \in \mathbb R^{d}:~ k\in[K\}$ for $j=1,\ldots,p$); whereas in a topic model, given a document's latent topic proportion vector $\bo \pi_i$, the $p$ words in the document are {independent \emph{and} identically distributed},  following the categorical distribution with the same set of parameters $\{\bo\lambda_{k} \in \mathbb R^V:~ k\in[K]\}$ (here $V$ denotes the vocabulary size).
In this sense, the GoM model has greater modeling flexibility than topic models and are more suitable for modeling item response data, where it is inappropriate to assume that the items in the survey/questionnaire are exchangeable or share the same set of parameters.}
This fact is made clear also in Figure \ref{fig-graph}(b), where for each $j\in[p]$ there is a population quantity, the parameter node $\bo\Lambda_{j,:,:}$ (also denoted by $\bo\Lambda_j$ for simplicity), that governs its distribution.
{Thus identifiability is a much greater challenge for GoM models.
To our best knowledge, the identifiability issue of the grade-of-membership (GoM) models for item response data considered in \cite{woodbury1978gom} and \cite{erosheva2007aoas} has not been rigorously investigated so far. 
Motivated by the difficulty of identifying GoM in its original setting due to the large parameter complexity, we next propose a new modeling grouping component to enhance identifiability.
Our resulting model still does not make any exchangeability assumption of the items, but rather leverages the variable grouping structure to reduce model complexity. 
}

\subsection{New Modeling Component: the Variable Grouping Structure}

Generalizing Grade of Membership models for multivariate categorical data, we propose a new structure that groups the $p$ observed variables in the following sense: any subject's latent membership 
is the same for variables within a group but can differ across groups.
To represent the key structure of how the $p$ variables are partitioned into $G$ groups, we introduce a notation of the \textit{grouping matrix} $\LL = (\ell_{j,g})$. The $\LL$ is a $p\times G$ matrix with binary entries, with rows indexed by the $p$ variables and columns by the $G$ groups.
Each row $j$ of $\LL$ has exactly one entry of ``1'' indicating group assignment. In particular, 
\begin{align}
\LL = (\ell_{j,g})_{p\times G},\qquad
    \ell_{j,g}=
    \begin{cases}
    1, & \text{if the $j$th variable belongs to the $g$th group};\\
    0, & \text{otherwise}.
    \end{cases}
\end{align}
Our key specification is the following generative process {in the form of a hierarchical Bayesian representation},
\begin{flalign}\notag
\text{Gro-M$^3$:}
&&\left\{y_{i,j}\right\}_{\ell_{j,g}=1}  \mid z_{i,g}=k 
    ~& \stackrel{\text{ind.}}{\sim} ~ \text{Categorical}\left([d_j];\,\left(\lambda_{j, 1, k}, \cdots, \lambda_{j, d_j, k}\right)\right),\quad g\in[G]; &
    \\ \label{eq-model}
&&
	z_{i,1},\ldots,z_{i,G}\mid \bo \pi_i 
	 ~& \stackrel{\text{i.i.d}}{\sim} ~ \text{Categorical}([K];\, \bo \pi_i); \\ \notag
&&\bo \pi_1,\ldots,\bo\pi_n ~&\stackrel{\text{i.i.d.}}{\sim} D_{\aaa}.
\end{flalign}
{where ``ind.'' is short for ``independent'', meaning that conditional on $z_{i,g}=k$, subject $i$'s observed responses to items in group $g$ are independently generated.}
Hence, given the population parameters $(\LL, \bo\Lambda, \aaa)$, the probability distribution of $\bo y_i$ can be written as
\begin{align*}
     p^{\text{Gro-M$^3$}}\left(y_{i,1},\ldots,y_{i,p}\mid \LL, \bo\Lambda, \aaa\right)
	= &~
    \int_{\Delta^{K-1}} \prod_{g=1}^G \left[\sum_{k=1}^K \pi_{i,k} \prod_{j:\, \ell_{j,g} = 1} \prod_{c_j=1}^{d_j} \lambda_{j, c_j, k}^{\mathbb I(y_{i,j}=c_j)}  \right] dD_{\aaa}(\bo \pi_i).
\end{align*}
For a sample with $n$ subjects, assume the observed responses $\bo{y}_1, \ldots, \bo y_n$ are independent and identically distributed according to the above model.

We visualize the proposed model as a probabilistic graphical model to highlight connections to and differences from existing latent structure models for multivariate categorical data.
In Figure \ref{fig-graph}, we show the graphical model representations of two popular latent structure models for multivariate categorical data in (a) and (b), and for the newly proposed Gro-M$^3$ in (c) and (d). The $\bo\Lambda_{j}$ for $j\in[p]$ denotes a $d_j\times K$ matrix with entries $\lambda_{j,c_j,k}$. Each column of $\bo\Lambda_{j}$ characterizes a conditional probability distribution of variable $y_j$ given a particular extreme latent profile.  We emphasize that the variable grouping is done at the level of the latent allocation variables $z$, and 
that the $\mathbf{\Lambda}_{j}$ parameters are still free without constraints just as they are in traditional LCMs or GoMs.
{From the visualizations in Figure \ref{fig-graph} we can also easily distinguish our proposed model from another popular family of methods, the co-clustering methods \citep{dhillon2003information, govaert2013co}. Co-clustering usually refers to simultaneously clustering the subjects and clustering the variables, where subjects within a cluster exhibit similar behaviors and variables within a cluster also share similar characteristics. Our Gro-M$^3$, however, does not restrict the $p$ variables to have similar characteristics within groups, but rather allows them to have entirely free parameters $\LLambda_{1}, \ldots, \LLambda_p$. The ``dimension-grouping'' happens at the latent level by constraining the latent allocations behind the $p$ variables to be grouped into $G$ statuses. Such groupings give rise to a novel probabilistic hybrid tensor decomposition visualized in Figure \ref{fig-graph}(c)--(d); see the next Section \ref{sec-tensor} for details.}

Other than facilitating model identifiability (see Section \ref{sec-id}), our dimension-grouping modeling assumption is also motivated by real-world applications. 
{In general, our new model Gro-M$^3$ with the variable grouping component can apply to any multivariate categorical data to simultaneously model individual mixed membership and capture variable similarity. 
For example, Gro-M$^3$ can be applied to  survey/questionnaire response data in social sciences, where it is not only of interest to model subjects' partial membership to several extreme latent profiles, but also of interest to identify blocks of items which share similar measurement goals within each block. 
We next provide numerical evidence to demonstrate the merit of the variable grouping modeling component.
For a dataset simulated from Gro-M$^3$ (in the setting as the later Table \ref{tab-acc-K3}) and also the real-world IPIP personality test dataset (analyzed in the later Section \ref{sec-real}), we calculate the sample Cramer's V between item pairs.
Cramer's V is a classical measure of association between two categorical variables, which gives a value between 0 and 1, with larger values indicating stronger association.
Figure \ref{fig-crv4} presents the plots of the sample Cramer's V matrix for the simulated data and the real IPIP data.
This figure shows that the pairwise item dependence for the Gro-M$^3$-simulated data looks quite similar to the real-world personality test data. Indeed, after fitting the Gro-M$^3$ to this IPIP personality test dataset, the estimated model-based Cramer's V shown in Figure \ref{fig-crv4}(c) nicely and more clearly recovers the item block structure.}
We conjecture that many real world datasets in other applied domains exhibit similar grouped dependence.

\begin{figure}[h!]
    \centering
    \begin{subfigure}[b]{0.36\textwidth}
    \includegraphics[width=\textwidth]{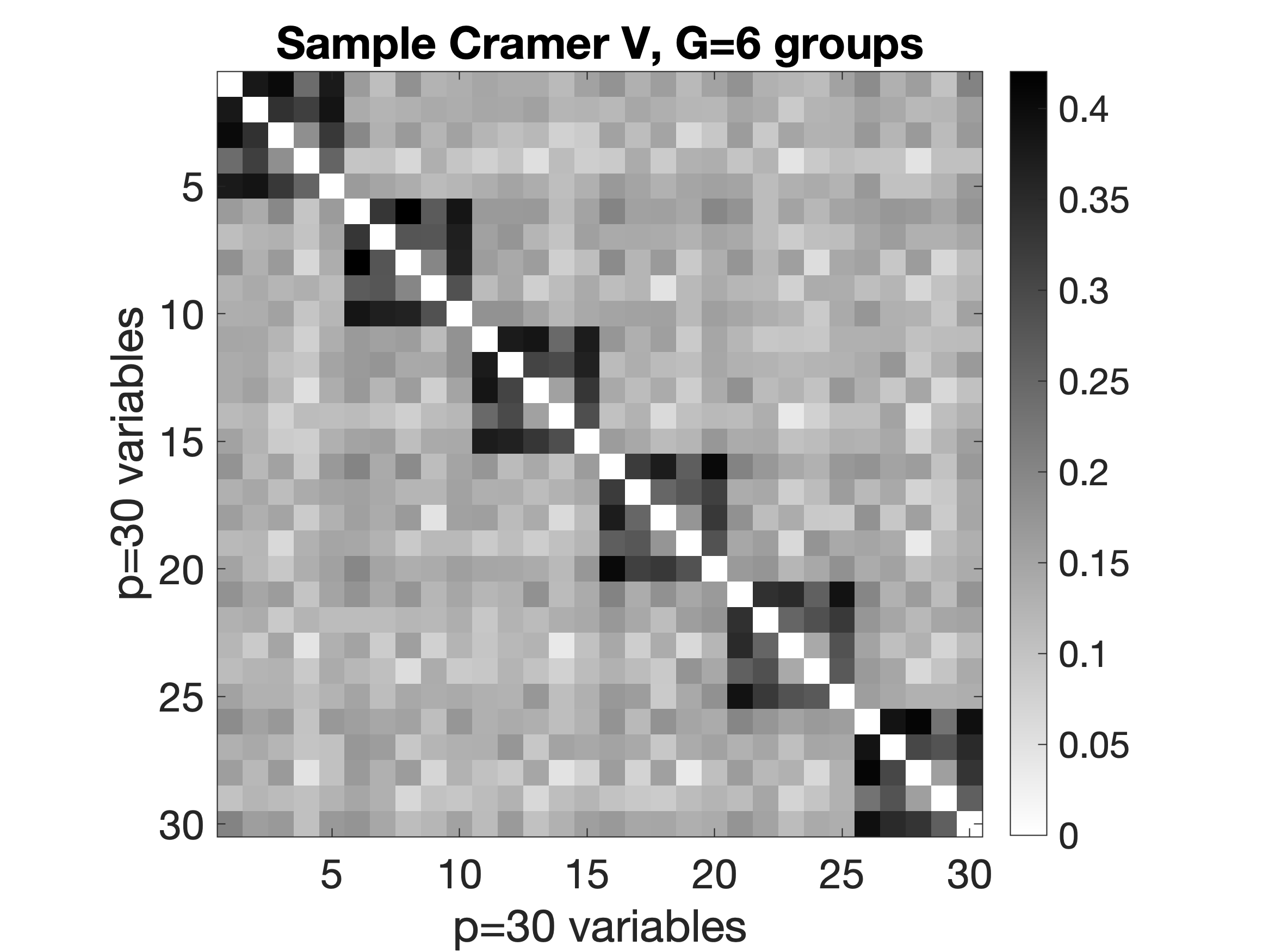}
    \caption{Simulated data.}
    \end{subfigure}
    \begin{subfigure}[b]{0.29\textwidth}
    \includegraphics[width=\textwidth]{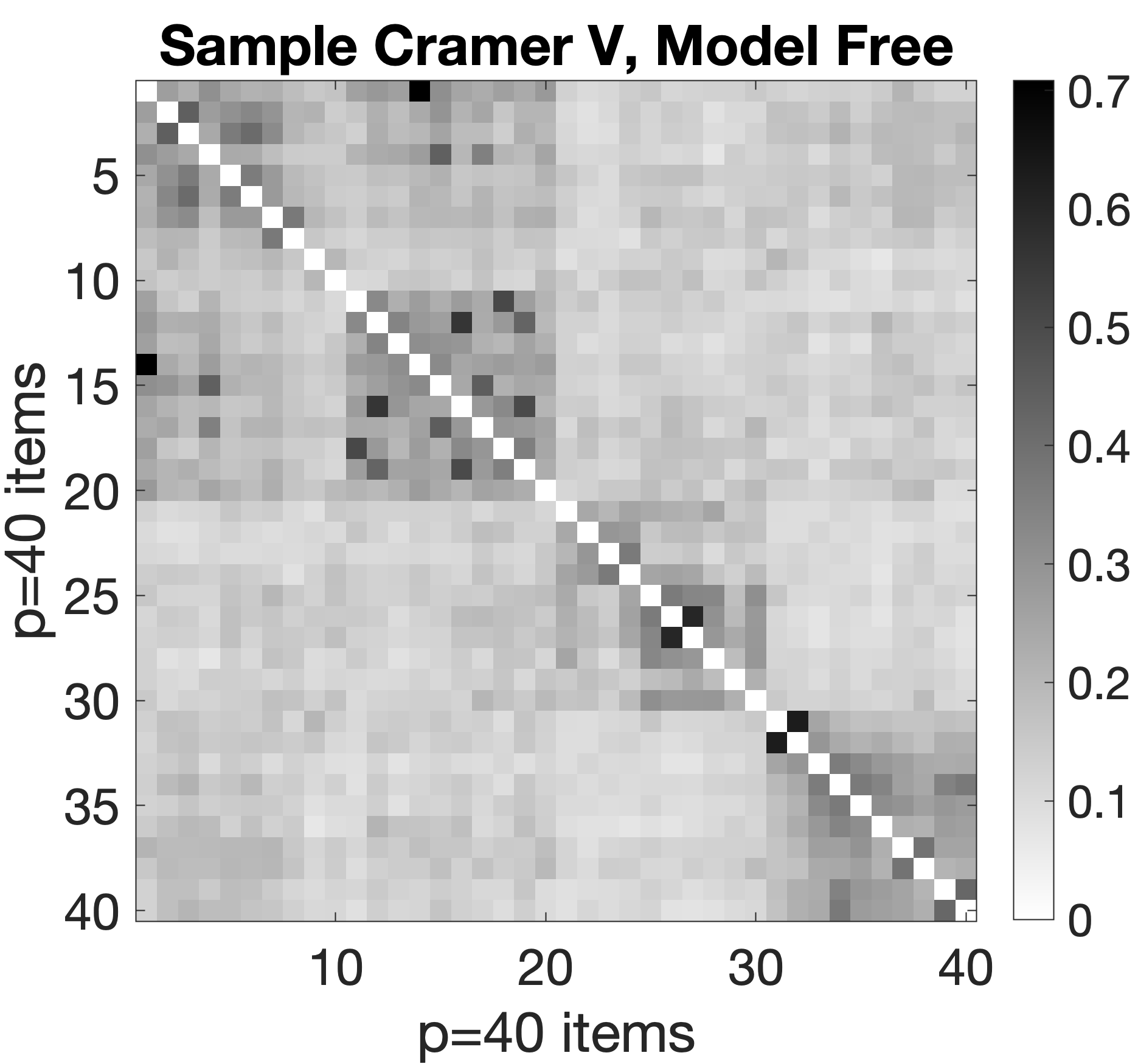}
    \caption{IPIP data, sample CRV.}
    \end{subfigure}
    \quad
    \begin{subfigure}[b]{0.29\textwidth}
    \includegraphics[width=\textwidth]{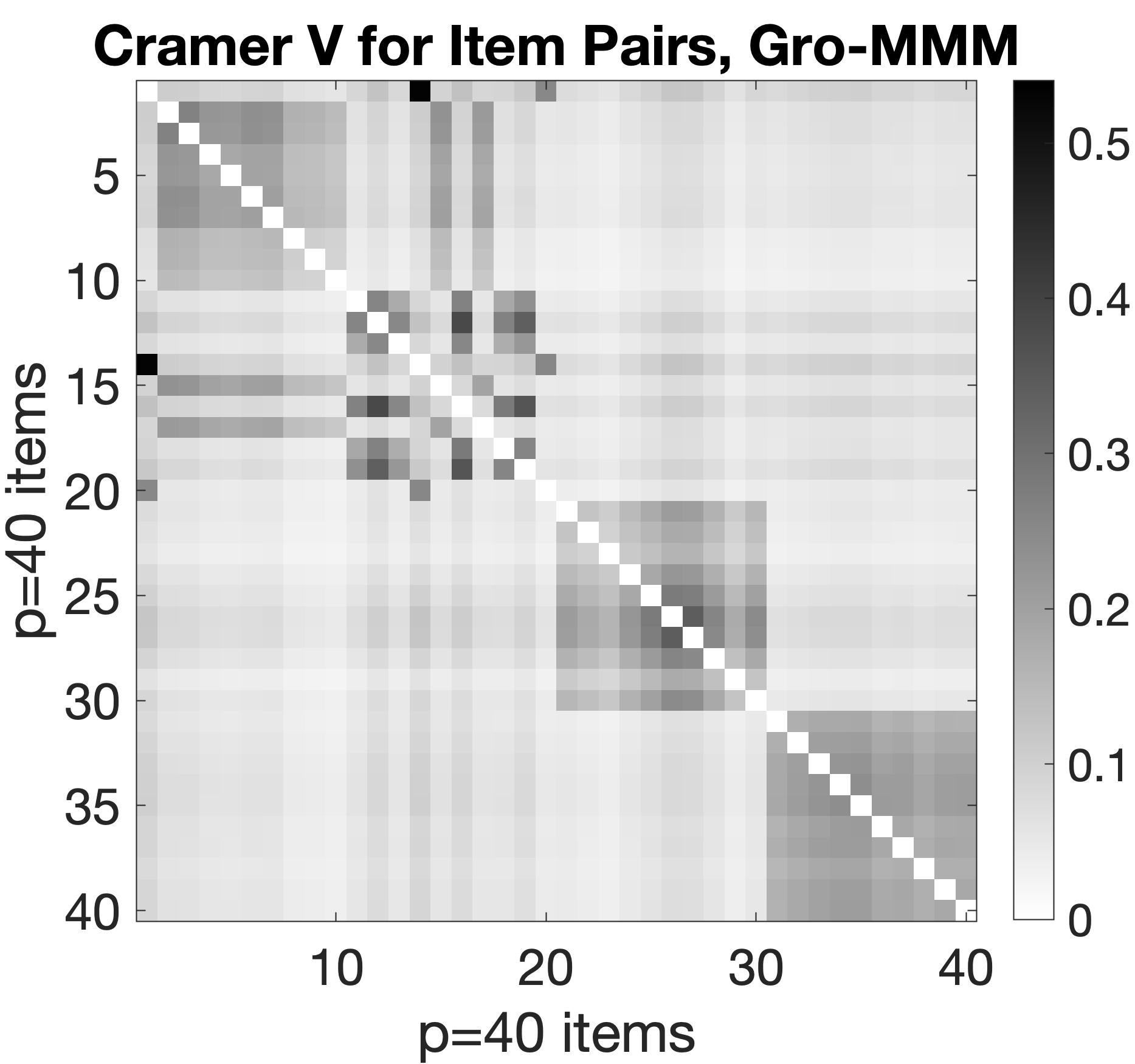}
    \caption{IPIP data, Gro-M$^3$ CRV.}
    \end{subfigure}
    
    \caption{\darkblue{(a): Sample Cramer's V (abbreviated as CRV) for a simulated dataset. (b): Sample Cramer's V for the IPIP data. (c) Gro-M$^3$ based Cramer's V for the IPIP data.}}
    \label{fig-crv4}
\end{figure}

\subsection{Probabilistic Tensor Decomposition Perspective}\label{sec-tensor}
The Gro-M$^3$ class has interesting connections to popular tensor decompositions.
For a subject $i$, the observed vector $\bo y_i$ resides in a contingency table with $\prod_{j=1}^p d_j$ cells. Since the MMMs for multivariate categorical data (both traditional GoM and the newly proposed Gro-M$^3$) induce a probability of $\bo y_i$ being in each of these cells, such probabilities $\{p\left(y_{i,1}=c_1,\ldots,y_{i,p}=c_p\mid -\right);\, c_j\in[d_j]~\text{for each}~j\in[p]\}$ can be arranged as a $p$-way $d_1 \times d_2 \times \cdots \times d_p$ array. 
This array is a tensor with $p$ modes and we denote it by $\PP$; \cite{koldabader2009} provided a review of tensors.
The tensor $\PP$ has all the entries nonnegative and they sum up to one, so we call it a \textit{probability tensor}. 
We next describe in detail the tensor decomposition perspective of our model; such a perspective will turn out to be useful in the study of identifiability. 

The probability mass function of $\bo y_i$ under the traditional GoM model can be written as follows by exchanging the order of product and summation,
\begin{align}\notag
	&~ p^{\text{GoM}}\left(y_{i,1}=c_1,\ldots,y_{i,p}=c_p\mid \bo\Lambda, \aaa\right)
    = \int_{\Delta^{K-1}}
	\prod_{j=1}^p \left[ \sum_{k=1}^K \pi_{i, k}  \lambda_{j, c_j, k}
    \right] dD_{\aaa}(\bo \pi_i)
    \\ \label{eq-gomtensor}
	= &~
    \sum_{k_1=1}^K \cdots \sum_{k_p=1}^K \prod_{j=1}^p
	\lambda_{j, c_j, k_j}
   \underbrace{\int_{\Delta^{K-1}}\pi_{i, k_1} \cdots \pi_{i, k_p} dD_{\aaa}(\bo\pi_i)}_{=:\; \phi^{\gom}_{k_1,\ldots, k_p}}.
\end{align}
Then $\mathbf{\Phi}^{\gom}:=\left(\phi^{\gom}_{k_1,\ldots, k_p};\; k_j\in[K]\right)$ forms a tensor with $p$ modes, and each mode has dimension $K$. Further, this tensor $\mathbf{\Phi}$ is a probability tensor, because $\phi_{k_1,\ldots, k_p} \geq 0$ and it is not hard to see that its entries sum up to one. Viewed from a tensor decomposition perspective, this is the popular Tucker decomposition \citep{tucker1966}; more specifically this is the nonnegative and probabilistic version of the Tucker decomposition.
The $\mathbf{\Phi}^{\gom}$ represents the Tucker tensor core, and the product of  $\{\lambda_{j, c_j, k}\}$ form the Tucker tensor arms.

It is useful to compare  our modeling assumption to that of the the Latent Class Model \citep[LCM;][]{goodman1974}, which follows the graphical model shown in Figure \ref{fig-graph}(a).
The LCM is essentially a finite mixture model assuming each subject $i$ belongs to a single cluster. 
The distribution of $\bo y_i$ under an LCM is
\begin{align}\label{eq-lcm}
    p^{\text{LC}}\left(y_{i,1}=c_1,\ldots,y_{i,p}=c_p\mid \bo\Lambda, \bo\nu\right)
    &=
    \sum_{k=1}^K \mathbb P(z_i = k) \prod_{j=1}^p \mathbb P(y_{i,j} \mid z_i = k)
    =
    \sum_{k=1}^K \nu_k \prod_{j=1}^p \lambda_{j, c_j, k}.
\end{align}
Based on the above definition, each subject $i$ has a single variable $z_i\in[K]$ indicating which latent class it belongs to, rather than a mixed membership proportion vector $\bo\pi_i$.
Denoting $\bo\nu^{\text{LC}} = (\nu_k;\, k\in[K])$, then \eqref{eq-lcm} corresponds to the popular CP decomposition of tensors \citep{hitchcock1927}, where the CP rank is at most $K$.

Finally, consider our proposed Gro-M$^3$,
\begin{align}\notag
	&~ p^{\text{Gro-M$^3$}}\left(y_{i,1},\ldots,y_{i,p}\mid \LL, \bo\Lambda, \aaa\right)
	= 
    \int_{\Delta^{K-1}} \prod_{g=1}^G \left[\sum_{k=1}^K \pi_{i,k} \prod_{j:\, \ell_{j,g} = 1} f(y_{i,j}\mid \lambda_{j,c_j,k}) \right] dD_{\aaa}(\bo \pi_i)
	\\ \label{eq-ctucker}
	= &~
	\sum_{k_1=1}^K \cdots \sum_{k_G=1}^K 
	\prod_{g=1}^G \prod_{j:\, \ell_{j,g}=1}
	f(y_{i,j}\mid \lambda_{j,c_j,k_g})
   \underbrace{\int_{\Delta^{K-1}}\pi_{i, k_1} \cdots \pi_{i, k_G} dD_{\aaa}(\bo \pi_i)}_{=:\; \phi^{\text{Gro-M$^3$}}_{k_1,\ldots, k_G}},
\end{align}
{where $f(y_{i,j} |\lambda_{j,c_j,k})$ generally denotes the conditional distribution of variable $y_{i,j}$ given parameter $\lambda_{j,c_j,k}$. In our Gro-M$^3$, $\lambda_{j,c_j,k}$ specifically refer to the  categorical distribution parameters for the random variable $y_{i,j}$; that is, $\lambda_{j,c_j,k} = \mathbb P(y_{i,j}=c_j\mid z_{i,j}=k)$ denotes the probability of responding $c_j$ to item $j$ given that the subject's realization of the latent profile for item $j$ is the $k$th extreme latent profile.}
In this case,  $\mathbf{\Phi}^{\text{Gro-M$^3$}}:=\left(\phi^{\text{Gro-M$^3$}}_{k_1,\ldots, k_G};\; k_g\in[K]\right)$ forms a tensor with $G$ modes, and each mode has dimension $K$. There still is $\sum_{k_1=1}^K \cdots \sum_{k_G=1}^K \allowbreak  \phi_{k_1,\ldots, k_G}^{\text{Gro-M$^3$}} \allowbreak = 1$.
This reduces the size of the core tensor in the classical Tucker decomposition because $G<p$. The 
Gro-M$^3$ incorporates aspects of both the CP and Tucker decompositions, providing a 
\textit{probabilistic hybrid decomposition} of probability tensors.  The CP is obtained when all variables are in the same group, while the Tucker is obtained when each variable is in its own group; see Figure \ref{fig-graph} for a clear illustration of this fact.

Gro-M$^3$ is conceptually related to the collapsed Tucker decomposition (c-Tucker) of 
\cite{johndrow2017}, though they did not model mixed memberships, used a very different model for the core tensor $\bo\Phi$, and did not consider identifiability.
Nonetheless and interestingly, our identifiability results can be applied to establish identifiability of c-Tucker decomposition (see Remark \ref{rmk-ctucker} in Section \ref{sec-bayes}).
Another work related to our dimension-grouping assumption is \cite{anandkumar2015overcomp}, which considered the case of overcomplete topic modeling with the number of topics exceeding the vocabulary size. 
For such scenarios, the authors proposed a ``persistent topic model'' which assumes the latent topic assignment persists locally through multiple words, and established identifiability. Our dimension-grouped mixed membership assumption is similar in spirit to this topic persistence assumption. However, the setting we consider here for general multivariate categorical data has the multi-characteristic single-replication nature ($p>1$ and $R=1$); as mentioned before, this is fundamentally different from topic models with a single characteristic and multiple replications ($p=1$ and $R>1$).


\section{Identifiability of Dimension-Grouped MMMs}\label{sec-id}

Identifiability is an important property of a statistical model, generally meaning that model parameters can be uniquely recovered from the observables. 
Identifiability serves as a fundamental prerequisite for valid statistical estimation and inference.
The study of identifiability, however, can be challenging for complicated models and especially latent variable models, including the Gro-M$^3$s considered here. 
In subsections \ref{sec-strid} and \ref{sec-genid}, we propose easily checkable and practically useful identifiability conditions for Gro-M$^3$s by carefully inspecting the inherent algebraic structures.
Specifically, we will exploit the variable groupings to write the model as certain highly structured mixed tensor products, and then prove identifiability by invoking Kruskal's theorem on the uniqueness of tensor decompositions \citep{kruskal1977three}. 
We point out that such proof procedures share a similar spirit to those in \cite{allman2009}, but the complicated  structure of the new Gro-M$^3$s requires some special care to handle. 
{We provide a high-level summary of our proof approach. First, we write the probability mass function of the observed $p$-dimensional multivariate categorical vector as a probabilistic tensor with $p$ modes. Second, we unfold this tensor into a $G$-way tensor with each mode corresponding to a variable group. Third, we further concatenate the transformed tensor and leverage Kruskal's Theorem on the uniqueness of three-way tensor decomposition to establish the identifiability of the model parameters under our proposed Gro-M$^3$.}
Our theoretical developments provide a solid foundation for performing estimation of the latent quantities and drawing valid conclusions from data.


\subsection{Strict Identifiability Conditions}\label{sec-strid}

For LDA and closely related topic models, there is a rich literature
 investigating identifiability under different assumptions \citep{anandkumar2012lda, arora2012, nguyen2015, wang2019ejs}.
Typically, when there is only one characteristic ($p=1$), $R \geq 2$ is necessary for identifiability; see Example 2 in \cite{wang2019ejs}.
However, there has been limited consideration of identifiability of  mixed membership models with multiple characteristics and one replication, i.e., $p>1$ and $R=1$. 
GoM models belong to this category, as does the proposed Gro-M$^3$s, with GoM being a special case of Gro-M$^3$s.

We consider the general setup in \eqref{eq-mmm}, where $\bo\Phi$ denotes the $G$-mode tensor core induced by any distribution $D(\bo\pi_i)$ on the probability simplex $\Delta^{K-1}$.
The following definition formally defines the concept of strict identifiability for the proposed model.

\begin{definition}[Strict Identifiability of Gro-M$^3$s]\label{def-strid}
    A parameter space $\bo\Theta$ of a Gro-M$\,^3$ is said to be strictly identifiable, if for any valid set of parameters $(\LL, \bo\Lambda,\bo\Phi) \in \bo\Theta$, the following equations hold if and only if $(\LL, \bo\Lambda,\bo\Phi)$ and the alternative $(\overline\LL, \overline{\bo\Lambda}, \overline{\bo\Phi})$ are identical up to permutations of the $K$ extreme latent profiles and permutations of the $G$ variable groups,
	\begin{align}\label{eq-id}
		\mathbb P(\bo y = \bo c\mid \LL, \bo\Lambda,\bo\Phi)
		=
		\mathbb P(\bo y = \bo c\mid \overline\LL, \overline{\bo\Lambda}, \overline{\bo\Phi}),
		\quad
		\forall \bo c\in \times_{j=1}^p [d_j].
	\end{align}
\end{definition}

Definition \ref{def-strid} gives the strongest possible notion of identifiability of the considered population quantities $(\LL, \bo\Lambda,\bo\Phi)$ in the model. 
In particular, the strict identifiability notion in Definition \ref{def-strid} requires identification of \textit{both} the continuous parameters $\bo\Lambda$ and $\bo\Phi$, \textit{and} the discrete latent grouping structure of variables in $\LL$.
The following theorem proposes sufficient conditions for the strict identifiability of Gro-M$^3$s.

\begin{theorem}\label{thm-attr1}
  
Under a Gro-M$\,^3$, the following two identifiability conclusions hold.
\begin{itemize}
    \item[(a)] Suppose each column of $\LL$ contains at least three entries of ``1''s, and the corresponding conditional probability table $\bo\Lambda_{j} = (\lambda_{j,c_j,k})_{d_j\times K}$ for each of these three $j$ has  full column rank. Then the $\bo\Lambda$ and $\bo\Phi$ are strictly identifiable.

   \item[(b)] In addition to the conditions in (a), if $\bo\Lambda$ satisfies that for each $j\in[p]$, not all the column vectors of $\bo\Lambda_j$ are identical, then $\LL$ is also identifiable.
\end{itemize}

\end{theorem}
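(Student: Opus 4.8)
I would follow the tensor-decomposition route sketched in the introduction: write the joint pmf of $\bo y$ as a $p$-mode probability tensor $\PP$, merge the variables inside each of the $G$ groups into single ``super-variables'', re-bundle the resulting $G$ group-modes into three super-modes, and invoke Kruskal's uniqueness theorem for three-way tensor decompositions. Part~(a) is established first with the grouping treated as known (formally, restricting \eqref{eq-id} to $\overline\LL=\LL$), so that only $\bo\Lambda$ and $\bo\Phi$ need be recovered; part~(b) then removes this restriction.

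For part~(a): inside each group $g$ I would split $S_g=\{j:\ell_{j,g}=1\}$ into three nonempty blocks $S_g^{(1)},S_g^{(2)},S_g^{(3)}$, arranging that each block contains one of the three full-column-rank emission matrices guaranteed by the hypothesis, and form $\bo\Lambda^{(g,a)}=\bigodot_{j\in S_g^{(a)}}\bo\Lambda_j$ (column-wise Khatri--Rao). Since one factor has full column rank $K$ and the others, being conditional probability tables, have no zero columns, $\bo\Lambda^{(g,a)}$ has full column rank $K$, hence the Kronecker products $\bo M^{(a)}:=\bigotimes_{g=1}^{G}\bo\Lambda^{(g,a)}$ have full column rank $K^G$. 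Re-bundling the modes of $\PP$ along the three blocks and using \eqref{eq-ctucker} yields a genuine three-way CP decomposition
\[
\PP \;=\; \sum_{\bo k\in[K]^G}\phi_{\bo k}\;\bo M^{(1)}_{\bo k}\otimes\bo M^{(2)}_{\bo k}\otimes\bo M^{(3)}_{\bo k},
\]
where $\bo M^{(a)}_{\bo k}$ is the column of $\bo M^{(a)}$ indexed by $\bo k\in[K]^G$ and $\phi_{\bo k}=\int_{\Delta^{K-1}}\pi_{k_1}\cdots\pi_{k_G}\,dD_{\aaa}(\bo\pi)$ (all positive under the standing assumptions). The Kruskal ranks sum to $3K^G\ge 2K^G+2$, so the decomposition is unique up to a simultaneous permutation and rescaling of its components. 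Translating back: any competing $(\bo\Lambda,\bo\Phi)$ generating the same $\PP$ produces $(\overline{\bo M}^{(a)},\overline\phi)$ that Kruskal matches to $(\bo M^{(a)},\phi)$; each matched weighted rank-one tensor further factors, over $g$ via Kronecker and over $j$ via Khatri--Rao, as $\phi_{\bo k}\bigotimes_{j=1}^{p}\bo\lambda_{j,\bcolon,k_{g(j)}}$ with $g(j)$ the group of $j$, so uniqueness of rank-one decompositions up to scaling together with the simplex constraint $\sum_{c}\lambda_{j,c,k}=1$ forces the scalings to be trivial. A standard relabeling argument --- using that every group contains a full-rank $\bo\Lambda_j$ whose $K$ columns are distinct --- then upgrades the index matching on $[K]^G$ to a single permutation of the $K$ profiles composed with a permutation of the $G$ groups, which is exactly the claimed identifiability of $\bo\Lambda$ and $\bo\Phi$.

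For part~(b) I would argue by contradiction: suppose $(\LL,\bo\Lambda,\bo\Phi)$ and $(\overline\LL,\overline{\bo\Lambda},\overline{\bo\Phi})$ induce the same distribution but $\LL\ne\overline\LL$ even after relabeling groups. Then some pair $j_1,j_2$ shares a group under one grouping but not the other; say $g(j_1)=g(j_2)=g$ under $\LL$ while $\overline g(j_1)\ne\overline g(j_2)$. The structural hypothesis supplies a third variable $j_3\in S_g\setminus\{j_1,j_2\}$ with $\bo\Lambda_{j_3}$ of full column rank (at least three members of $S_g$ have full-rank emission matrices). I would then compare the marginal probability tensor of $\{j_1,j_2,j_3\}$ --- augmented if needed by one full-rank variable from $\overline g(j_1)$ and one from $\overline g(j_2)$ --- as computed from the two models: from the $\LL$-side these observables are conditionally independent through the single $K$-state allocation $z_{i,g}$, a restricted low-rank structure, whereas from the $\overline\LL$-side the allocations behind $j_1$ and $j_2$ are distinct and coupled only through $\bo\pi_i$, which --- since neither $\bo\Lambda_{j_1}$ nor $\bo\Lambda_{j_2}$ has all columns identical, so the allocations genuinely imprint on the data --- produces a dependence that the $\LL$-side representation cannot match, a contradiction. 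Hence $\LL=\overline\LL$ up to group relabeling, and combined with part~(a) this gives strict identifiability in the sense of Definition~\ref{def-strid}. (The non-degeneracy condition is also necessary: if some $\bo\Lambda_j$ had identical columns, $y_{i,j}$ would be independent of its latent allocation and could be reassigned to any group without altering the distribution.)

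\textbf{Main obstacle.} I expect part~(b) --- converting the continuous-parameter identification into identification of the discrete partition --- to be the hardest step, in particular accommodating variables with rank-deficient $\bo\Lambda_j$, which must be routed through full-rank companions in their group. Within part~(a) the two points needing care are verifying that the Khatri--Rao/Kronecker merging preserves full column rank (so Kruskal's condition is met) and descending from Kruskal's ``unique up to permutation of the $K^G$ terms'' to ``unique up to relabeling the $K$ profiles and the $G$ groups'', where the probability-simplex normalization of each $\bo\Lambda_j$'s columns is what removes the residual scaling ambiguity.
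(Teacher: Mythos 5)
Your part (a) is essentially the paper's argument: split the variables into three blocks so that each block contains, for every group, one of the three guaranteed full-column-rank emission matrices; observe that a Khatri--Rao product with one full-column-rank factor and remaining factors having no zero columns retains full column rank $K$, so the three Kronecker-over-groups matrices have rank $K^G$; reshape into a three-way array and apply Kruskal; then kill the diagonal scalings via the column-sum-one constraint and marginalize back down to the individual $\bo\Lambda_j$ and to $\bo\Phi$. This matches the paper's proof (which takes $S^{(1)},S^{(2)}$ to be exactly the first two full-rank variables per group and dumps everything else into the third factor), and the two points you flag as delicate --- rank preservation under the merging, and descending from a permutation of $[K]^G$ to a permutation of $[K]$ composed with a permutation of $[G]$ --- are indeed the right ones.

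Part (b), however, has a genuine gap. You argue by contradiction at the level of observable marginals: if $j_1,j_2$ share a group under $\LL$ but not under $\overline\LL$, you claim the small marginal tensor of $\{j_1,j_2,j_3\}$ (possibly augmented) exhibits ``a dependence that the $\LL$-side representation cannot match.'' This incompatibility is asserted, not proved, and it is exactly the hard content of the statement. It is not even true at the level of low-order marginals without further input: the $\overline\LL$-side pairwise marginal is $\sum_{k,k'}\overline\phi^{(2)}_{k,k'}\,\overline{\bo\lambda}_{j_1,\bcolon,k}\otimes\overline{\bo\lambda}_{j_2,\bcolon,k'}$ with $\overline\phi^{(2)}$ a free two-way moment array, and for suitable $\overline{\bo\Phi}$ (e.g., nearly diagonal) this lands inside the set representable by a single shared allocation, so no contradiction emerges from such marginals alone. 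The paper avoids this entirely by exploiting what part (a) actually delivers: the Kruskal step identifies, up to the global permutation, the full over-parameterized array $\bo\Gamma$ with entries $\gamma_{j,c_j,\zz}=\lambda_{j,c_j,z_{s_j}}$ indexed by whole configurations $\zz\in[K]^G$. Once $\bo\Gamma=\overline{\bo\Gamma}$, the group label $s_j$ is read off as the unique coordinate of $\zz$ on which $\gamma_{j,\bcolon,\zz}$ depends: if $s_j=g\neq g'=\overline{s}_j$, then choosing $\zz$ with $z_g=k$ and $z_{g'}=k'$ gives $\lambda_{j,c_j,k}=\gamma_{j,c_j,\zz}=\overline\gamma_{j,c_j,\zz}=\overline\lambda_{j,c_j,k'}$ for all $k,k',c_j$, forcing all columns of $\bo\Lambda_j$ to coincide and contradicting the non-degeneracy hypothesis. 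Your own part (a) descent already produces the per-configuration rank-one factors $\bigotimes_j\bo\lambda_{j,\bcolon,k_{g(j)}}$, so the fix is to compare those directly across the two models rather than retreating to marginal tensors; as written, though, your part (b) does not constitute a proof.
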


\begin{example}\label{exp-thm1}
Denote by $\II_G$ a $G\times G$ identity matrix.
Suppose $p=3G$ and the matrix $\LL$ takes the following form,
\begin{align}\label{eq-3l}
    \LL = 
( \II_G\ \II_G\ \II_G )^\top.    
\end{align}
Also suppose for each $j\in\{1,\ldots,3G\}$, the $\bo\Lambda_j$ of size $d_j\times K$ has full column rank $K$. 
Then the conditions in Theorem \ref{thm-attr1} hold, so $\bo\Lambda$, $\LL$ and $\bo\Phi$ are identifiable.
Theorem \ref{thm-attr1} implies that if $\LL$ contains any additional row vectors other than those in \eqref{eq-3l} the model is still identifiable.
\end{example}

Theorem \ref{thm-attr1} requires that each of the $G$ variable groups contains at least three variables, and that for each of these $3G$ variables, the corresponding conditional probability table $\bo\Lambda_{j}$ has linearly independent columns.
%
Theorem  \ref{thm-attr1} guarantees not only the continuous parameters are identifiable, but also the discrete variable grouping structure summarized by $\LL$ is identifiable. This is important practically as typically the appropriate variable grouping structure is unknown, and hence needs to be inferred from the data.

The conditions in Theorem \ref{thm-attr1} essentially requires at least $3G$ conditional probability tables, each being a matrix of size $d_j \times K$, to have full column rank. This implicitly requires $d_j\geq K$.
\cite{tan2017partitioned} proposed a moment-based estimation approach for  traditional mixed membership models and briefly discussed the identifiability issue, also assuming $d_j\geq K$  with some full-rank requirements.
However, the cases where the number of categories $d_j$'s are small but the number of extreme latent profiles $K$ is much larger can arise in applications; for example, the disability survey data analyzed in \cite{erosheva2007aoas} and \cite{manrique2014aoas} have binary responses with $d_1=\cdots=d_p=2$ while the considered $K$ ranges from 2 to 10.
Our next theoretical result establishes weaker conditions for identifiability that accommodates $d_j< K$, by taking advantage of the dimension-grouping property of our proposed model class.

Before stating the theorem, we first introduce two useful notions of matrix products.
Denote by $\bigotimes$ the Kronecker product of matrices and by $\bigodot$ the Khatri-Rao product. 
Consider two matrices $\mathbf A=(a_{i,j})\in\mathbb R^{m\times r}$, $\mathbf B=(b_{i,j})\in\mathbb R^{s\times t}$; 
and another two matrices $\mathbf C=(c_{i,j})=(\bo c_{\bcolon,1}\mid\cdots\mid\bo c_{\bcolon,k})\in\mathbb R^{n\times k}$,
$\mathbf D=(d_{i,j})=(\bo d_{\bcolon,1}\mid\cdots\mid\bo d_{\bcolon,k})\in\mathbb R^{\ell\times k}$, then there are $\mathbf A\bigotimes \mathbf B \in\mathbb R^{ms\times rt}$ and $\mathbf C\bigodot \mathbf D \in\mathbb R^{n \ell\times k}$ with

\begin{align*}
	\mathbf A\bigotimes \mathbf B
	=
	\begin{pmatrix}
		a_{1,1}\mathbf B & \cdots & a_{1,r}\mathbf B\\
		\vdots & \vdots & \vdots \\
		a_{m,1}\mathbf B & \cdots & a_{m,r}\mathbf B
	\end{pmatrix},
	\qquad
	\mathbf C\bigodot \mathbf D
	=
	\begin{pmatrix}
		\bo c_{\bcolon,1}\bigotimes\bo d_{\bcolon,1}
		\mid \cdots \mid
		\bo c_{\bcolon,k}\bigotimes\bo d_{\bcolon,k}
	\end{pmatrix}.
\end{align*}
The above definitions show the Khatri-Rao product is the column-wise Kronecker product. The Khatri-Rao product of matrices plays an important role in the technical definition of the proposed dimension-grouped MMM.
The following Theorem \ref{thm-attr-finer} exploits the grouping structure in $\LL$ to relax the identifiability conditions in the previous Theorem \ref{thm-attr1}.

\begin{theorem}\label{thm-attr-finer}
    Denote by $\mca_g = \{j\in[p]:\, \ell_{j,g}=1\}$ the set of variables that belong to group $g$.
    Suppose each $\mca_g$ can be partitioned into three sets $\mca_g = \cup_{m=1}^3 \mca_{g,m}$, and for each $g\in[G]$ and $m\in\{1,2,3\}$ the matrix $\tilde{\bo\Lambda}_{g,m}$ defined below has full column rank $K$.
    \begin{align}\label{eq-lama}
    \tilde{\bo\Lambda}_{g,m} :=
        \bigodot_{j\in \mca_{g,m}} \bo\Lambda_j.
    \end{align}
    Also suppose for each $j\in[p]$, not all the column vectors of $\bo\Lambda_j$ are identical.
    Then the model parameters $\LL$, $\bo\Lambda$, and $\bo\Phi$ are strictly identifiable.
\end{theorem}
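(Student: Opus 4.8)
The plan is to mirror the strategy behind Theorem~\ref{thm-attr1}: recast the Gro-M$^3$ probability tensor $\PP$ as a three-way tensor whose three factor matrices have full column rank, apply Kruskal's theorem on uniqueness of tensor decompositions, and then peel the recovered factors apart to read off $\bo\Lambda$ and $\LL$. The only genuinely new ingredient compared with Theorem~\ref{thm-attr1} is that the three ``super-modes'' are now built out of the Khatri--Rao blocks $\tilde{\bo\Lambda}_{g,m}$ rather than individual conditional probability tables $\bo\Lambda_j$. Concretely, as in the derivation of \eqref{eq-ctucker}, group the $p$ tensor modes of $\PP$ according to $\LL$ so that $\PP$ becomes a $G$-way tensor with mode-$g$ arm $\bigodot_{j\in\mca_g}\bo\Lambda_j=\tilde{\bo\Lambda}_{g,1}\bigodot\tilde{\bo\Lambda}_{g,2}\bigodot\tilde{\bo\Lambda}_{g,3}$ and core $\bo\Phi$; then, using the prescribed partition $\mca_g=\cup_{m=1}^3\mca_{g,m}$, define ``part $m$'' of the $p$ variables to be $\cup_{g=1}^G\mca_{g,m}$ and reshape $\PP$ into a three-way tensor $\PP^{\flat}$ whose $m$th mode indexes the cells of the variables in part $m$. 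Conditioning on the latent vector $(z_{i,1},\dots,z_{i,G})=\mathcal K\in[K]^G$ and using the conditional independence in \eqref{eq-model}, the $m$th emission matrix, with columns indexed by $\mathcal K$, is the Kronecker product $\mathbf M_m=\bigotimes_{g=1}^{G}\tilde{\bo\Lambda}_{g,m}$, and $\PP^{\flat}_{C_1,C_2,C_3}=\sum_{\mathcal K\in[K]^G}\phi_{\mathcal K}\,(\mathbf M_1)_{C_1,\mathcal K}(\mathbf M_2)_{C_2,\mathcal K}(\mathbf M_3)_{C_3,\mathcal K}$.

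Next I would verify the rank conditions. Since each $\tilde{\bo\Lambda}_{g,m}$ has full column rank $K$ by hypothesis and a Kronecker product of full-column-rank matrices again has full column rank, each $\mathbf M_m$ has full column rank $K^{G}$, and this is preserved after deleting the columns $\mathcal K$ with $\phi_{\mathcal K}=0$. (The Khatri--Rao rank inequality $k_{\mathbf A\bigodot\mathbf B}\ge\min(k_{\mathbf A}+k_{\mathbf B}-1,\,K)$ used implicitly here also shows each reshaped arm $\tilde{\bo\Lambda}_{g,1}\bigodot\tilde{\bo\Lambda}_{g,2}\bigodot\tilde{\bo\Lambda}_{g,3}$ has full column rank $K$, which is exactly what lets the construction run as in the proof of Theorem~\ref{thm-attr1}.) Absorbing the weights $\{\phi_{\mathcal K}\}$ into the columns of $\mathbf M_1$, $\PP^{\flat}$ has a CP decomposition with $R:=\#\{\mathcal K:\phi_{\mathcal K}>0\}$ terms whose three factor matrices have Kruskal ranks summing to $3R\ge 2R+2$ whenever $R\ge 2$ (a mild non-degeneracy, violated only when $D_{\aaa}$ is a point mass at a vertex of $\Delta^{K-1}$). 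Kruskal's theorem then shows that the collection of rank-one terms $\{(\mathbf M_1)_{\bcolon,\mathcal K},(\mathbf M_2)_{\bcolon,\mathcal K},(\mathbf M_3)_{\bcolon,\mathcal K}\}$ together with the weights $\phi_{\mathcal K}$ is determined by $\PP$; using that the columns of $\mathbf M_2$ and $\mathbf M_3$ sum to one and that the column sums of the weighted first factor equal the $\phi_{\mathcal K}$, the per-column scalings are pinned down and $\bo\Phi$ is recovered.

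Finally I would peel the structure apart. Un-reshaping each recovered rank-one term back to a $p$-mode tensor gives $\bigotimes_{j=1}^{p}\bo\lambda_{j,\bcolon,k_{g(j)}}$, whose essentially unique rank-one factorization, together with the normalization $\sum_{c_j}\lambda_{j,c_j,k}=1$, returns for every variable $j$ a probability vector indexed by the same (a priori unknown) global labeling of the $\mathcal K$'s. If an alternative valid parameter set $(\overline\LL,\overline{\bo\Lambda},\overline{\bo\Phi})$ produces the same $\PP$, it produces the same rank-one terms and weights; comparing the two, and using (i) the full-column-rank conditions, which force that within each block $\mca_{g,m}$ any two columns $k\ne k'$ are separated by \emph{some} $\bo\Lambda_j$, and (ii) the hypothesis that no $\bo\Lambda_j$ has all columns equal, one deduces that $G=\overline G$, that the Kruskal permutation of $[K]^{G}$ must be a coordinatewise permutation of $[K]$ composed with a reordering of the $G$ coordinates, that the induced partition of $[p]$ and hence $\LL$ is uniquely determined, and that each $\bo\Lambda_j$ is determined up to one common permutation of its $K$ columns --- which is exactly identifiability up to the two relabelings allowed in Definition~\ref{def-strid}.

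I expect Steps above to split sharply in difficulty: the reshaping, the Kronecker/Khatri--Rao rank computations, and the invocation of Kruskal's theorem are essentially bookkeeping once the machinery behind Theorem~\ref{thm-attr1} is in hand. The substantive obstacle is the last step --- converting the \emph{unstructured} permutation ambiguity of the $K^{G}$ latent meta-states left by Kruskal's theorem into the \emph{structured} ambiguity (permute the $G$ groups, permute the $K$ extreme profiles) and, simultaneously, reconstructing the grouping matrix $\LL$ from the recovered per-variable column systems $\{\bo\lambda_{j,\bcolon,k}\}_{k\in[K]}$. The two hypotheses of the theorem --- full column rank of each Khatri--Rao block $\tilde{\bo\Lambda}_{g,m}$ and non-identical columns of each $\bo\Lambda_j$ --- are precisely what make this de-Kroneckerization and grouping recovery go through, and the argument, while parallel in spirit to the one for Theorem~\ref{thm-attr1} and to those in \cite{allman2009}, requires more careful handling of the nested Kronecker/Khatri--Rao structure.
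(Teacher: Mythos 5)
Your proposal matches the paper's proof essentially step for step: the paper likewise reshapes $\vect(\bo\Pi)$ into a three-way Khatri--Rao decomposition whose $m$th factor is $f^{(m)}(\bo\Gamma)=\bigotimes_{g=1}^G\tilde{\bo\Lambda}_{g,m}$ (full column rank $K^G$ by the Kronecker rank property), invokes Kruskal's theorem exactly as in the proof of Theorem~\ref{thm-attr1}, and then recovers the individual $\bo\Gamma_j$'s, hence $\bo\Lambda$ and $\LL$, via marginalization and the non-identical-columns hypothesis. The only differences are cosmetic: you delete zero-weight columns rather than taking the entries of $\bo\Phi$ to be positive as the paper does, and you are somewhat more explicit about reducing the unstructured permutation of $[K]^G$ left by Kruskal's theorem to the structured profile/group relabeling, a step the paper handles tersely.
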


Compared to Theorem \ref{thm-attr1},
Theorem \ref{thm-attr-finer} relaxes the identifiability conditions by lifting the full-rank requirement on the individual matrices $\bo\Lambda_j$'s. Rather, as long as the Khatri-Rao product of several different $\bo\Lambda_j$'s have full column rank as specified in \eqref{eq-lama}, identifiability can be guaranteed.
Recall that the Khatri-Rao product of two matrices $\bo\Lambda_{j_1}$ of size $d_{j_1}\times K$ and $\bo\Lambda_{j_2}$ of size $d_{j_2}\times K$ has size  $(d_{j_1} d_{j_2}) \times K$. So intuitively, requiring $\bo\Lambda_{j_1} \bigodot \bo\Lambda_{j_2}$ to have full column rank $K$ is weaker than requiring each of $\bo\Lambda_{j_1}$ and $\bo\Lambda_{j_2}$ to have full column rank $K$.
The following Example \ref{exp-ab} formalizes this intuition.

\begin{example}\label{exp-ab}
Consider $d_1=d_2=2$, $K=3$ with the following conditional probability tables 
\begin{align*}
    \bo\Lambda_1=\begin{pmatrix}
        a_1 & a_2 & a_3\\
        1-a_1 & 1-a_2 & 1-a_3
    \end{pmatrix};
    \quad
    \bo\Lambda_2=\begin{pmatrix}
        b_1 & b_2 & b_3\\
        1-b_1 & 1-b_2 & 1-b_3
    \end{pmatrix}.
\end{align*}
Suppose variables $j=1,2$ belong to the same group, e.g., $\ell_{1,:} = \ell_{2,:}$.
Then since $K>d_1=d_2$, both  $\bo\Lambda_1$ and $\bo\Lambda_2$ can not have full column rank $K$.
However, if we consider their Khatri-Rao product, it has size $4\times 3$ in the following form

\begin{align*}
    \bo\Lambda_1 \bigodot \bo\Lambda_2
    =
    \begin{pmatrix}
        a_1 b_1 & a_2 b_2 & a_3 b_3 \\
        a_1(1-b_1) & a_2(1-b_2) & a_3 (1-b_3) \\
        (1-a_1) b_1 & (1-a_2) b_2 & (1-a_3) b_3 \\
        (1-a_1) (1-b_1) & (1-a_2) (1-b_2) & (1-a_3) (1-b_3) \\
    \end{pmatrix}.
\end{align*}
Indeed, $\bo\Lambda_1 \bigodot \bo\Lambda_2$ has full column rank for ``generic'' parameters $\bo\theta:=(a_1,a_2,a_3,b_1,b_2,b_3)$; precisely speaking, for $\bo\theta$ varying almost everywhere in the parameter space $[0,1]^6$ (the 6-dimensional hypercube), the subset of $\bo\theta$ that renders $\bo\Lambda_1 \bigodot \bo\Lambda_2$ rank-deficient has Lebesgue measure zero in $\mathbb R^6$. To see this, let $\bo x=(x_1, x_2, x_3)^\top\in\mathbb R^3$ such that $(\bo\Lambda_1 \bigodot \bo\Lambda_2) \bo x = 0$, then 
\begin{align*}
&\begin{cases}
        a_1 b_1 x_1 + a_2 b_2 x_2 + a_3 b_3 x_3 = 0;\\
        a_1(1-b_1) x_1 + a_2(1-b_2) x_2 + a_3(1-b_3) x_3 = 0;\\
        (1-a_1)b_1 x_1 + (1-a_2)b_2 x_2 + (1-a_3)b_3 x_3 = 0;\\
        (1-a_1) (1-b_1) x_1 + (1-a_2) (1-b_2) x_2 + (1-a_3) (1-b_3) x_3 = 0;\\
\end{cases}
\\[1mm]
\stackrel{\text{\normalfont{invertible transform}}}{\iff}\quad
&\begin{cases}
        a_1 b_1 x_1 + a_2 b_2 x_2 + a_3 b_3 x_3 = 0;\\
        a_1 x_1 + a_2 x_2 + a_3 x_3 = 0;\\
        b_1 x_1 + b_2 x_2 + b_3 x_3 = 0;\\
        x_1 + x_2 + x_3 = 0.\\
\end{cases}
\end{align*}
Based on the last four equations above, one can use basic algebra to obtain the following set of equations about $(x_1, x_2, x_3)$,
\begin{align*}
    \begin{pmatrix}
        b_1 - b_3 & b_3 - b_2 \\
        a_1 - a_3 & a_3 - a_2 \\
    \end{pmatrix} 
    \begin{pmatrix}
        x_1 \\
        x_2
    \end{pmatrix}
    =
    \begin{pmatrix}
        b_2 - b_1 & b_1 - b_3 \\
        a_2 - a_1 & a_1 - a_3 \\
    \end{pmatrix} 
    \begin{pmatrix}
        x_2 \\
        x_3
    \end{pmatrix}
    =
\begin{pmatrix}
    0\\
    0
\end{pmatrix}.
\end{align*}
This implies as long as the following inequalities hold, there must be $x_1=x_2=x_3=0$,
\begin{align}\label{eq-ab0}
\begin{cases}
    (b_1-b_3)(a_3-a_2) - (a_1-a_3)(b_3-b_2)\neq 0;
    \\
    (b_2-b_1)(a_1-a_3) - (a_2-a_1)(b_1-b_3)\neq 0
    \end{cases}
\end{align}
Now note that the subset of the parameter space $\{(a_1,a_2,a_3,b_1,b_2,b_3)\in[0,1]^6:\, \text{Eq.~}\eqref{eq-ab0}\text{ holds}\}$ is a Lebesgue measure zero subset of $[0,1]^6$. This means for such ``generic'' parameters varying almost everywhere in the parameter space $[0,1]^6$, the $(\bo\Lambda_1 \bigodot \bo\Lambda_2)\bo x=\zero$ implies $\bo x = \zero$ which means $\bo\Lambda_1 \bigodot \bo\Lambda_2$ has full column rank $K=3$. 
\end{example}

Example \ref{exp-ab} shows that the Khatri-Rao product of two matrices seems to have full rank under fairly mild conditions.
This indicates that the conditions in Theorem \ref{thm-attr-finer} are much weaker than those in Theorem \ref{thm-attr1} by imposing the full-rankness requirement only on a certain Khatri-Rao product of the $\bo\Lambda_j$-matrices, instead of on individual $\bo\Lambda_j$s.
To be more concrete, the next Example \ref{exp-thm2} illustrates Theorem \ref{thm-attr-finer}, as a counterpart of Example \ref{exp-thm1}.

\begin{example}\label{exp-thm2}
Consider the following grouping matrix $\LL$ with $G=3$ and $p=6G=18$,

\vspace{-3mm}\singlespacing
\begin{equation}
    \LL = \begin{pmatrix}
        \LL_1 \\
        \LL_1 \\
        \LL_1
    \end{pmatrix},\quad \text{where}~~
    \LL_1 = \begin{pmatrix}
        1 & 0 & 0 \\
        1 & 0 & 0 \\
        0 & 1 & 0 \\
        0 & 1 & 0 \\
        0 & 0 & 1 \\
        0 & 0 & 1 \\
    \end{pmatrix}.
\end{equation}
Then $\LL$ contains six copies of the identity matrix $\II_G$ after a row permutation. Thanks to greater variable grouping compared to the previous Example \ref{exp-thm1}, we can use Theorem \ref{thm-attr-finer} (instead of Theorem \ref{thm-attr1}) to establish identifiability. Specifically, consider binary responses with $d_1=\cdots=d_{18}=:d=2$ and $K=3$ extreme latent profiles. For $g=1$, define sets $\mca_{g,1} = \{1,2\}$, $\mca_{g,2} = \{7,8\}$, $\mca_{g,3} = \{13,14\}$; 
for $g=2$, define sets $\mca_{g,1} = \{3,4\}$, $\mca_{g,2} = \{5,6\}$, $\mca_{g,3} = \{7,8\}$; 
and for $g=3$, define sets $\mca_{g,1} = \{5,6\}$, $\mca_{g,2} = \{11,12\}$, $\mca_{g,3} = \{17,18\}$. 
Then for each $(g,m)\in\{1,\ldots,G\}\times\{1,2,3\}$, the $\tilde{\bo\Lambda}_{g,m} = \bigodot_{j\in\mca_{g,m}}\bo\Lambda_j$ defined in Theorem \ref{thm-attr-finer} has size $d^2\times K$ which is $4\times 3$, similar to the structure in Example \ref{exp-ab}.
Now from the derivation and discussion in Example \ref{exp-ab}, we know such a $\tilde{\bo\Lambda}_{g,m}$ has full rank for almost all the valid parameters in the parameter space. So the conditions in Theorem \ref{thm-attr-finer} are easily satisfied, and for almost all the valid parameters of such a Gro-M$^3$, the identifiability conclusion follows.
\end{example}

\subsection{Generic Identifiability Conditions}\label{sec-genid}

Example \ref{exp-ab} shows that the Khatri-Rao product of conditional probability tables easily has full column rank in a toy case, and Example \ref{exp-thm2} leverages this observation to establish identifiability for almost all parameters in the parameter space using Theorem \ref{thm-attr-finer}.
We next generalize this observation to derive more practical identifiability conditions, under the \textit{generic identifiability} notion introduced by \cite{allman2009}.
Generic identifiability generally means that the unidentifiable parameters belong to a set of Lebesgue measure zero with respect to the parameter space. 
Its definition adapted to the current Gro-M$^3$s is given as follows.

\begin{definition}[Generic Identifiability of Gro-M$^3$s]\label{def-genid}
    Under a Gro-M$\,^3$, a parameter space $\mathcal T$ for $(\bo\Lambda, \bo\Phi)$ is said to be generically identifiable, if there exists a subset $\mathcal N \subseteq \mathcal T$ that has Lebesgue measure zero with respect to $\mathcal T$ such that for any $(\bo\Lambda, \bo\Phi)\in \mathcal T\setminus \mathcal N$ and an associated $\LL$ matrix, the following holds if and only if $(\LL, \bo\Lambda, \bo\Phi)$ and the alternative $(\overline\LL, \overline{\bo\Lambda}, \overline{\bo\Phi})$ are identical up to permutations of the $K$ extreme latent profiles and that of the $G$ variable groups,
    \begin{align*}
		\mathbb P(\bo y = \bo c\mid \LL, \bo\Lambda,\bo\Phi)
		=
		\mathbb P(\bo y = \bo c\mid \overline\LL, \overline{\bo\Lambda}, \overline{\bo\Phi}),
		\quad
		\forall \bo c\in \times_{j=1}^p [d_j].
	\end{align*}
\end{definition}

Compared to the strict identifiability notion in Definition \ref{def-strid}, the generic identifiability notion in Definition \ref{def-genid} is less stringent in allowing the existence of a measure zero set of parameters where identifiability does not hold; see the previous Example \ref{exp-ab} for an instance of a measure-zero set. Such an identifiability notion usually suffices for real data analyses \citep{allman2009}.
In the following Theorem \ref{thm-attrgen}, we propose simple conditions to ensure generic identifiability of  Gro-M$^3$s.

\begin{theorem}\label{thm-attrgen}
For the notation $\mca_g=\{j\in[p]: \ell_{j,g}=1\}$ defined in Theorem \ref{thm-attr-finer}, suppose each $\mca_g$ can be partitioned into three non-overlapping sets $\mca_g = \cup_{m=1}^3 \mca_{g,m}$, such that for each $g$ and $m$ the following holds,
    \begin{align}\label{eq-djk}
        \prod_{j\in\mca_{g,m}} d_j \geq K.
    \end{align}
    Then the matrix $\bigodot_{j\in\mca_{g,m}} \bo\Lambda_j$ has full column rank $K$ for generic parameters.
    Further, the $\bo\Lambda$, $\LL$, and $\bo\Phi$ are generically identifiable.
\end{theorem}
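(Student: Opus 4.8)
The plan is to prove the two assertions of Theorem~\ref{thm-attrgen} in turn: first the \emph{generic full-column-rank} statement for each Khatri--Rao product $\bigodot_{j\in\mca_{g,m}}\bo\Lambda_j$, and then to feed this into Theorem~\ref{thm-attr-finer} to obtain generic identifiability. Throughout, the underlying principle is the standard one: a polynomial that is not identically zero on a full-dimensional semialgebraic set can vanish only on a subset of Lebesgue measure zero of that set.

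For the first part, fix a group $g$ and an index $m\in\{1,2,3\}$, write $\mca_{g,m}=\{j_1,\ldots,j_t\}$, and set $D=\prod_{s=1}^t d_{j_s}$, so that $\tilde{\bo\Lambda}_{g,m}=\bigodot_{s=1}^t\bo\Lambda_{j_s}$ is a $D\times K$ matrix whose $k$th column is the flattened rank-one tensor $\bigotimes_{s=1}^t (\bo\Lambda_{j_s})_{\bcolon,k}$. Every $K\times K$ minor of $\tilde{\bo\Lambda}_{g,m}$ is a polynomial in the entries of $\bo\Lambda_{j_1},\ldots,\bo\Lambda_{j_t}$, and these entries range over the product of closed simplices $\prod_{s=1}^t (\Delta^{d_{j_s}-1})^K$. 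It therefore suffices to exhibit a \emph{single} choice of column-stochastic matrices $\bo\Lambda_{j_1},\ldots,\bo\Lambda_{j_t}$ for which some $K\times K$ minor of the Khatri--Rao product is nonzero. I would do this by a degenerate construction: let every column of every $\bo\Lambda_{j_s}$ be a standard basis vector $\ee_i\in\mathbb R^{d_{j_s}}$ (a valid point mass), chosen so that the $K$ index tuples $(i^{(k)}_1,\ldots,i^{(k)}_t)$, $k=1,\ldots,K$, are pairwise distinct; this is possible precisely because $D=\prod_s d_{j_s}\ge K$, which is hypothesis~\eqref{eq-djk}. Then the $K$ columns of $\tilde{\bo\Lambda}_{g,m}$ are $K$ distinct standard basis vectors of $\mathbb R^{D}$, hence linearly independent, so $\tilde{\bo\Lambda}_{g,m}$ has rank $K$ at this parameter point. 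By continuity of the relevant minor, it is then nonzero on a positive-measure subset of the parameter set, hence nonzero generically, which gives the first conclusion of the theorem. (When $\mca_{g,m}$ is a singleton $\{j\}$ the same argument applies with $\tilde{\bo\Lambda}_{g,m}=\bo\Lambda_j$ and $d_j\ge K$.)

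For the second part, I would assemble the measure-zero exceptional set. For each of the finitely many grouping matrices $\LL$ that satisfy the partition hypothesis, the set of $\bo\Lambda$ for which at least one $\tilde{\bo\Lambda}_{g,m}$ ($g\in[G]$, $m\in\{1,2,3\}$) fails to have full column rank is a finite union of measure-zero sets, hence measure zero; likewise, the set of $\bo\Lambda$ for which some $\bo\Lambda_j$ has all $K$ columns identical lies in a proper linear subspace and is measure zero. Let $\mathcal N$ be the union, over all such $\LL$, of these bad $\bo\Lambda$-sets times the full range of $\bo\Phi$; this is a Lebesgue-null subset of $\mathcal T$. For any $(\bo\Lambda,\bo\Phi)\in\mathcal T\setminus\mathcal N$ and any associated grouping matrix $\LL$ satisfying the hypotheses, all hypotheses of Theorem~\ref{thm-attr-finer} hold, so $(\LL,\bo\Lambda,\bo\Phi)$ is strictly identifiable; this is exactly generic identifiability in the sense of Definition~\ref{def-genid}.

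The main obstacle is contained in the first part: producing a clean explicit witness of full column rank for the Khatri--Rao product, and then arguing carefully that the exceptional set is measure zero even though the parameter space is a product of \emph{closed} simplices, so it is not open in the ambient Euclidean space. The point-mass witness lands on the boundary of these simplices, so I would phrase the measure-zero argument relative to the affine hull of the parameter set and invoke continuity of the minor to pass from a boundary witness to a positive-measure set in the relative interior; alternatively, one can perturb the point masses slightly inward. Everything else --- the bookkeeping over finitely many $\LL$'s, the ``not all columns identical'' condition, and the invocation of Theorem~\ref{thm-attr-finer} --- is routine.
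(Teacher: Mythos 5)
Your proposal is correct, and its architecture is identical to the paper's: reduce generic full column rank of each $\bigodot_{j\in\mca_{g,m}}\bo\Lambda_j$ to the non-vanishing of some $K\times K$ minor, exhibit a single parameter choice where a minor is nonzero, and then feed the resulting generic full-rank conclusion back into Theorem~\ref{thm-attr-finer}. The one substantive difference is the witness. The paper takes each $\bo\Lambda_j$ to be a Vandermonde matrix built from distinct primes, so that the Khatri--Rao product consists of the first $K$ columns of a larger Vandermonde matrix with distinct nodes; your witness instead sets every column of every $\bo\Lambda_{j_s}$ to a point mass so that the $K$ columns of the product become $K$ distinct standard basis vectors of $\mathbb R^{D}$. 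Yours is more elementary and makes the role of hypothesis~\eqref{eq-djk} completely transparent (it is exactly the count of available index tuples); the paper's witness has the minor advantage of lying in the relative interior of the simplices after column rescaling, so the boundary subtlety you flag never arises --- though, as you note, a nonzero polynomial on the affine hull already has a null zero set, so your boundary witness suffices without perturbation. You are also more explicit than the paper in checking the ``not all columns of $\bo\Lambda_j$ identical'' hypothesis of Theorem~\ref{thm-attr-finer} and in assembling the exceptional set over the finitely many admissible $\LL$; the paper's proof leaves these steps implicit.
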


Compared to Theorem \ref{thm-attr-finer},  Theorem \ref{thm-attrgen} lifts the explicit full-rank requirements on \textit{any} matrix. Rather, Theorem \ref{thm-attrgen} only requires that certain products of $d_j$'s should not be smaller than the number of extreme latent profiles, which in turn guarantees that the Khatri-Rao products of matrices have full column rank for generic parameters.
Intuitively, the more variables belonging to a group and the more categories each variable has, the easier the identifiability conditions are to satisfy. This illustrates the benefit of dimension-grouping to model identifiability.

\section{Dirichlet Gro-M$^3$ and Bayesian Inference}\label{sec-bayes}

\subsection{Dirichlet model and identifiability}
The previous section studies identifiability of general Gro-M$^3$s, not restricting the distribution $D_{\aaa}(\cdot)$ of the mixed membership scores to be a specific form. Next we focus on an interesting special case where $D_{\aaa}(\cdot)$ is a Dirichlet distribution with unknown parameters $\aaa$.
Among all the possible distributions for the individual mixed-membership proportions, the Dirichlet distribution is the most popular.
It is widely used in applications including social science survey data \citep{erosheva2007aoas,Wang2015AVE}, topic modeling \citep{blei2003lda, griffiths2004pnas}, and data privacy \citep{manrique2012jasa}.
We term the Gro-M$^3$ with $\bo\pi_i$ following a Dirichlet distribution the Dirichlet Gro-M$^3$, and propose a Bayesian inference procedure to estimate both the discrete variable groupings and the continuous parameters. Such a Dirichlet Gro-M$^3$ has the graphical model representation in Figure \ref{fig-graph}(d).

For an unknown vector $\aaa=(\alpha_1, \ldots, \alpha_K)$ with $\alpha_k>0$ for all $k\in[K]$,
suppose 
\begin{flalign}\label{eq-dir}
\text{Dirichlet Gro-M$^3$:}
&&
&\bo \pi_i = (\pi_{i,1},\ldots,\pi_{i,K}) \stackrel{\text{i.i.d.}}{\sim} \text{Dirichlet}(\alpha_1,\ldots,\alpha_K).&
\end{flalign}
The vector $\aaa$ characterizes the distribution of membership scores. As $\alpha_k \to 0$, the model simplifies to a latent class model in which each individual belongs to a single latent class.  For larger $\alpha_k$'s, there will tend to be multiple elements of $\bo \pi_i$ that are not close to $0$ or $1$.

Recall that the previous identifiability conclusions in Theorems \ref{thm-attr1}--\ref{thm-attrgen} generally apply to $\LL$, $\bo\Lambda$, and $\bo\Phi$, where $\bo\Phi$ is the \textit{core tensor} with $K^G$ entries in our hybrid tensor decomposition. 
Now with the core tensor $\bo\Phi$ parameterized by the Dirichlet distribution in particular, we can further investigate the identifiability of the Dirichlet parameters $\aaa$.
The following proposition establishes the identifiability of $\aaa$ for Dirichlet Gro-M$^3$s.

\begin{proposition}\label{prop-dir}
    Consider a Dirichlet Gro-M$\,^3$. If $G\geq 2$, then following conclusions hold.
    \begin{itemize}
        \item[(a)] If the conditions in Theorem \ref{thm-attr1} or Theorem \ref{thm-attr-finer} are satisfied, then the Dirichlet parameters $\aaa = (\alpha_1, \ldots, \alpha_K)$ are strictly identifiable.
        
        \item[(b)] If the conditions in Theorem \ref{thm-attrgen} are satisfied, then the Dirichlet parameters $\aaa = (\alpha_1, \ldots, \alpha_K)$ are generically identifiable.
    \end{itemize}
\end{proposition}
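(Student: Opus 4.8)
The strategy is to reduce the claim to an algebraic fact about the Dirichlet distribution: the core tensor $\bo\Phi$ is already identified by the theorems cited in the statement, so it remains only to show that the Dirichlet parameter $\aaa$ can be read back off from $\bo\Phi$. Recall that $\bo\Phi$ records the mixed moments $\phi_{k_1,\ldots,k_G}=\int_{\Delta^{K-1}}\pi_{k_1}\cdots\pi_{k_G}\,dD_{\aaa}(\bo\pi)$ of $\bo\pi\sim\text{Dirichlet}(\aaa)$, so the question is whether these order-$G$ moments pin down $\aaa$ when $G\ge 2$. Concretely: under the hypotheses of part (a), Theorem~\ref{thm-attr1} or Theorem~\ref{thm-attr-finer} gives that $\LL$, $\bo\Lambda$ and $\bo\Phi$ are strictly identifiable; under the hypotheses of part (b), Theorem~\ref{thm-attrgen} gives that they are generically identifiable, outside a measure-zero set $\mathcal N$. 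Hence if I can show that $\aaa$ is a deterministic injective function of $\bo\Phi$ on the whole valid domain $\{\aaa:\alpha_k>0\ \forall k\}$, then equality of the observed distributions forces $\bo\Phi=\overline{\bo\Phi}$ (up to permutations), hence $\aaa=\overline\aaa$ up to the same permutation of the $K$ extreme profiles, proving both parts; in part (b) the exceptional set is still contained in $\mathcal N$ and so has Lebesgue measure zero. A permutation of the $G$ groups affects neither $\bo\Phi$ nor $\aaa$, which is consistent with the statement.

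It thus remains to recover $\aaa$ from $\bo\Phi$ in two steps. \emph{First}, because $\sum_{k=1}^K\pi_k=1$, summing a single mode of $\bo\Phi$ over $[K]$ deletes the corresponding factor from the integrand; since $G\ge 2$, summing out $G-2$ modes of $\bo\Phi$ produces the $K\times K$ matrix $M=(m_{jk})$ with $m_{jk}=\mathbb E[\pi_j\pi_k]$, which is therefore identified (this is exactly where $G\ge 2$ enters — with $G=1$ only the normalized means $\alpha_k/\alpha_0$ survive and $\alpha_0$ is lost). \emph{Second}, with $\alpha_0=\sum_{k=1}^K\alpha_k$ and $S=\alpha_0(\alpha_0+1)$, the Dirichlet moment formulas give $m_{jk}=\alpha_j\alpha_k/S$ for $j\ne k$ and $m_{kk}=\alpha_k(\alpha_k+1)/S$, i.e.\ $M=S^{-1}\big(\aaa\aaa^\top+\diag(\aaa)\big)$; in particular every $m_{jk}>0$. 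If $K\ge 3$, fix $k$ and two other indices $j\ne l$; then $m_{kl}/m_{jl}=\alpha_k/\alpha_j$ and $m_{kk}/m_{jk}-m_{kl}/m_{jl}=(\alpha_k+1)/\alpha_j-\alpha_k/\alpha_j=1/\alpha_j$, so each $\alpha_j$ is recovered. If $K=2$, put $a=m_{11}/m_{12}=(\alpha_1+1)/\alpha_2$ and $b=m_{22}/m_{12}=(\alpha_2+1)/\alpha_1$; solving this linear system yields $\alpha_1=(a+1)/(ab-1)$ and $\alpha_2=(b+1)/(ab-1)$, while $ab=(\alpha_1+1)(\alpha_2+1)/(\alpha_1\alpha_2)>1$ guarantees the denominator never vanishes. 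In either case the reconstruction uses only the strictly positive entries of $M$, so $\aaa$ is a well-defined injective function of $M$, hence of $\bo\Phi$; combined with the reduction above this proves (a) and (b).

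\emph{Main obstacle.} Essentially all the content is in the last step, namely verifying that the moment map $\aaa\mapsto M$ is globally injective with no degenerate exceptions on the valid domain. The split between $K\ge 3$ and $K=2$, together with the strict positivity of the $m_{jk}$ and the strict inequality $ab>1$ when $K=2$, is precisely what removes the would-be rank-deficient or division-by-zero cases; everything else is bookkeeping. A secondary point to keep straight is that the permutation indeterminacy of $\bo\Phi$ and that of $\aaa$ must be the \emph{same} relabeling of the $K$ extreme profiles, which holds because relabeling the profiles permutes the index set of every mode of $\bo\Phi$ uniformly.
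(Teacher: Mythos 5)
Your proposal is correct and follows essentially the same route as the paper: both reduce the problem to recovering $\aaa$ from the identified core tensor $\bo\Phi$, extract the second-order Dirichlet moments $\mathbb E[\pi_k\pi_\ell]$ (you by marginalizing out $G-2$ modes, the paper by taking ratios of order-$G$ entries directly, which is equivalent), and then invert the moment map via the same ratios $(\alpha_k+1)/\alpha_\ell$ and $(\alpha_\ell+1)/\alpha_k$ with the closed-form solution $\alpha_k=(x+1)/(xy-1)$. Your explicit check that $xy>1$ and your separate $K\ge 3$ shortcut are minor refinements the paper leaves implicit.
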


\begin{remark}\label{rmk-ctucker}
Our identifiability results 
have implications for the collapsed Tucker (c-Tucker) decomposition for multivariate categorical data \citep{johndrow2017}. 
Our assumption that the latent memberships underlying several variables are in one state is similar to that in c-Tucker. However, c-Tucker does not model mixed memberships, and the c-Tucker tensor core, $\bo\Phi$ in our notation, is assumed to arise from a CP decomposition \citep{goodman1974} with $\phi_{k_1,\ldots,k_G} = \sum_{v=1}^r w_{v} \prod_{g=1}^G \psi_{g,k_g,v}$. 
We can invoke the uniqueness of the CP decomposition \citep[e.g.,][]{kruskal1977three, allman2009} to obtain identifiability of parameters $\bo w=(w_v;\, v\in[r])$ and $\bo\psi=(\psi_{g,k,v};\, g\in[G],\, k\in[K], v\in[r])$.
Hence, under our assumptions on the variable grouping structure in Section \ref{sec-id}, imposing existing mild conditions on $\bo w$ and $\bo \psi$ will yield identifiability of all the  c-Tucker parameters.
\end{remark}

\subsection{Bayesian inference}
Considering the complexity of our 
latent structure model, we adopt a Bayesian approach.
We next describe the prior specification for $\LL$, $\bo\Lambda$, and $\aaa$ in Dirichlet Gro-M$^3$s.
The number of variable groups $G$ and number of extreme latent profiles $K$ are assumed known; we relax this assumption in Section \ref{sec-simu}.
Recall the indicators $s_1,\ldots,s_p\in[G]$ are defined as $s_j=g$ if and only if $\ell_{j,g}=1$, so there is a one-to-one correspondence between the matrix $\LL$ and the vector $\bo s=(s_1,\ldots,s_p)$.
We adopt the following prior for the $s_j$'s, 
\begin{align*}
    s_1,\ldots, s_p 
    &\stackrel{\text{i.i.d.}}{\sim} \text{Categorical}([G],\; \xi_1,\ldots,\xi_G),
\end{align*}
where $\text{Categorical}([G],\; \xi_1,\ldots,\xi_G)$ is a categorical distribution over $G$ categories with proportions $\xi_g\geq 0$ and $\sum_{g=1}^G \xi_g = 1$. We choose uniform priors over the probability simplex for $(\xi_1,\ldots,\xi_G)$ and each column  of $\bo\Lambda_j$. We remark that if certain prior knowledge about the variable groups is available for the data, then it is also possible to employ informative priors such as those in \cite{paganin2021centered} for the $s_j$'s.
For the Dirichlet parameters $\aaa$, defining $\alpha_0 = \sum_{k=1}^K \alpha_k$ and $\bo\eta = (\alpha_1/\alpha_0, \ldots, \alpha_K/\alpha_0)$, we choose 
the hyperpriors $\alpha_0\sim \text{Gamma}(a_\alpha, b_\alpha)$ and $\bo\eta$ is uniform over the $(K-1)$-probability simplex.

Given a sample of size $n$, denote the observed data by $\bm Y = \{\bo y_i;\, i=1,\ldots,n\}$.
We propose a Metropolis-Hastings-within-Gibbs sampler {and also a Gibbs sampler} for posterior inference of $\LL$, $\bo\Lambda$, and $\aaa$ based on the data $\bm Y$. 

\paragraph{{Metropolis-Hastings-within-Gibbs Sampler.}} This sampler cycles through the following steps.

\begin{description}
    \item[\textbf{Step 1--3.}] Sample each column of the conditional probability tables $\bo\Lambda_j$'s, the individual mixed-membership proportions $\bo\pi_i$'s, and the individual latent assignments $z_{i,g}$'s from their full conditional posterior distributions. Define indicator variables $y_{i,j,c} = \mathbb I(y_{i,j}=c)$ and $z_{i,g,k} = \mathbb I(z_{i,g}=k)$. These posteriors are
\begin{eqnarray*}
   \{\bo\lambda_{j,\bcolon, k}\mid -\}_{s_j=g}
    &\sim &
   \text{Dirichlet} \left(
    1 + \sum_{i=1}^n z_{i,g,k} y_{i,j,1},
     ~\ldots,~
    1 + \sum_{i=1}^n z_{i,g,k} y_{i,j,d_j}
   \right);
    \\[2mm]
   \bo\pi_{i}\mid - &\sim & \text{Dirichlet}\left( \alpha_1+\sum_{g=1}^G z_{i,g,1}, ~\ldots,~ \alpha_K + \sum_{g=1}^G z_{i,g,K} \right);
    \\[3mm]
   \mathbb P(z_{i,g}=k \mid -)
   &= &
   \frac{\pi_{i,k} \prod_{j: \, s_j=g} \prod_{c=1}^{d_j}  \lambda_{j,c, k}^{y_{i,j,c}}}{\sum_{k'=1}^K \pi_{i,k'} \prod_{j: \, s_j=g} \prod_{c=1}^{d_j}  \lambda_{j,c, k'}^{y_{i,j,c}}},
   \quad k\in[K].
\end{eqnarray*}

\item[\textbf{Step 4.}] Sample the variable grouping structure $(s_1,\ldots,s_p)$.
The posterior of each $s_j$ is
\begin{align*}
    \mathbb P(s_j = g\mid -)
    =
    \frac{\xi_{g} \prod_{i=1}^n \lambda_{j,y_{i,j},z_{i,g}}}{\sum_{g'=1}^G \xi_{g'} \prod_{i=1}^n \lambda_{j,y_{i,j},z_{i,g'}}}, \quad g\in[G].
\end{align*}
The posterior of $(\xi_1,\ldots,\xi_G)$ is 
\begin{align*}
    (\xi_1, \ldots, \xi_G)\mid -
    \sim
    \text{Dirichlet}\left(1 + \sum_{j=1}^p \mathbb I(s_j=1),\ldots, 1 + \sum_{j=1}^p \mathbb I(s_j=G)\right).
\end{align*}

\item[\textbf{Step 5.}]
Sample the Dirichlet parameters $\aaa=(\alpha_1, \ldots, \alpha_K)$ via Metropolis-Hastings sampling. 
The conditional posterior distribution of $\aaa$ (or equivalently, $\alpha_0$ and $\bo\eta$) is
\begin{align*}
    p(\aaa\mid -)
    &\propto
    \text{Gamma}(\alpha_0\mid a,b)\times 
    \text{Dirichlet}(\bo\eta\mid \one_K)
    \times \prod_{i=1}^n \text{Dirichlet}(\bo\pi_i\mid\aaa) \\
    &\propto \alpha_0^{a_\alpha-1}\exp(-b_\alpha \alpha_0)\times 
    \left[\frac{\Gamma(\alpha_0)}{\prod_{k=1}^K \Gamma(\alpha_k)}\right]^n 
    \times \prod_{k=1}^K \left[\prod_{i=1}^n \pi_{i,k}\right]^{\alpha_k},
\end{align*}
which is not an easy-to-sample-from distribution. 
We use a Metropolis-Hastings sampling strategy in \cite{manrique2012jasa}. The steps are detailed as follows.
\begin{itemize}
    \item Sample each entry of $\aaa^\star=(\alpha_1^\star,\ldots,\alpha_K^\star)$ from independent lognormal distributions (proposal distribution $g(\aaa^\star\mid\aaa)$) as
    \begin{align}\label{eq-sigalpha}
        \alpha_k^\star \stackrel{\text{ind.}}{\sim} \text{lognormal}(\log \alpha_k, \sigma_\alpha^2), 
    \end{align}
    where $\sigma_\alpha$ is a tuning parameter that affects the acceptance ratio of the draw. 
    Based on our preliminary simulations, $\sigma$ should be relatively small to avoid the acceptance ratio to be always too close to zero.
    
    \item Let $\alpha_0^\star = \sum_{k=1}^K \alpha_k^\star$. Define
    \begin{align*}
        r^\star 
        =&~ \frac{p(\aaa^\star\mid-) g(\aaa\mid\aaa^\star)}{p(\aaa\mid-) g(\aaa^\star\mid\aaa)} \\
        =&~ \left(\frac{\alpha_0^\star}{\alpha_0}\right)^{a_\alpha-1} \exp\left( -b_\alpha(\alpha_0^\star - \alpha_0) \right) \times \left(\frac{\Gamma(\alpha_0^*)}{\Gamma(\alpha_0)} \cdot \frac{\prod_{k=1}^K \Gamma(\alpha_k)}{\prod_{k=1}^K \Gamma(\alpha_k^\star)}\right)^n
        \\
        &~ \times \prod_{k=1}^K\left(\prod_{i=1}^n \pi_{i,k}\right)^{\alpha_k^\star - \alpha_k} \times \prod_{k=1}^K  \frac{\alpha_k^\star}{\alpha_k}
    \end{align*}
    The Metropolis-Hastings acceptance ratio of the proposed $\aaa^\star$ is $r = \min\left\{1, \; r^\star\right\}$.
\end{itemize}
\end{description}
{We track the acceptance ratio in the Metropolis-Hastings step along the MCMC iterations in a simulation study. Figure \ref{fig-accratio} shows the boxplots of the average acceptance ratios for various sample sizes in the same simulation as the later Table \ref{tab-acc-K4}. This figure shows that the Metropolis-Hastings acceptance ratio is generally high and mostly exceeds 80\%.}

\begin{figure}[h!]
    \centering
    \includegraphics[width=0.6\textwidth]{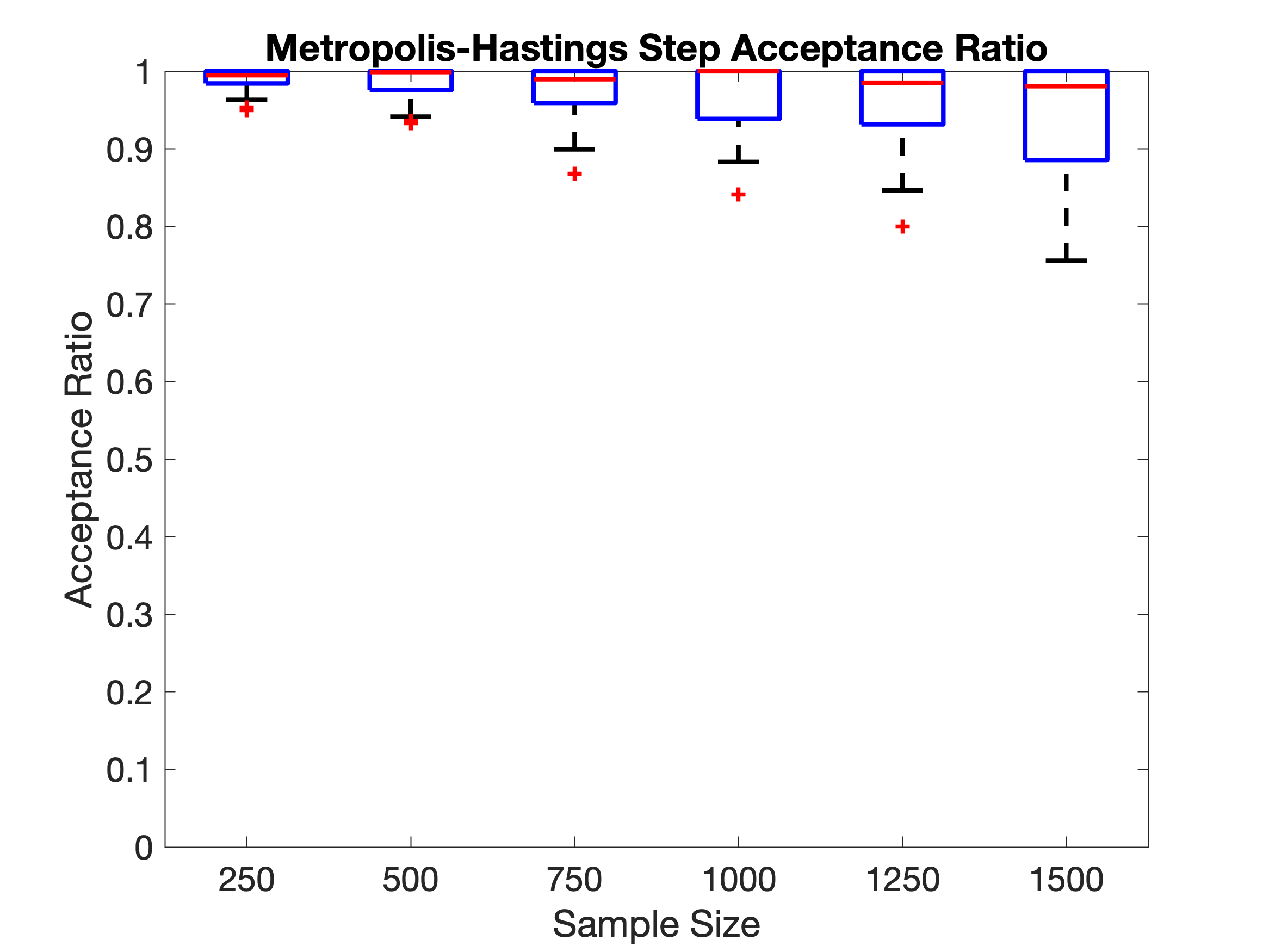}
    \caption{{Metropolis-Hastings average acceptance ratio in the simulation setting $(p,G,K) = (30, 6, 4)$, corresponding to the first setting in Table \ref{tab-acc-K4} in the manuscript.}}
    \label{fig-accratio}
\end{figure}

\medskip
\paragraph{Gibbs Sampler.}
We also develop a fully Gibbs sampling algorithm for our Gro-M$^3$, leveraging the auxiliary variable method in \cite{zhou2018nbfa} to sample the Dirichlet parameters $\aaa$.
Especially, since we have proved in Proposition \ref{prop-dir} that the entire Dirichlet parameter vector $\aaa = (\alpha_1,\ldots,\alpha_K)$ is identifiable from the observed data distribution, we choose to freely and separately sample all the entries $\alpha_1, \ldots, \alpha_K$  instead of constraining these $K$ entries to be equal as in \cite{zhou2018nbfa}. 
Recall that for each subject $i$,  $z_{i,g} \in[K]$ for $g\in[G]$ denotes the latent profile realization for the $g$th group of items.
Introduce new notation $Z^{\text{mult}}_{ik} = \sum_{g=1}^G \mathbbm{1}(z_{i,g}=k)$ for $i\in[N]$ and  $k\in[K]$. Then $(Z^{\text{mult}}_{i1}, \ldots, Z^{\text{mult}}_{i1})$ follows the Dirichlet-Multinomial distribution with parameters $G$ and $(\alpha_1,\ldots,\alpha_K)$.
We introduce auxiliary Beta variables $q_i$ for $i\in[N]$ and auxiliary Chinese Restaurant Table (CRT) variables $t_{ik}$ for $i\in[N]$ and  $k\in[K]$. 
Endowing the Dirichlet parameter $\alpha_k$ with the prior $\alpha_k \sim \text{Gamma}(a_0, ~b_0)$, we have the following Gibbs updates for sampling $\alpha_k$.
\begin{description}
\item[Step 5$^\star$] Sample the auxiliary variables $q_i$, $t_{ik}$ and the Dirichlet parameters $\alpha_k$ from the following full conditional posteriors:
\begin{align*}
    q_i &\sim \text{Beta}\left( \sum_{k=1}^K Z^{\text{mult}}_{ik},~  \sum_{k=1}^K \alpha_k\right),\qquad i\in[n];\\[2mm]
    t_{ik} &\sim \text{CRT}(Z^{\text{mult}}_{ik},~ \alpha_k),\qquad i\in[n],~ k\in[K];\\[2mm]
    \alpha_k &\sim \text{Gamma} \left( a_0 + \sum_{i=1}^n t_{ik},~ b_0 - \sum_{i=1}^n \log(1-q_i) \right), \qquad k\in[K].
\end{align*}
\end{description}
Replacing the previous Step 5 in the Metropolis-within-Gibbs sampler with the above Step 5$^\star$ gives a fully Gibbs sampling algorithm for Gro-M$^3$.

Our simulations reveal the following empirical comparisons between the Gibbs sampler and the Metropolis-Hastings-within-Gibbs (MH-within-Gibbs) sampler.
In terms of Markov chain mixing, the Gibbs sampler mixes faster than the MH-within-Gibbs sampler as expected, and requires fewer MCMC iterations to generate quality posterior samples \emph{if initialized well}.
However, in terms of estimation accuracy, we observe that the MH-within-Gibbs sampler tends to have better accuracy in estimating the identifiable model parameters. This is likely because that the MH-within-Gibbs sampler performs better on exploring the entire posterior space through the proposal distributions; whereas the Gibbs sampler tends to be more heavily influenced by the initial value of the parameters and can converge to suboptimal distributions if not initialized well. We next provide the experimental evidence behind the above observations.

Figure \ref{fig-trace} provides typical traceplots for the MH-within-Gibbs sampler (left) and the Gibbs sampler (middle and right) in one simulation trial in the same setting as the later Table \ref{tab-acc-K4}. The four horizontal lines in each panel denote the true parameter values $\aaa=(\alpha_1,\alpha_2,\alpha_3,\alpha_4) = (0.4,0.5,0.6,0.7)$. The left and middle panels of Figure \ref{fig-trace} are traceplots of $\alpha_k$ in MCMC chains initialized randomly with the same initial value, whereas the right panel corresponds to a chain initialized with the true parameter value $\aaa$. Figure \ref{fig-trace} shows that when initialized randomly with the same value, the MH-within-Gibbs chain converges to distributions much closer to the truth than the Gibbs sampler; in contrast, the Gibbs chain only manages to converge to the desirable posteriors when initialized with the true $\aaa$.
Furthermore, Figure \ref{fig-2rmse} plots the root mean squared error quantitles (25\%, 50\%, 75\%) of $\aaa$ estimated using the two samplers from the 50 simulation replicates in each setting. The parameter initialization in each replicate for the two samplers is random and identical. Figure \ref{fig-2rmse} clearly shows that the MH-within-Gibbs sampler has lower estimation error for $\aaa$.
In summary, when \emph{initialized randomly using the same mechanism}, the MH-within-Gibbs sampler has higher parameter estimation accuracy despite that the Gibbs sampler mixes faster.
Therefore, we choose to present the estimation results of the MH-within-Gibbs sampler in the later Section \ref{sec-simu}.

\begin{figure}[h!]
    \centering
    \resizebox{\textwidth}{!}{%
    \includegraphics[height=3cm]{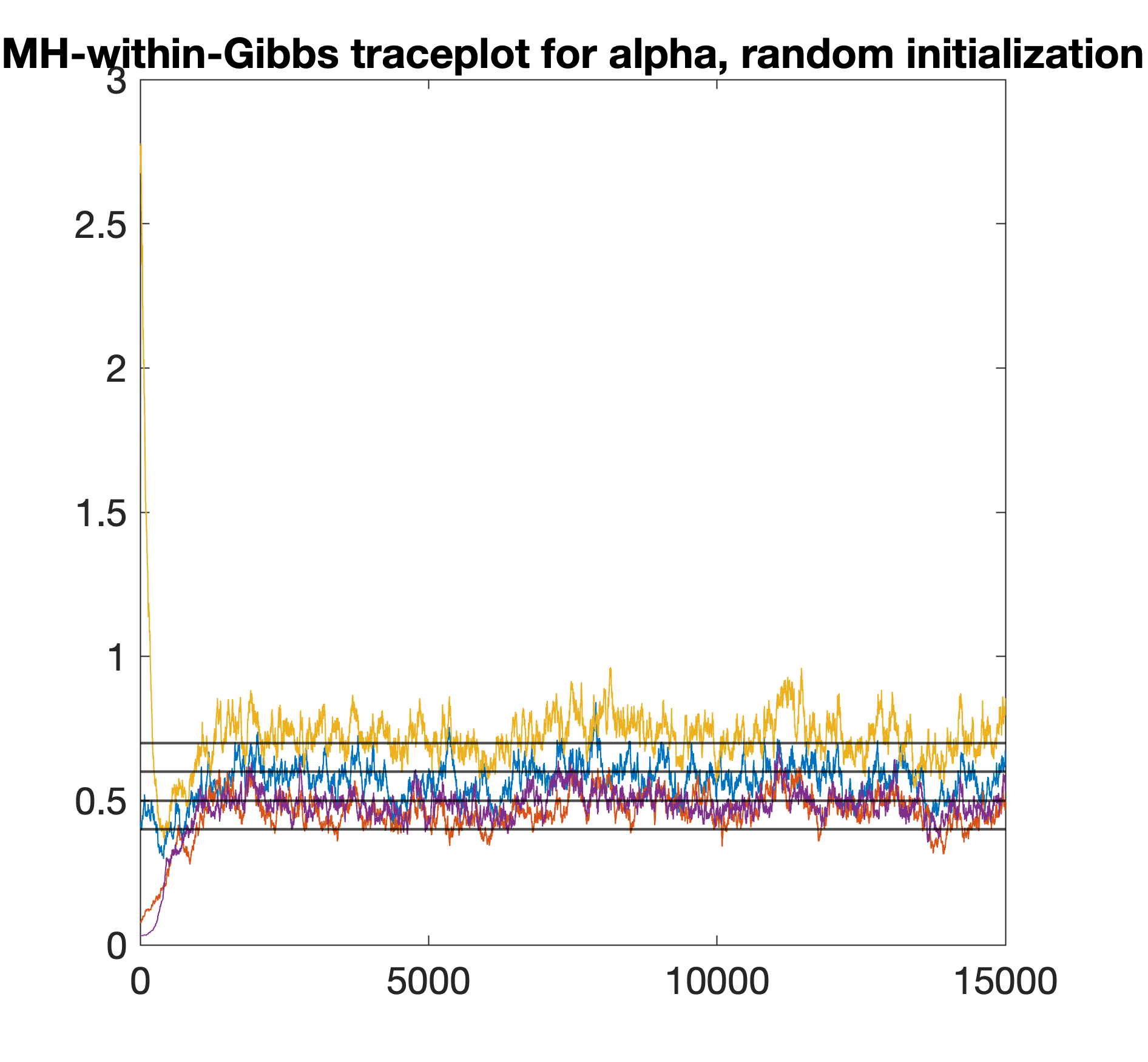}
    \includegraphics[height=3cm]{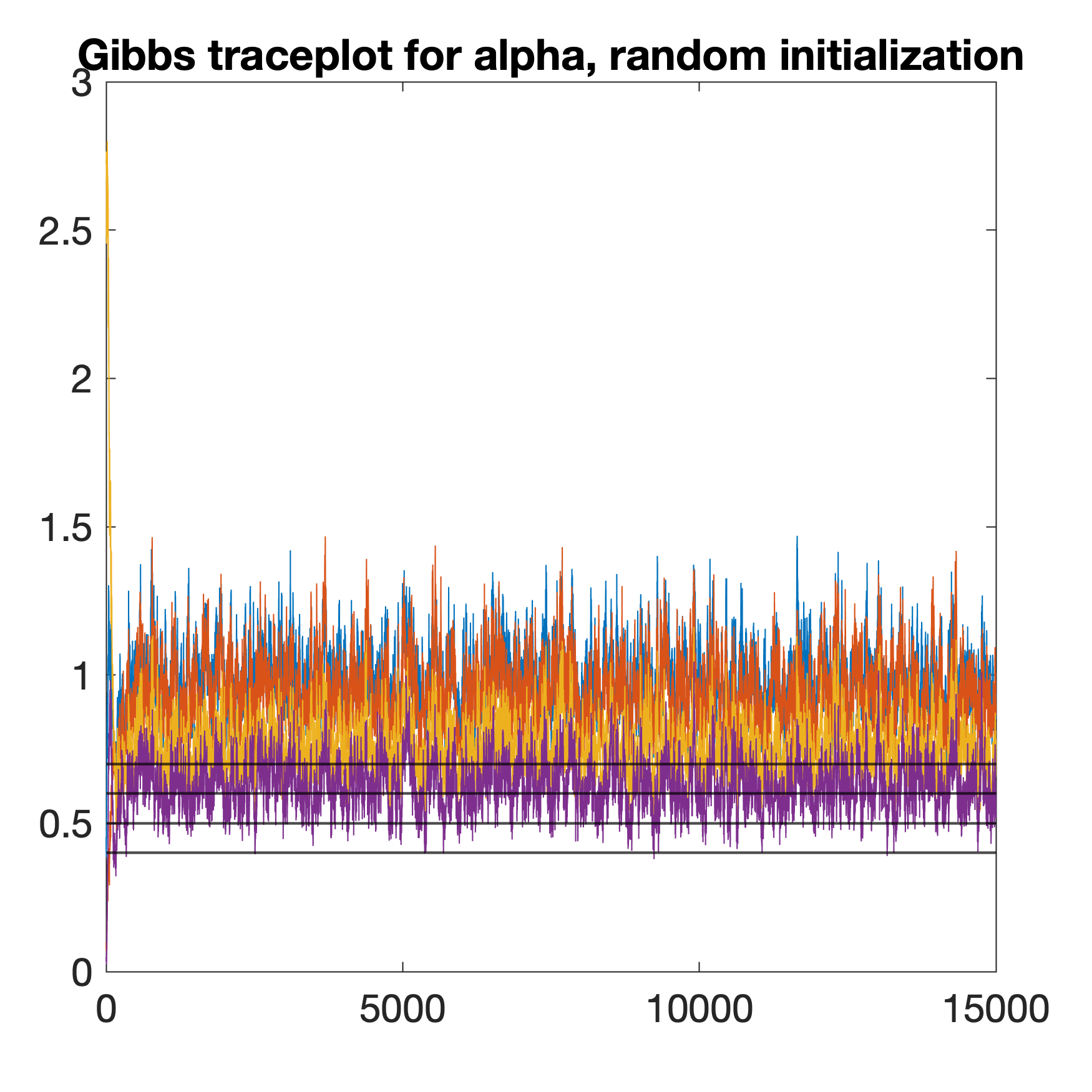}
    \includegraphics[height=3cm]{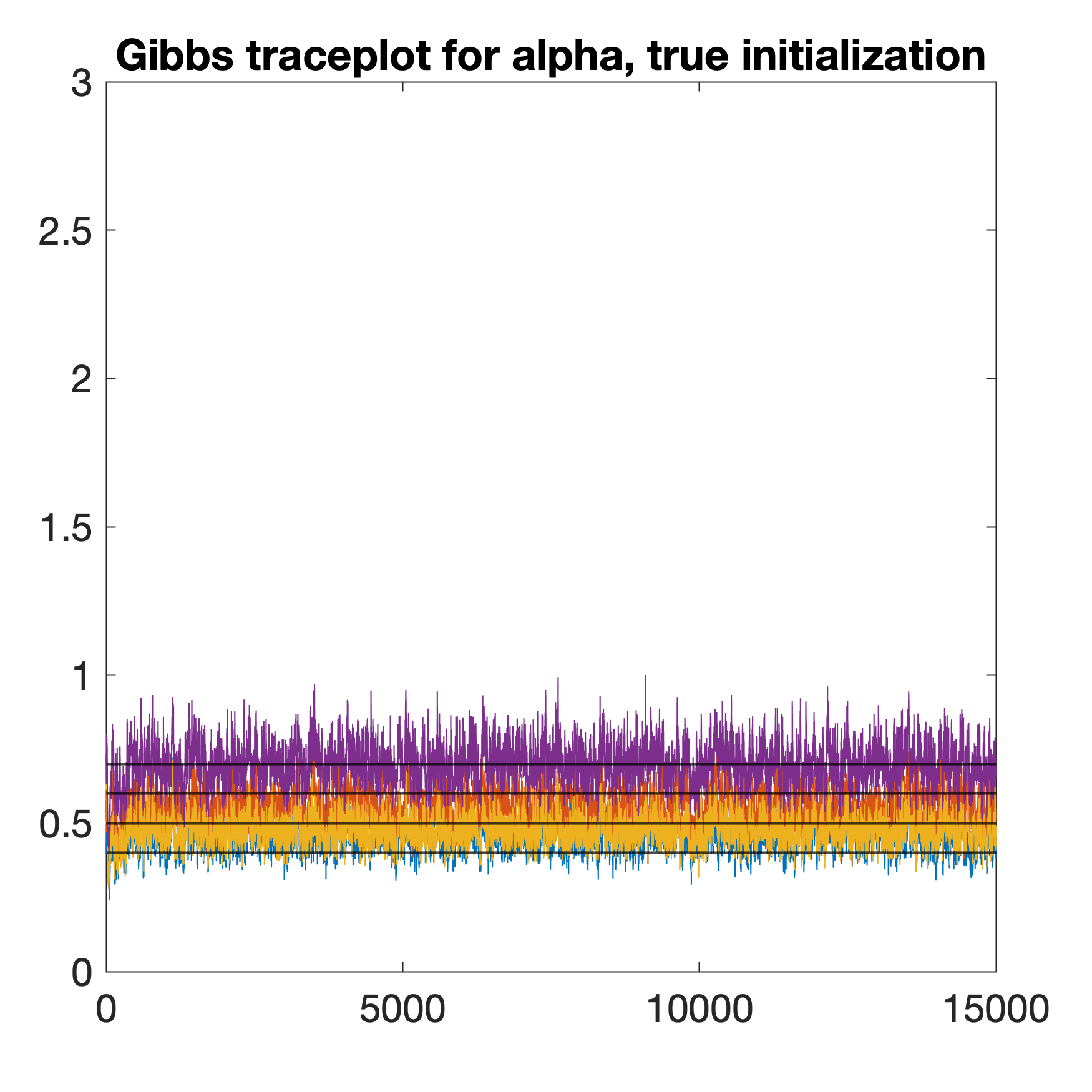}
    }
    \caption{{Traceplots of the MH-within-Gibbs sampler (left) and the Gibbs sampler (middle and right) applied to one simulated dataset with $(n,p,G,K)=(500,30,6,4)$.
    The horizontal lines in each panel denote the true $\aaa=(\alpha_1,\alpha_2,\alpha_3,\alpha_4) = (0.4,0.5,0.6,0.7)$. The left and middle panels correspond to chains initialized randomly with the same initial value, whereas the right panel corresponds to a chain initialized with the true parameter value $\aaa$.}}
    \label{fig-trace}
\end{figure}
\begin{figure}[h!]
    \centering
    \includegraphics[width=0.56\textwidth]{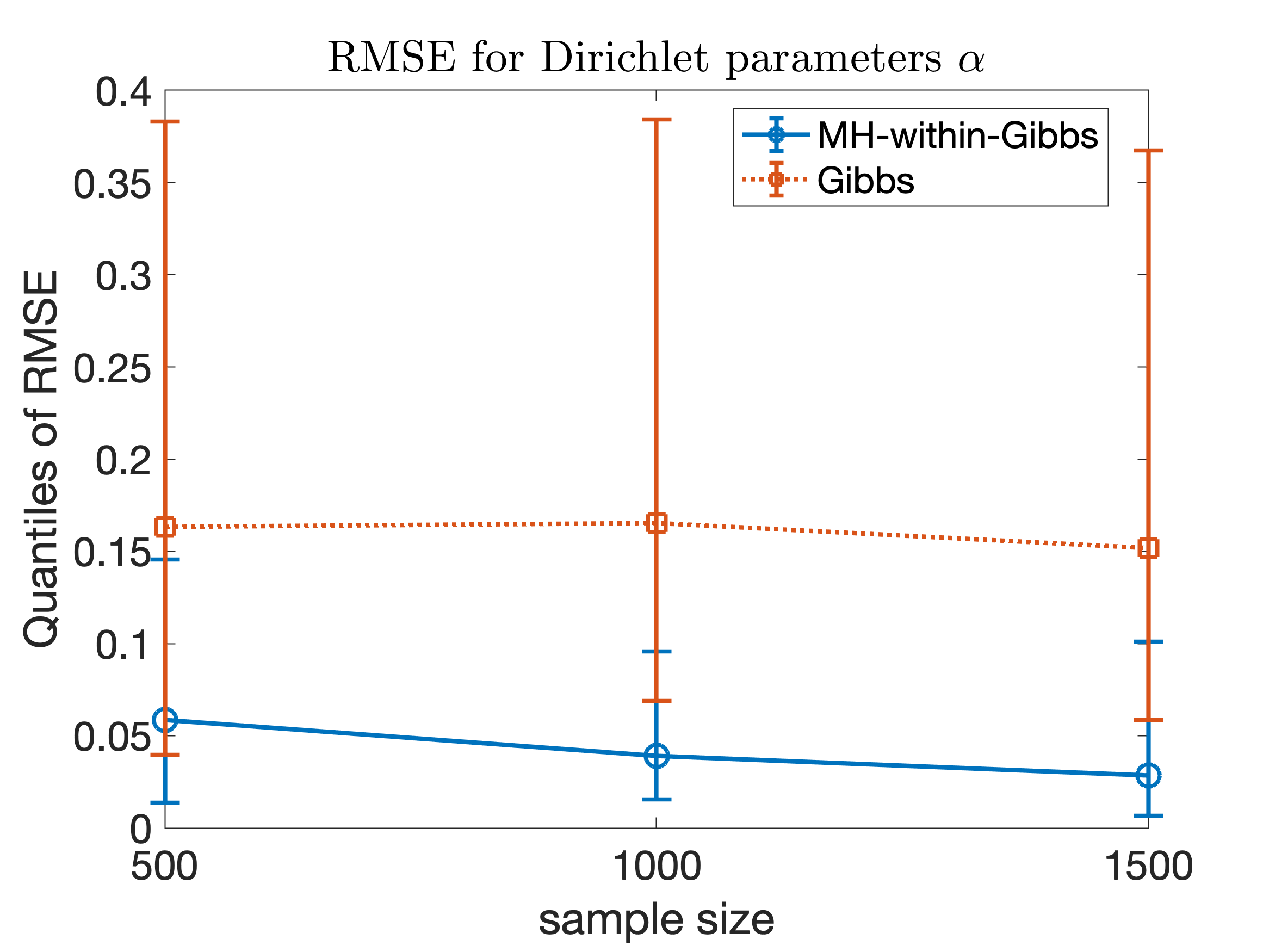}
    \caption{{Root mean squared errors (RMSE) quantiles (25\%, 50\%, 75\%) for the MH-within-Gibbs sampler and the Gibbs sampler obtained from 50  simulation replicates for each sample size. In each simulation replicate, the initializations of the Gibbs chain and the MH-within-Gibbs chain are identical.}}
    \label{fig-2rmse}
\end{figure}
\color{black}

After collecting posterior samples from the output of the MCMC algorithm, for those continuous parameters in the model we can calculate their posterior means  as point estimates. 
As for the discrete variable grouping structure, we can obtain the posterior modes of each $s_j$. That is, given the $T$ posterior samples of $\bo s^{(t)}=(s^{(t)}_1,\ldots,s^{(t)}_p)$ for $t=1,\ldots,T$, we define point estimates $\overline{\bo s}$ and $\overline\LL$ with entries
\begin{align}\label{eq-lmode}
    \overline s_j &= \argmax_{g\in [G]} \sum_{t=1}^T \mathbb I(s^{(t)}_j=g);
    \qquad
    \overline \ell_{j,g} =
    \begin{cases}
    1, & \text{if } \overline s_j = g;\\
    0, & \text{otherwise}.
    \end{cases}
\end{align}

\section{Simulation Studies}\label{sec-simu}
In this section, we carry out simulation studies to assess the performance of the proposed Bayesian estimation approach, while verifying that identifiable parameters are indeed estimated more accurately as sample size grows. 
In Section \ref{sec-simu1}, we perform a simulation study to assess the estimation accuracy of the model parameters, assuming the number of extreme latent profiles $K$ and the number of variable groups $G$ are known. This is the same assumption as in many existing estimation methods of traditional MMMs \citep[e.g.,][]{manrique2012jasa}. 
In Section \ref{sec-simu2}, to facilitate the use of our estimation method in applications, we propose data-driven criteria to select $K$ and $G$ and perform a corresponding simulation study.

\subsection{Estimation of Grouping Structure and Model Parameters}\label{sec-simu1}

In this simulation study, we assess the proposed algorithm's performance in estimating the $(\LL, \bo\Lambda, \aaa)$ in Dirichlet Gro-M$^3$s. 
We consider various simulation settings, with $K = 2,3$, or $4$, and $(p,G) = (30,6)$, $(60,12)$, or $(90,15)$. The number of categories of each $y_j$ is specified to be three, i.e., $d_1=\cdots=d_p=3$. The true $\LLambda$-parameters are specified as follows: in the most challenging case with $K=4$ and $(p,G)=(90,15)$, for $u=0,1,\ldots,p/6-1$ we specify

\vspace{-8mm}\singlespacing
    $$
    {\small\LLambda_{6u+1} = \begin{pmatrix}
     0.1 & 0.7 & 0.3 & 0.1 \\
     0.8 & 0.2 & 0.4 & 0.1 \\
     0.1 & 0.1 & 0.3 & 0.8
    \end{pmatrix};
    ~
    \LLambda_{6u+2} = \begin{pmatrix}
     0.1 & 0.8 & 0.1 & 0.2 \\
     0.2 & 0.1 & 0.6 & 0.5 \\
     0.7 & 0.1 & 0.3 & 0.3
    \end{pmatrix};
    ~
    \LLambda_{6u+3} = \begin{pmatrix}
     0.1 & 0.8 & 0.2 & 0.9 \\
     0.2 & 0.1 & 0.5 & 0.05 \\
     0.7 & 0.1 & 0.3 & 0.05
    \end{pmatrix};}
  $$
  $${\small
  \LLambda_{6u+4} = \begin{pmatrix}
     0.1 & 0.1 & 0.8 & 0.3 \\
     0.8 & 0.2 & 0.1 & 0.6 \\
     0.1 & 0.7 & 0.1 & 0.1
    \end{pmatrix};
    ~
    \LLambda_{6u+5} = \begin{pmatrix}
     0.2 & 0.7 & 0.3 & 0.1 \\
     0.6 & 0.2 & 0.4 & 0.1 \\
     0.2 & 0.1 & 0.3 & 0.8
    \end{pmatrix};
    ~
    \LLambda_{6u+6} = \begin{pmatrix}
     0.1 & 0.8 & 0.1 & 0.2 \\
     0.2 & 0.1 & 0.1 & 0.6 \\
     0.7 & 0.1 & 0.8 & 0.2
    \end{pmatrix}.
    }
    $$
As for other simulation settings with smaller $K$ and $(p,G)$, we specify the $\LLambda_j$'s by taking a subset of the above matrices and retaining a subset of columns from each of these matrices. The true Dirichlet parameters $\aaa$ are set to $(0.4,0.5)$ for $K=2$, $(0.4,0.5,0.6)$ for $K=3$, and $(0.4,0.5,0.6,0.7)$ for $K=4$. The true grouping matrix $\LL$ of size $p\times G$ is specified to containing $p/G$ copies of identity submatrices $\II_G$ up to a row permutation. Under these specifications, our identifiability conditions in Theorem \ref{thm-attr-finer} are satisfied.
We consider sample sizes $n=250, 500, 1000, 1500$.
In each scenario, 50 independent datasets are generated and fitted with the proposed MCMC algorithm described in Section \ref{sec-bayes}. In our MCMC algorithm under all simulation settings, we take hyperparameters to be $(a_\alpha, b_\alpha) = (2,1)$ and $\sigma_\alpha = 0.02$. The MCMC sampler is run for 15000 iterations, with the first 10000 iterations as burn-in and every fifth sample is collected after burn-in to thin the chain. 

We observed good mixing and convergence behaviors of the model parameters from examining the trace plots. In particular, simulations show that the estimation of the discrete variable grouping structure in matrix $\LL$ (equivalently, vector $\bo s$) is quite accurate in general, and the posterior means of the continuous $\LLambda$ and $\aaa$ are also close to their truth. 
Next we first present details of two typical simulation trials as an illustration, before presenting summaries across the independent simulation replicates. 

Two random simulation trials were taken from the settings $(n,p,G,K) = (500,30,6,2)$ and $(n,p,G,K) = (500,90,15,2)$. All the parameters were randomly initialized from their prior distributions. 
In Figure \ref{fig-simu-lmat}, the left three plots in each of the first two rows show the sampled $\LL_{\text{iter.}}$ in the MCMC algorithm, after the 1st, 201st, and 401st iterations, respectively; the fourth plot show the posterior mode $\overline\LL$ defined in \eqref{eq-lmode}, and the last plot shows the simulation truth $\LL$. If an $\tilde\LL$ equals the true $\LL$ after a column permutation then it indicates $\tilde\LL$ and $\LL$ induce identical variable groupings.
The bottom two plots in Figure \ref{fig-simu-lmat} show the Adjusted Rand Index \citep[ARI,][]{rand1971objective} of the variable groupings of $\LL_{\text{iter.}}$ ($\bo s_{\text{iter.}}$) with respect to the true $\LL$ (true $\bo s$) along the first 1000 MCMC iterations.
The ARI measures the similarity between two clusterings, and it is appropriate to compare a true $\bo s$ and an estimated $\overline{\bo s}$ because they each summarizes a clustering of the $p$ variables into $G$ groups.
The ARI is at most $1$, with $\text{ARI}=1$ indicating perfect agreement between two clusterings.
The bottom row of Figure \ref{fig-simu-lmat} shows that in each simulation trial, the ARI measure starts with values around 0 due to the random MCMC initialization, and within a few hundred iterations the ARI increases to a distribution over much larger values. 
For the simulation with $(n,p,G,K) = (500,90,15,2)$, the posterior mode of $\LL$ exactly equals the truth, and the corresponding plot on the bottom right of Figure \ref{fig-simu-lmat} shows the ARI is distributed very close to 1 after just about 500 MCMC iterations.
In general, our MCMC algorithm has excellent performance in inferring the $\LL$ from randomly initialized simulations; also see the later Tables \ref{tab-acc-K2}--\ref{tab-acc-K4} for more details.

\begin{figure}[h!]
    \centering
    \includegraphics[width=0.8\textwidth]{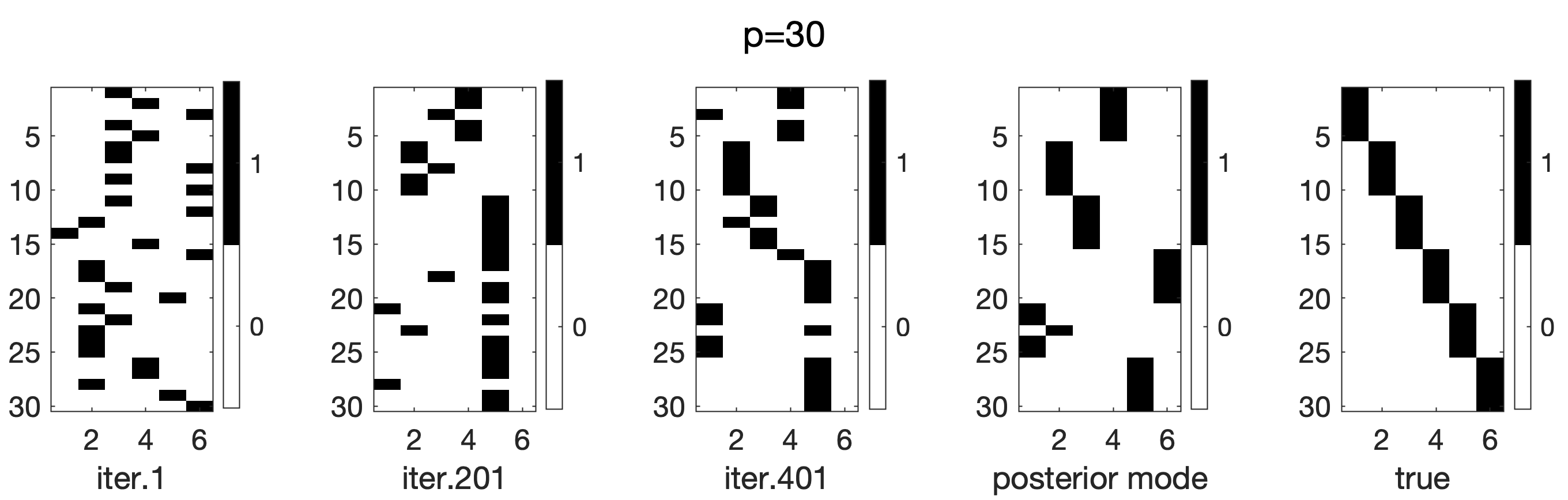}
    
    \medskip
    \includegraphics[width=0.8\textwidth]{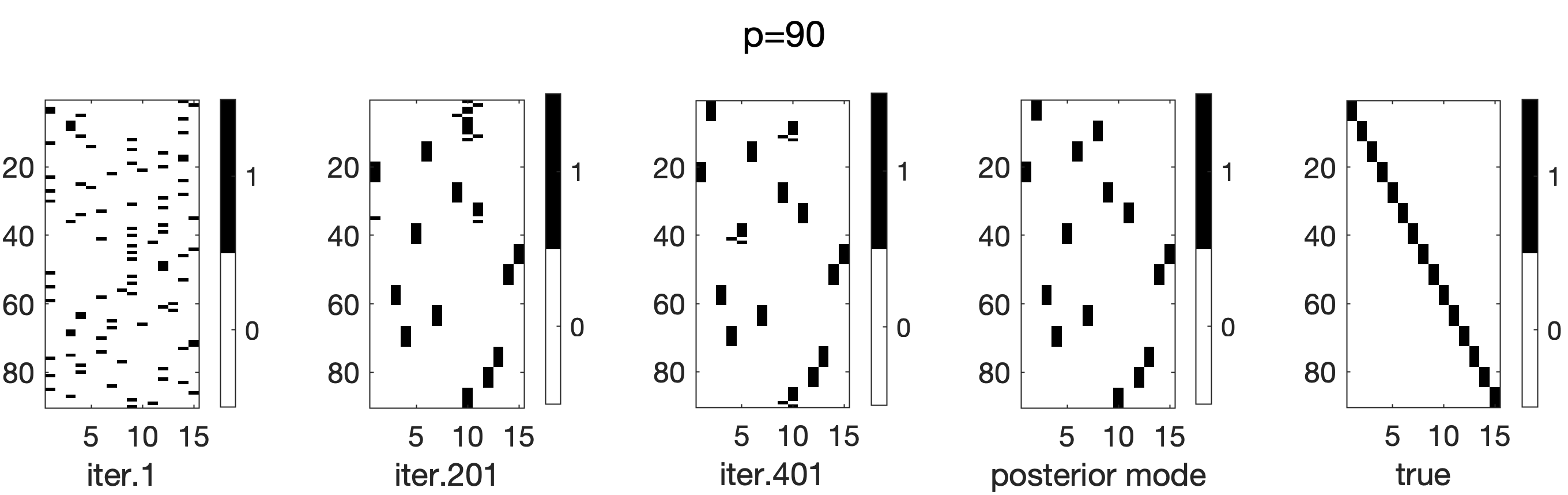}
    
    \medskip
    \includegraphics[width=0.48\textwidth]{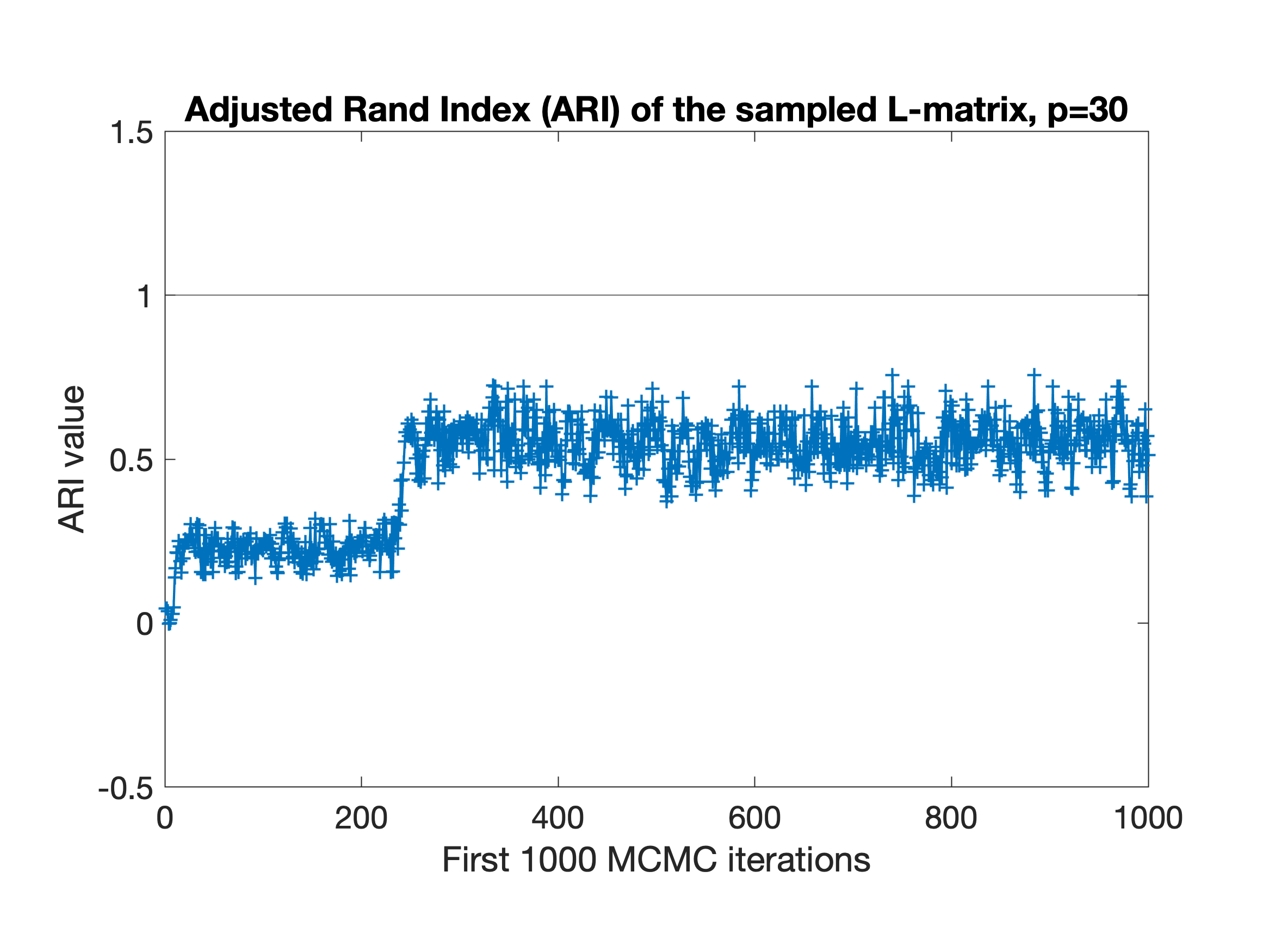}
    \hfill
    \includegraphics[width=0.48\textwidth]{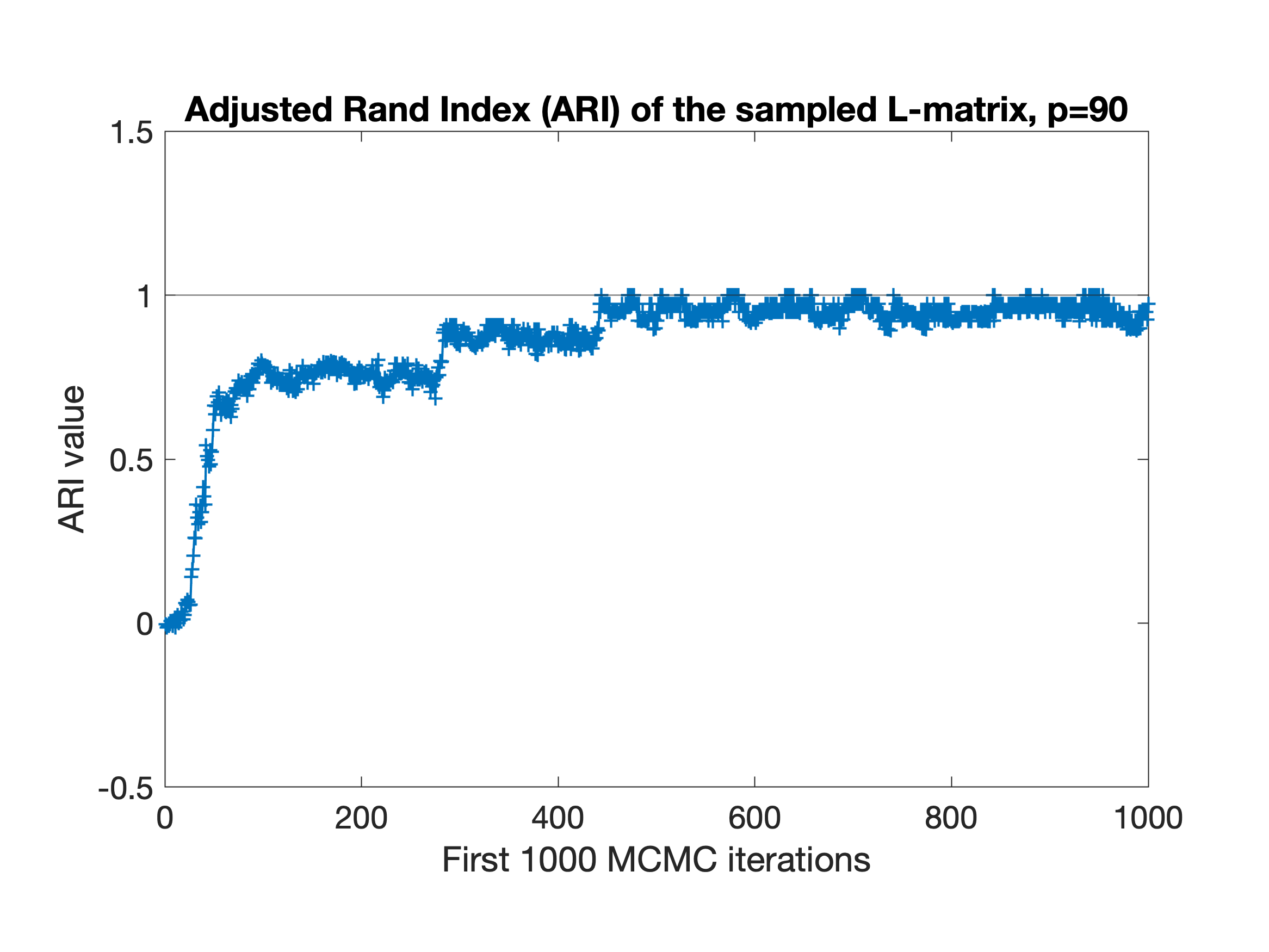}
    
    \caption{Estimation of $\LL$ (from $\bo s$) in two random simulation trials, one under $(n,p,G,K) = (500,30,6,2)$ and the other under $(n,p,G,K) = (500,90,15,2)$. In each of the first two rows, the left three plots record the sampled $\LL_{\text{iter.}}$ after the 1st, 201st, and 401st MCMC iteration, respectively. The fourth plot shows the posterior mode $\overline\LL$ and the last shows the true $\LL$. The two plots in the bottom row record the ARI of the clustering of $p$ variables given by $\LL_\text{iter.}$ along the first 1000 MCMC iterations, for each of the two simulation scenarios.}
    \label{fig-simu-lmat}
\end{figure}

We next present estimation accuracy results of both $\LL$ and $(\LLambda,\aaa)$ summarized across 50 simulation replicates in each setting.
For continuous parameters $(\bo\Lambda, \aaa)$, we calculate their Root Mean Squared Errors (RMSEs) to evaluate the estimation accuracy.
To obtain the estimation error of $(\bo\Lambda, \aaa)$ after collecting posterior samples, we need to find an appropriate permutation of the $K$ extreme latent profiles in order to compare the $(\overline{\bo\Lambda}, \overline{\aaa})$ and the true $(\bo\Lambda, \aaa)$. To this end, we first reshape each of $\overline{\bo\Lambda}$ and $\bo\Lambda$ to a $(\sum_{j=1}^p d_j) \times K$ matrix $\overline{\bo\Lambda}_{\text{mat}}$ and $\bo\Lambda_{\text{mat}}$, calculate the inner product matrix $(\bo\Lambda_{\text{mat}})^\top \overline{\bo\Lambda}_{\text{mat}}$, and then find the index $i_k$ of the largest entry in each $k$th row of the inner product matrix. Such a vector of indices $(i_1,\ldots,i_K)$ gives a permutation of the $K$ profiles, and we will compare $\overline\LLambda_{j,:,(i_1,\ldots,i_K)}$ to $\LLambda_j$ and compare $\overline\aaa_{(i_1,\ldots,i_K)}$ to $\aaa$. 
In Tables \ref{tab-acc-K2}--\ref{tab-acc-K4}, we present the RMSEs of $(\LLambda,\aaa)$ and the ARIs of $\LL$ under the aforementioned 36 different simulation settings.
The median and interquartile range of the ARIs or RMSEs across the simulation replicates are shown in these tables.

Tables \ref{tab-acc-K2}--\ref{tab-acc-K4} show that under each setting of true parameters with a fixed $(p,G,K)$, the ARIs of the variable grouping $\LL$ generally increase as sample size $n$ increases, and the RMSEs of $\LLambda$ and $\aaa$ decreases as $n$ increases. This shows the increased estimation accuracy with an increased sample size. In particular, the estimation accuracy of the variable grouping structure is quite high across the considered settings. The estimation errors are slightly larger for larger values of $K$ in Table \ref{tab-acc-K4} compared to smaller values of $K$ in Tables \ref{tab-acc-K2} and \ref{tab-acc-K3}. Overall, the simulation results empirically confirm the identifiability and estimability of the model parameters in our Dirichlet Gro-M$^3$.

\begin{table}[h!]
  \centering
  \resizebox{\textwidth}{!}{%
  \begin{tabular}{llr@{\hskip 24pt}cccccc}
    \toprule
    \multirow{2}{*}{} & \multirow{2}{*}{\centering $\{p,\, G\}$} & \multirow{2}{*}{\centering $n$} &
    \multicolumn{2}{c}{ARI of $\LL$} &
    \multicolumn{2}{c}{RMSE of ${\bo\Lambda}$} &  \multicolumn{2}{c}{RMSE of $\aaa$}\\
    \cmidrule(lr){4-5}
    \cmidrule(lr){6-7}
    \cmidrule(lr){8-9}
    & & & Median & (IQR) & Median & (IQR) & Median & (IQR) \\
    \midrule
    &   \multirow{4}{*}{\centering  $(30,\, 6)$} 
    &  250
        & 0.74 & (0.18)
        & 0.042 & (0.005)
        & 0.064 & (0.056)
        \\
    &  &   500
        & 0.88 & (0.17)
        & 0.030 & (0.004)
        & 0.031 & (0.043)
        \\
    & &  1000
        & 0.91 & (0.29)
        & 0.023 & (0.014)
        & 0.027 & (0.028)
         \\
    &  &  1500
        & 0.91 & (0.31)
        & 0.018 & (0.022)
        & 0.026 & (0.045)
         \\
\cmidrule(lr){2-9}
    \multirow{4}{*}{\centering $K=2$} 
    &  \multirow{4}{*}{\centering  $(60,\, 12)$} 
    &  250
        & 0.73 & (0.13)
        & 0.042 & (0.004)
        & 0.039 & (0.041)
        \\
    &   & 500
        & 0.79 & (0.14)
        & 0.032 & (0.003)
        & 0.031 & (0.021)
        \\
    &   & 1000
        & 0.85 & (0.20)
        & 0.027 & (0.010)
        & 0.018 & (0.029)
        \\
    &   & 1500
        & 0.81 & (0.21)
        & 0.028 & (0.016)
        & 0.024 & (0.025)
        \\
\cmidrule(lr){2-9}
    &  \multirow{4}{*}{\centering $(90,\, 15)$} 
    &  250
        & 0.95 & (0.05)
        & 0.042 & (0.003)
        & 0.045 & (0.045)
        \\
    &   &  500
        & 1.00 & (0.00)
        & 0.026 & (0.002)
        & 0.032 & (0.023)
        \\
    &   &  1000
        & 1.00 & (0.00)
        & 0.018 & (0.001)
        & 0.019 & (0.021)
        \\
    &   &  1500
        & 1.00 & (0.08)
        & 0.015 & (0.010)
        & 0.017 & (0.017)
        \\
    \bottomrule
  \end{tabular}
 }
\caption{Simulation results of the Dirichlet Gro-M$^3$ for $K=2$. 
``ARI'' of $\LL$ is the Adjusted Rand Index of the estimated variable groupings with respect to the truth. ``RMSE'' of ${\bo\Lambda}$ and ${\aaa}$ are Root Mean Squared Errors. ``Median'' and ``IQR'' are based on 50 replicates in each simulation setting.}
\label{tab-acc-K2}
\end{table}

\begin{table}[h!]
  \centering
  \resizebox{\textwidth}{!}{%
  \begin{tabular}{llr@{\hskip 24pt}cccccc}
    \toprule
    \multirow{2}{*}{} & \multirow{2}{*}{\centering $(p,\, G)$} & \multirow{2}{*}{\centering $n$} &
    \multicolumn{2}{c}{ARI of $\LL$} &
    \multicolumn{2}{c}{RMSE of ${\bo\Lambda}$} &  \multicolumn{2}{c}{RMSE of $\aaa$}\\
    \cmidrule(lr){4-5}
    \cmidrule(lr){6-7}
    \cmidrule(lr){8-9}
    & & & Median & (IQR) & Median & (IQR) & Median & (IQR) \\
\midrule
\multirow{4}{*}{\centering } 
    &  \multirow{4}{*}{\centering  $(30,\, 6)$} &  250
        &  1.00   &  (0.00)
        &  0.045  &  (0.004)
        &  0.046  &  (0.048)
        \\
    &  &  500
        & 1.00 & (0.00)
        &  0.033 &  (0.003)
        &  0.046 &  (0.059)
        \\
    &  &  1000
        & 1.00 & (0.00)
        & 0.023  &  (0.022)
        & 0.039  &  (0.037)
         \\
    &   &  1500
        &  1.00 & (0.00)
        &  0.019  &  (0.023)
        &  0.029  &  (0.032)
         \\
\cmidrule(lr){2-9}
    \multirow{4}{*}{\centering $K=3$} 
    &  \multirow{4}{*}{\centering  $(60,\, 12)$} &  250
        & 1.00 & (0.00) 
        &   0.045  &  (0.004)
        &   0.044  &  (0.030)
        \\
    &   &  500
        & 1.00 & (0.00)  
        &  0.032  &  (0.002)
        &  0.030  &  (0.018)
        \\
    &   &  1000
        & 1.00 & (0.00)  
        &  0.023  &  (0.002)
        &  0.021  &  (0.017)
         \\
    &   &  1500
        & 1.00 & (0.00)  
        &  0.018  &  (0.002)
        &  0.020  &  (0.017)
         \\
\cmidrule(lr){2-9}
    \multirow{4}{*}{\centering } 
    &  \multirow{4}{*}{\centering  $(90,\, 15)$} &  250
        &  1.00   &  (0.00) 
        &  0.045  &  (0.002)
        &  0.047  &  (0.036)
        \\
    &  &  500
        & 1.00  & (0.00)
        & 0.031 & (0.002)
        & 0.026 & (0.022)
        \\
    &  &  1000
        &  1.00  & (0.00)
        &  0.022 & (0.001)
        &  0.021 & (0.013)
        \\
    &   &  1500
        &  1.00  & (0.21)
        &  0.019 & (0.024)
        &  0.024 & (0.023)
        \\
    \bottomrule
  \end{tabular}
}
\caption{Simulation results of the Dirichlet Gro-M$^3$  for $K=3$. 
See the caption of Table \ref{tab-acc-K2} for the meanings of columns.}
\label{tab-acc-K3}
\end{table}

\begin{table}[h!]
  \centering
  \resizebox{\textwidth}{!}{%
  \begin{tabular}{llr@{\hskip 24pt}cccccc}
    \toprule
    \multirow{2}{*}{} & \multirow{2}{*}{\centering $(p,\, G)$} & \multirow{2}{*}{\centering $n$} &
    \multicolumn{2}{c}{ARI of $\LL$} &
    \multicolumn{2}{c}{RMSE of ${\bo\Lambda}$} &  \multicolumn{2}{c}{RMSE of $\aaa$}\\
    \cmidrule(lr){4-5}
    \cmidrule(lr){6-7}
    \cmidrule(lr){8-9}
    & & & Median & (IQR) & Median & (IQR) & Median & (IQR) \\
    \midrule
    &  \multirow{4}{*}{\centering  $(30,\, 6)$} &  250
        &  1.00 & (0.00)
        &  0.064  &  (0.007)
        &  0.078  &  (0.056)
        \\
    &  &  500
        &  1.00 & (0.00)
        &  0.046  &  (0.006)
        &  0.062  &  (0.072)
        \\
    &  &  1000
        & 1.00 & (0.00)
        & 0.032  &  (0.004)
        & 0.043  &  (0.046)
         \\
    &   &  1500
        &  1.00 & (0.00)
        &  0.026  &  (0.004)
        &  0.032  &  (0.036)
         \\
\cmidrule(lr){2-9}
    \multirow{4}{*}{\centering $K=4$} 
    &  \multirow{4}{*}{\centering  $(60,\, 12)$} &  250
        &  1.00   &  (0.00)
        &  0.064  &  (0.005)
        &  0.060  &  (0.031)
        \\
    &  &  500
        &  1.00   &  (0.00)
        &  0.043  &  (0.003)
        &  0.047  &  (0.027)
        \\
    &  &  1000
        & 1.00   &  (0.00)
        & 0.031  &  (0.002)
        & 0.032  &  (0.014)
         \\
    &   &  1500
        & 1.00   &  (0.00)
        & 0.025  &  (0.001)
        & 0.023  &  (0.017)
         \\
\cmidrule(lr){2-9}
    &  \multirow{4}{*}{\centering  $(90,\, 15)$} &  250
        & 1.00 & (0.00)  
        &  0.046  &  (0.004)
        &  0.053  &  (0.036)
        \\
    &   &  500
        & 1.00 & (0.00)  
        &  0.041  &  (0.003)
        &  0.037  &  (0.022)
        \\
    &   &  1000
        & 1.00 & (0.00)  
        & 0.029 &  (0.001)
        & 0.026 &  (0.027)
        \\
    &   &  1500
        & 1.00 & (0.00)  
        &  0.024 &   (0.001)
        &  0.026 &   (0.020)
        \\
    \bottomrule
  \end{tabular}
 }
\caption{Simulation results of the Dirichlet Gro-M$^3$ for $K=4$. 
See the caption of Table \ref{tab-acc-K2} for the meanings of columns.}
\label{tab-acc-K4}
\end{table}

{Our MCMC algorithm can be viewed as a novel algorithm for Bayesian factorization of probability tensors. 
To see this, note that the observed response vector ranges in the $p$-way contingency table $\bo y_{i} \in [d_1] \times [d_2]  \cdots \times [d_p]$, and the marginal probabilities of a random vector $\bo y_i$ falling each of the $\prod_{j=1}^p d_j$ cells therefore form a probability tensor with $p$ modes. 
Our Gro-M$^3$ model provides a general and interpretable hybrid tensor factorization; it reduces to the nonnegative CP decomposition when the grouping matrix equals the $p\times 1$ one-vector and reduces to the nonnegative Tucker decomposition when the grouping matrix equals the $p\times p$ identity matrix.
Specifically, our estimated Dirichlet parameters $\bo\alpha$ help define the tensor core and our estimated conditional probability parameters $\bo\lambda_{j,k}$ constitute the tensor arms.
In this regard, we view our proposed MCMC algorithm as contributing a new tensor factorization method with nice uniqueness guarantee (i.e., identifiability guarantee) and good empirical performance.}

{We conduct a simulation study to empirically verify the theoretical identifiability results. Specifically, in the simulation setting $(p,G,K) = (30, 6, 4)$, corresponding to the first setting in Table \ref{tab-acc-K4}, we now consider more sample sizes $n\in\{250,~ 500,~ 750,~ 1000,~ 1250, ~1500\}$. For each sample size, we conducted 50 independent simulation replications and calculated the average root mean squared errors (RMSEs) of the model parameters $\bo\Lambda$ and $\aaa$. Figure \ref{fig-rmse} plots the RMSEs versus the sample size $n$ and shows that as $n$ increases, the RMSEs decrease gradually. This trend provides an empirical verification of identifiability, and corroborates the conclusion that under an identifiable model, the model parameters can be estimated increasingly accurately as one collects more and more samples.}

\begin{figure}[h!]
    \centering
    \includegraphics[width=0.45\textwidth]{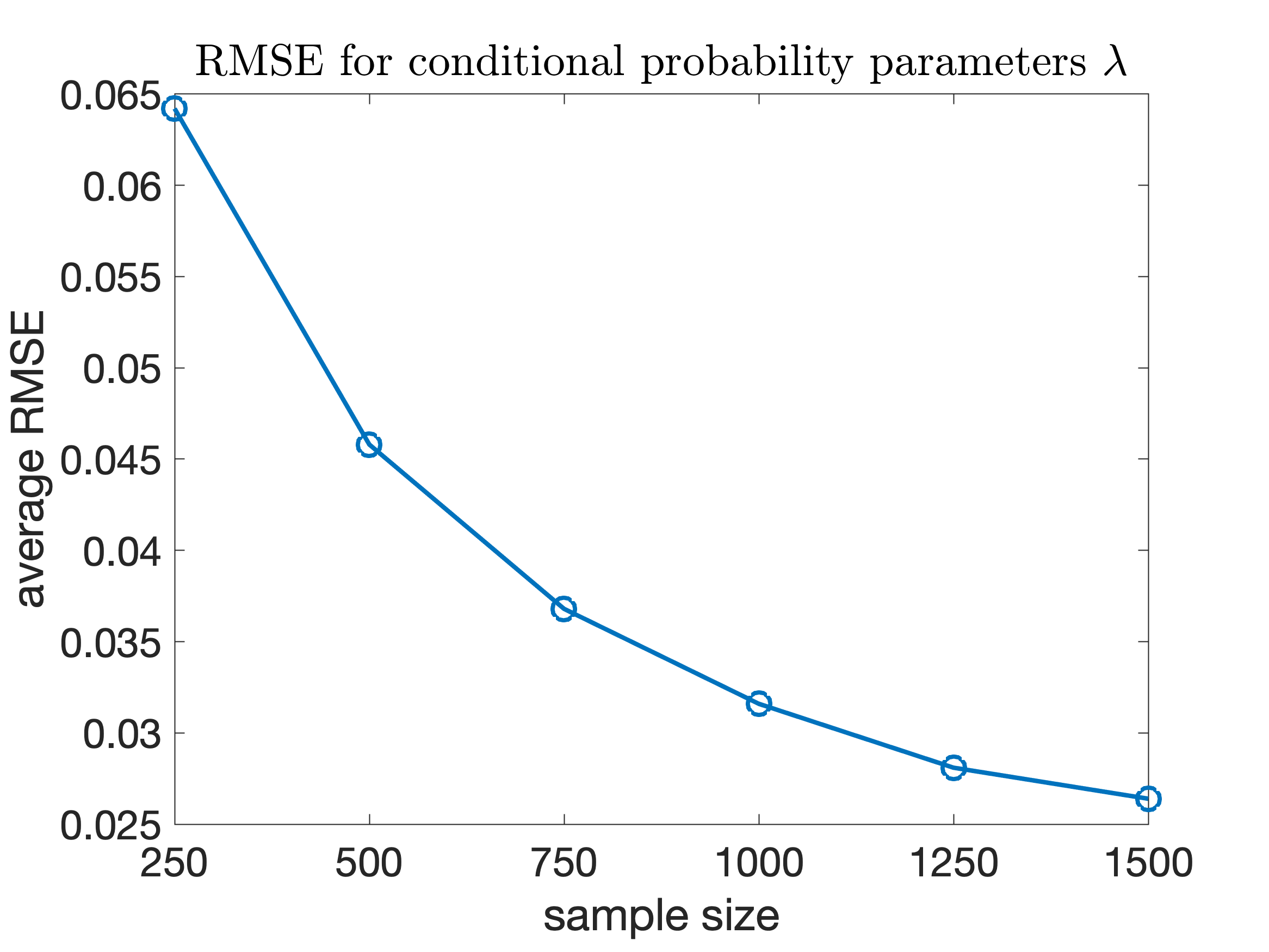}
    \quad
    \includegraphics[width=0.45\textwidth]{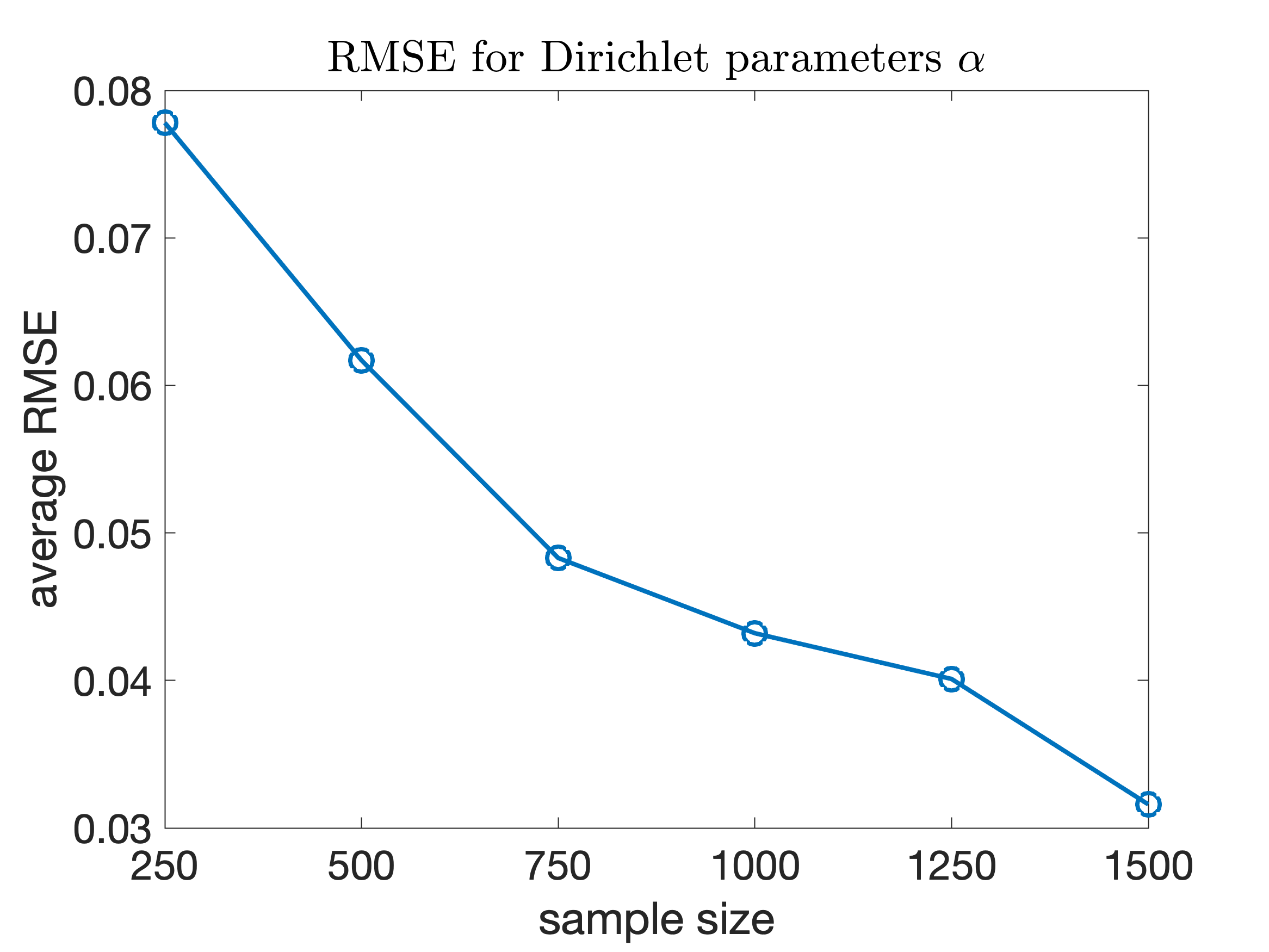}
    \caption{{Empirical verification of identifiability. Root mean square errors (RMSEs) of model parameters averaged across simulation replicates decrease as sample size increases. The simulation setting is $(p,G,K) = (30, 6, 4)$, which is the first setting in Table \ref{tab-acc-K4}.}}
    \label{fig-rmse}
\end{figure}

\subsection{Selecting $G$ and $K$ from Data}
\label{sec-simu2}

In Section \ref{sec-id}, model identifiability is established under the assumption that $G$ and $K$ are known, like many other latent structure models; for example, generic identifiability of latent class models in \cite{allman2009} is established assuming the number of latent classes is known.
But in order to provide a practical estimation pipeline applicable to real-world applications, we next briefly discuss how to select $G$ and $K$ in a data-driven way.

Our basic rationale is to use a practically useful criterion that favors a model with good out-of-sample predictive performance while remaining parsimonious. 
\cite{gelman2014ic} contains a comprehensive review of various predictive information criteria for evaluating Bayesian models.
We first considered using the Deviance Information Criterion \citep[DIC,][]{spiegelhalter2002dic}, a traditional model selection criteria for Bayesian models.
However, our preliminary simulations imply that DIC does not work well for selecting the latent dimensions in Gro-M$^3$s. In particular, we observed that DIC sometimes severely overselects the latent dimensions in our model, while that the WAIC \citep[Widely Applicable Information Criterion,][]{watanabe2010waic} has better performance in our simulation studies (see the next paragraph for details). Our observation about DIC agrees with previous studies on the inconsistency of DIC in several different settings \citep{gelman2014ic, hooten2015guide,  piironen2017comparison}.

\cite{watanabe2010waic} proved that WAIC is asymptotically equal to Bayesian leave-one-out cross validation and provided a solid theoretical justification for using WAIC to choose models with relatively good predictive ability.
WAIC is particularly useful for models with hierarchical and mixture structures, making it well suited to selecting the latent profile dimension $K$ and variable group dimension $G$ in our proposed model.
Denote the posterior samples by $\bo\theta^{(t)}$, $t=1,\ldots,T$.
For each $i\in[n]$ and $t\in[T]$, denote
\begin{align*}
p(\bo y_i \mid \bo{\theta}^{(t)})
	=&~ 
	\prod_{m=1}^G \left[ \sum_{k=1}^K \pi_{ik}^{(t)} \prod_{\ell_{j,m}^{(t)}=1} \prod_{c=1}^{d_j} \left(\lambda_{j,c,k}^{(t)}\right)^{y_{i,j,c}} \right].
\end{align*}
In particular, \cite{gelman2014ic} recommended using the following version of the WAIC, where ``lppd'' refers to \textit{log pointwise predictive density} and $p_{\text{WAIC}_2}$ measures the model complexity through the variance,
\begin{align}\label{eq-waic}
    \text{WAIC}
    &=
    -2\left(\text{lppd} - p_{\text{WAIC}_2}\right)\\ \notag
    &=
    -2\sum_{i=1}^n \log\left(\frac{1}{T} \sum_{t=1}^T p(\bo y_i\mid \bo{\theta}^{(t)}) \right)
    +
    2\sum_{i=1}^n \text{var}_{t=1}^T \left(\log p\left(\bo y_i\mid \bo{\theta}^{(t)}\right) \right),
\end{align}
where
$\text{var}_{t=1}^T$ refers to the variance based on $T$ posterior samples, with definition $\text{var}_{t=1}^T (a_t) = 1/(T-1) \sum_{t=1}^T \big(a_t - \sum_{t'=1}^T a_{t'}/T \big)^2$.
Based on the above definition, the WAIC can be easily calculated based on posterior samples. The model with a smaller WAIC is favored. 

We carried out a simulation study to evaluate how WAIC performs on selecting $G$ and $K$, focusing on the previous setting where 50 independent datasets are generated from $(n,p,G,K)=(1000,30,6,3)$. When fixing the candidate $K$ to the truth $K=3$ and varying the candidate $G_{\text{candi}} \in\{4,5,6,7,8\}$, the percentages of the datasets that each of $G=4,5,6,7,8$ is selected are 0\%, 0\%, 74\% (true $G$), 20\%, 6\%, respectively. 
When fixing the candidate $G$ to the truth $G=6$ and varying $K_{\text{candi}} \in \{2,3,4,5,6\}$, the percentages of the datasets that each of $K=2,3,4,5,6$ is selected are 0\%, 80\% (true $K$), 6\%, 4\%, 10\%, respectively. 
Further, when varying $(K,G)$ in the grid of 25 possible pairs $\{2,3,4,5,6\}\times \{4,5,6,7,8\}$, the percentage of the datasets for which the true pair $(K,G)=(3,6)$ is selected by WAIC is 58\% and neither $K$ nor $G$ ever gets underselected.
In general, our simulations show that the WAIC does not tend to underselect the latent dimensions $K$ and $G$, and that it generally has a reasonably good accuracy of selecting the truth.
We remark that here our goal was to pick a practical selection criterion that can be readily applied in real-world applications. 
To develop a selection strategy for deciding on the number of latent dimensions with rigorous theoretical guarantees under the proposed models would need future investigations.

\section{Real Data Applications}\label{sec-real}

\subsection{NLTCS Disability Survey Data}
In this section we apply Gro-M$^3$ methodology to a functional disability dataset extracted from the National Long Term Care Survey (NLTCS), created by the former Center for Demographic Studies at Duke University. 
This dataset has been widely analyzed, both with mixed membership models  \citep{erosheva2007aoas, manrique2014aoas}, and with other models for multivariate categorical data \citep{dobra2011copula, johndrow2017}.
Here we reanalyze this dataset as an illustration of our dimension-grouped mixed membership approach.

The \verb|NLTCS| dataset was downloaded from at \verb|http://lib.stat.cmu.edu/datasets/|. It is an extract  containing responses from $n=21574$ community-dwelling elderly Americans aged 65 and above, pooled over 1982, 1984, 1989, and 1994 survey waves.
The disability survey contains $p=16$ items, with respondents being either coded as healthy (level 0) or as disabled (level 1) for each item.
{Each respondent provides a 16-dimensional response vector $\bo y_i = (y_{i,1}, \ldots, y_{i,16}) \in \{0,1\} \times \cdots \times \{0,1\}$, where each variable $y_{i,j}$ follows a special categorical distribution with two categories, i.e., a Bernoulli distribution, with parameters specific to item $j$.}
Among the $p=16$ NLTCS disability items, functional disability researchers distinguish six activities of daily living (ADLs) and ten instrumental activities of daily living (IADLs).
Specifically, the first six ADL items are more basic and relate to hygiene and personal care: eating, getting in/out of bed, getting around inside, dressing, bathing, and getting to the bathroom or using a toilet. 
The remaining ten IADL items are related to activities needed to live without dedicated professional care: doing heavy house work, doing light house work, doing laundry, cooking, grocery shopping, getting about outside, travelling, managing money, taking medicine, and telephoning.

Here, we apply the MCMC algorithm developed for the Dirichlet Gro-M$^3$ to the data; the Dirichlet distribution was also used to model the mixed membership scores in \cite{erosheva2007aoas}.
Our preliminary analysis of the NLTCS data indicates the Dirichlet parameters $\aaa$ are relatively small, so we adopt a  small $\sigma_\alpha = 0.002$ in the lognormal proposal distribution in Eq.~\eqref{eq-sigalpha} in the Metropolis-Hastings sampling step.
For each setting of $(G,K)$, we run the MCMC for 40000 iterations and consider the first 20000 as burn-in to be conservative. We retain every 10th sample after the burn-in.
The candidate values for the $(G,K)$ are all the combinations of $G\in\{2,3,\ldots,15,16\}$ and $K\in\{6,7,\ldots,11,12\}$.

For selecting the values of latent dimensions $(G,K)$ in practice, we recommend picking the $(G^\star, K^\star)$ that provide the lowest WAIC value and also do not contain any empty groups of variables. 
In particular, for certain pairs of $(G,K)$ (in our case, for all $G>10$) under the NLTCS data, we observe that the posterior mode of the grouping matrix, $\overline\LL$, has some all-zero columns. 
If $\tilde G$ denotes the number of not-all-zero columns in $\overline\LL$, 
this means after model fitting, the number of groups occupied by the $p$ variables is $\tilde G < G$.
Models with $\tilde G< G$ are difficult to interpret because empty groups that do not contain any variables cannot be assigned meaning.
Therefore, we focus only on models where $\overline\LL$ does not contain any all-zero columns and pick the one with the smallest WAIC among these models.
Using this criterion, for the NLTCS data, the model with $G^\star=10$ and $K^\star=9$ is selected. 
We have observed reasonably good convergence and mixing of our MCMC algorithm for the NLTCS data.
The proposed new dimension-grouping model provides a better fit in terms of WAIC and a parsimonious alternative  to traditional MMMs.

\begin{figure}[h!]
  \centering
  \includegraphics[width=0.95\textwidth]{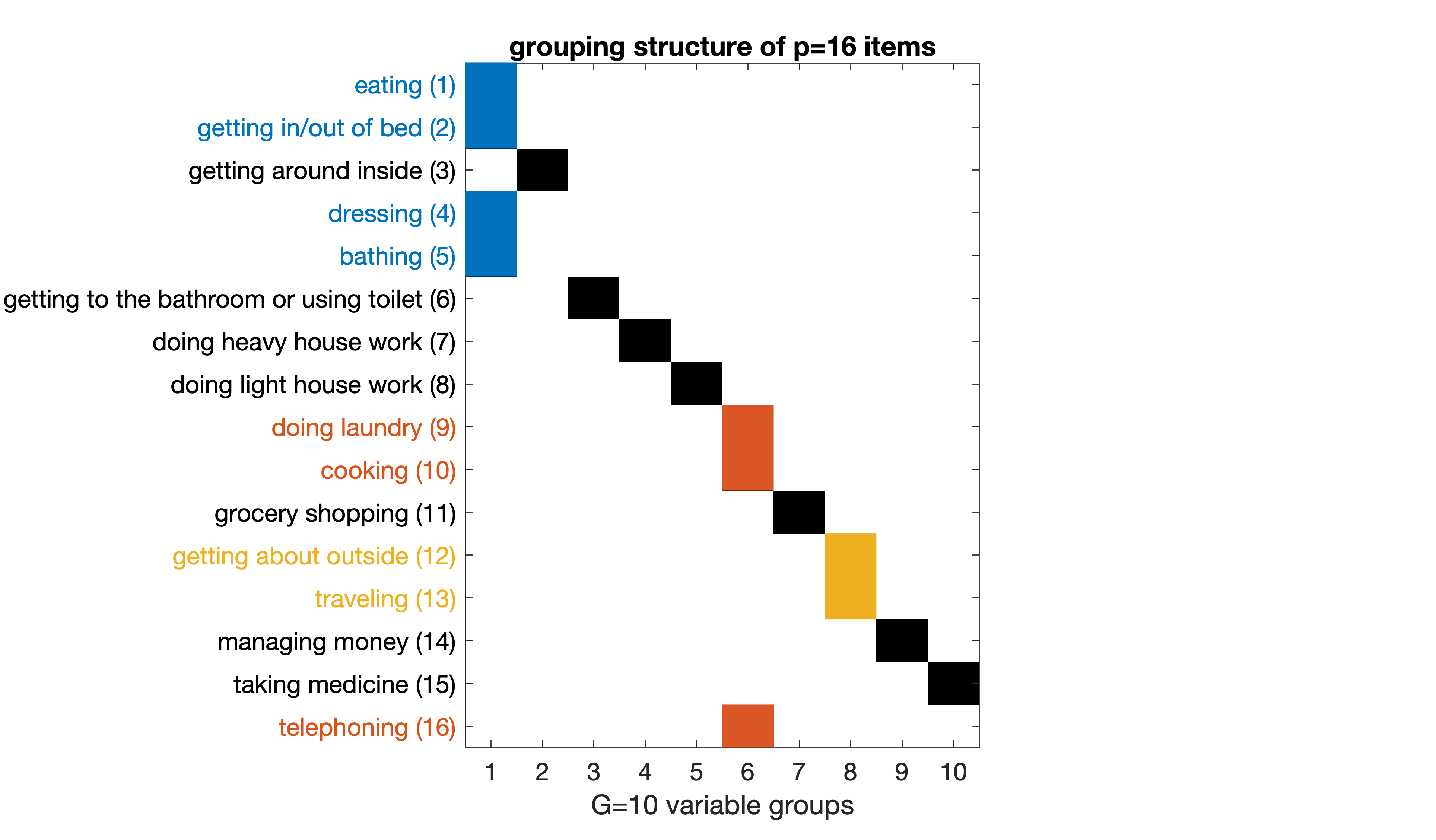}
\caption{Estimated variable grouping structure $\bo s$ (i.e., $\LL$) for the NLTCS data with $(G^\star, K^\star) = (10, 9)$. The first six items are ADL  ``activities of daily living'' and the remaining ten items are IADL ``instrumental activities of daily living''. Out of the $G^\star=10$ variable groups, the three groups containing multiple items are colored coded in blue ($j=1,2,4,5$), red ($j=9,10,16$), and yellow ($j=12,13$) for better visualization.}
\label{fig-nltcs-groupmat}
\end{figure}

We provide the estimated $\overline\LL$ under the selected model with $G^\star=10$ and $K^\star=9$ in Figure \ref{fig-nltcs-groupmat}.
{The estimated variable groupings are given in Figure \ref{fig-nltcs-groupmat}. Out of the $G^\star=10$ groups, there are three groups that contain multiple items. In Figure \ref{fig-nltcs-groupmat}, the item labels of these three groups are colored in blue ($j=1,2,4,5$), red ($j=9,10,16$), and yellow ($j=12,13$) for better visualization. These groupings obtained by our model lead to several observations. First, four out of six ADL variables ($j=1,2,4,5$) are categorized into one group. This group of items are basic self-care activities that require limited mobility. 
Second, the three IADL variables ($j=9,10,16$) in one group may be related to traditional gender roles -- these items correspond to activities performed more frequently by women than by men.
Finally, the two items $j=12$ ``getting about outside'' and  $j=13$ ``traveling'' that require high level of mobility form another group.
}
Note that
such a model-based grouping of the items is different than the established groups (ADL and IADL), and could not have been obtained by applying previous mixed membership models~\citep{erosheva2007aoas}.

In addition to the variable grouping structures, we plot posterior means of the positive response probabilities $\overline\LLambda_{:,1,:}$ in Figure \ref{fig-nltcs-prob} for the selected model. For each survey item $j\in[p]$ and each extreme latent profile $k\in[K]$, the $\LLambda_{j,1,k}$ records the conditional probability of giving a positive response of being disabled on this item conditional on possessing the $k$th latent profile. 
The $K^\star=9$ profiles are quite well separated and can be interpreted as usual in mixed membership analysis. 
For example, in Figure \ref{fig-nltcs-prob}, the leftmost column for $k=1$ represents a relatively healthy latent profile, the rightmost column for $k=9$ represents a relatively severely disabled latent profile.
As for the Dirichlet parameters $\aaa$, their posterior means are 
$\overline{\aaa} = 
(0.0245,\;    0.0289,\;    0.0074,\;   0.0176,\;    0.0231,\;    0.0193,\;    0.0001,\;    0.0001,\;    0.0242)$.
Such small values of the Dirichlet parameters imply that membership score vectors tend to be dominated by one component for a majority of individuals. This observation is
consistent with \cite{erosheva2007aoas}.
Meanwhile, here we obtain a simpler model than that in \cite{erosheva2007aoas} as each subject can partially belong to up to $G$ latent profiles according to the grouping of variables, rather than $p=16$ ones as in the traditional MMMs.

\begin{figure}[h!]
    \centering
    \includegraphics[width=0.5\textwidth]{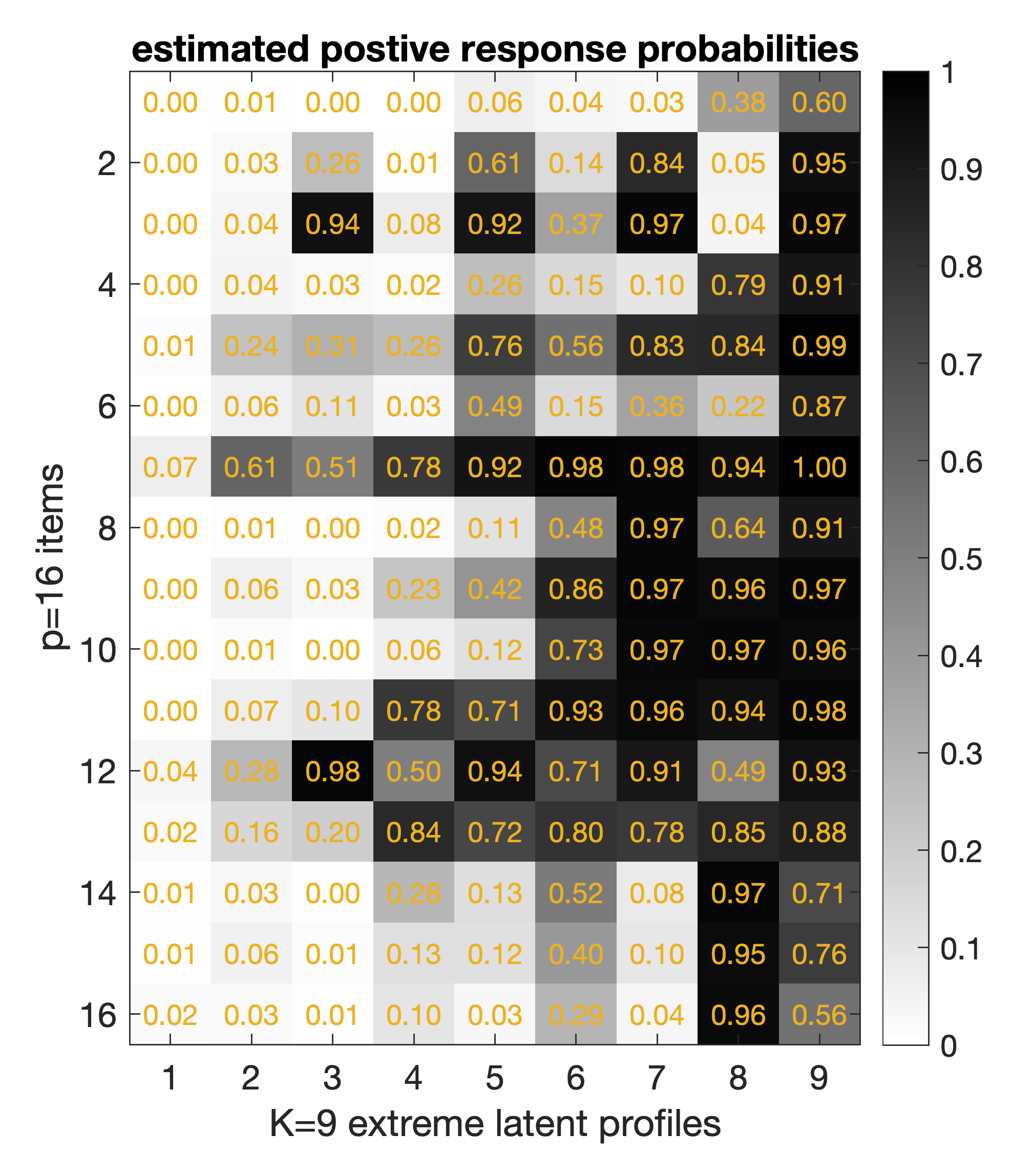}
    \caption{Estimated positive response probabilities $\LLambda_{:,1,:}$ for the NLTCS data with $(G^\star, K^\star) = (10, 9)$. Each column represents one extreme latent profile. Entries are conditional probabilities of giving a positive response ($1=\text{disabled}$) to each item given that latent profile.}
    \label{fig-nltcs-prob}
\end{figure}

{We emphasize again that the bag-of-words topic models make the exchangeability assumption, which is fundamentally different from, and actually more restrictive than, our Gro-M$^3$ when modeling non-exchangeable item response data. 
Specifically, the exchangeability assumption would force all the item parameters $\{\bo\lambda_{j,k}\in \mathbb R^{d}:~ k\in[K]\}$ across all the items $j\in[p]$ to be identical, which is unrealistic for the survey response data (or the personality test data to be analyed in Section \ref{sec-ipip}) in which different items clearly have different characteristics. 
For example, if one were to use a topic model such as LDA to analyze the NLTCS disability survey data, then a plot of the $16\times K$ conditional probability table like Figure \ref{fig-nltcs-prob} would not have been possible, because all the $p=16$ items would share the same $K$-dimensional vector of conditional Bernoulli probabilities.}

\subsection{International Personality Item Pool (IPIP) Personality Test Data}\label{sec-ipip}

We also apply the proposed method to analyze a personality test dataset containing multivariate \emph{polytomous} responses: the International Personality Item Pool (IPIP) personality test data. This dataset is publicly available from the  Open-Source Psychometrics Project website \verb|https://openpsychometrics.org/_rawdata/|.
The dataset contains $n_{\text{all}}=1005$ subjects' responses to $p=40$  Likert rated  personality test items in the International Personality Item Pool.
After dropping those subjects who have missing entries in their responses, there are $n=901$ complete response vectors left.
Each subject's observed response vector is $40$-dimensional, where each dimension ranges in $\{1,2,3,4,5\}$ with $d_1=d_2=\cdots=d_p=5$ categories. Each of these 40 items was designed to measure one of the four personality factors: Assertiveness (short as ``AS''),  Social confidence (short as ``SC''), Adventurousness (short as ``AD''), and  Dominance (short as ``DO''). Specifically, items 1-10 measure AS, items 11-20 measure SC, items 21-30 measure AD, and items 31-40 measure DO.
The responses of certain reversely-termed items (i.e., items 7--10, 16--20, 25--30) are preprocessed to be $\tilde y_{ij} = 6-y_{ij}$. We apply our new model to analyze this dataset for various numbers of variable groups $G \in \{3,4,5,6,7\}$ and $K=4$ extreme latent profiles, and the WAIC selects the model with $G=5$ groups. We plot the posterior mode of the estimated grouping matrix in Figure \ref{fig-ipip}, together with the content of each item.

Figure \ref{fig-ipip} shows that our new modeling component of variable grouping is able to reveal the item blocks that measure different personality factors in a totally unsupervised manner. 
Moreover, the estimated variable grouping cuts across the four established personality factors to uncover a more nuanced structure. For example, the group of items $\{$AS1, SC4, SC10$\}$ concerns the verbal expression aspect of a person; the group of items $\{$AS3--AS10, SC5, SC7$\}$ concerns a person's intention to lead and influence other people.
In summary, for this new personality test dataset, the proposed Gro-M$^3$ not only provides better model fit than the usual GoM model (since $G=5 \ll p=40$ is selected by WAIC), but also enjoys interpretability and uncovers meaningful subgroups of the observed variables.

\begin{figure}[h!]
\centering
    \includegraphics[width=0.8\textwidth]{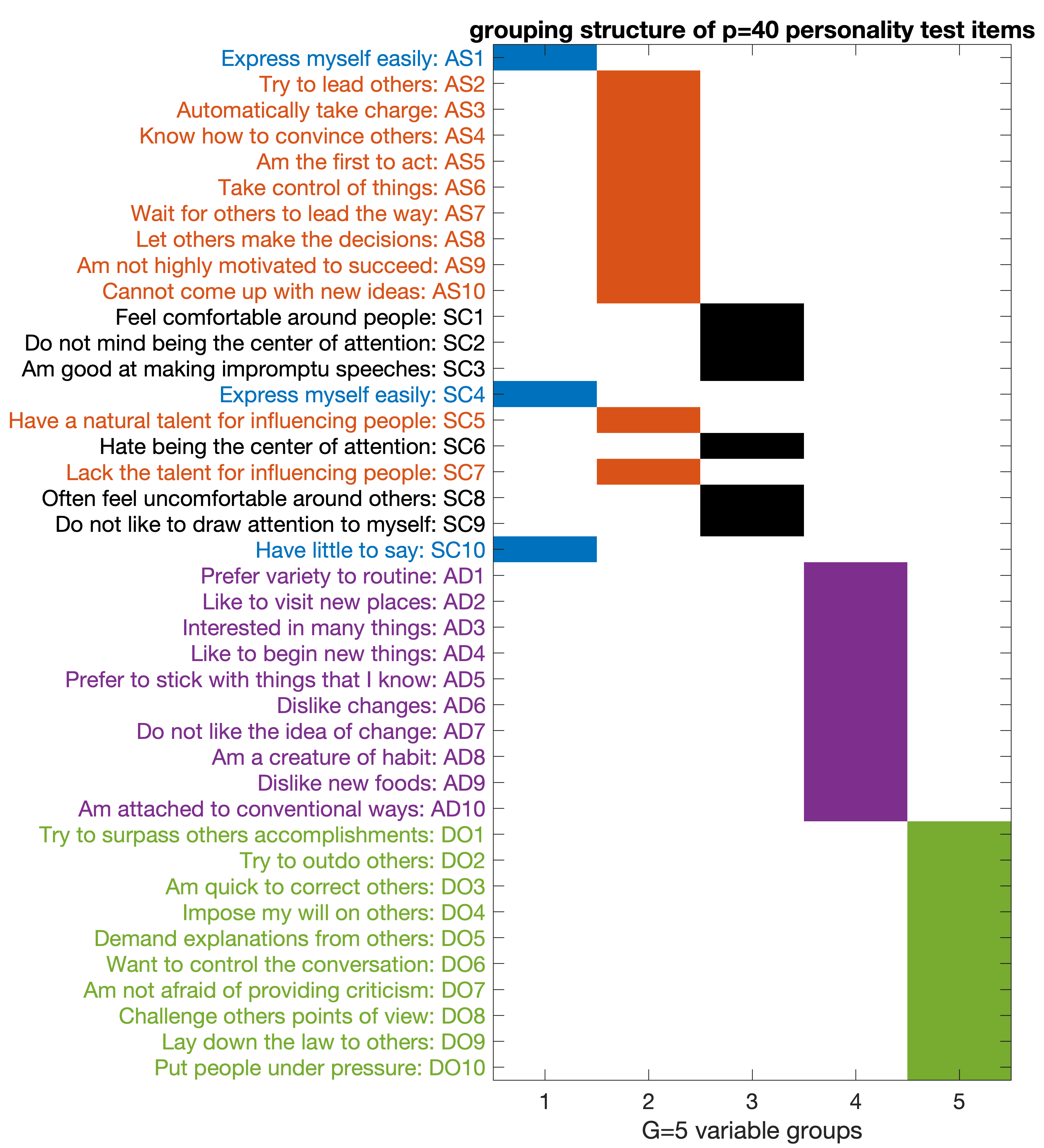}
    \caption{{IPIP personality test items grouping structure estimated from our Gro-M$^3$. Item type abbreviations are: ``AS'' represents ``Assertiveness'', ``SC'' represents ``Social confidence'', ``AD'' represents ``Adventurousness'', and ``DO'' represents ``Dominance''.}}
\label{fig-ipip}
\end{figure}

We also conduct experiments to compare our probabilistic hybrid decomposition Gro-M$^3$ with the probabilistic CP decomposition (the latent class model in \cite{dunson2009lcm}) and the probabilistic Tucker decomposition (the GoM model in \cite{erosheva2007aoas}) on the IPIP personality test data. 
After fitting each tensor decomposition method to the data, we calculate the model-based Cramer's V measure between each pair of items and see how different methods perform on recovering meaningful item dependence structure.
Figure \ref{fig-ipip-pair} presents the model-based pairwise Cramer's V calculated using the three tensor decompositions, along with the model-free Cramer's V calculated directly from data.
Figure \ref{fig-ipip-pair} shows that our Gro-M$^3$ decomposition clearly outperforms probabilistic CP and Tucker decomposition in recovering the meaningful block structure of the personality test items. 
\color{black}

\begin{figure}[h!]
    \centering
    \resizebox{0.75\textwidth}{!}{
    \includegraphics[height=3cm]{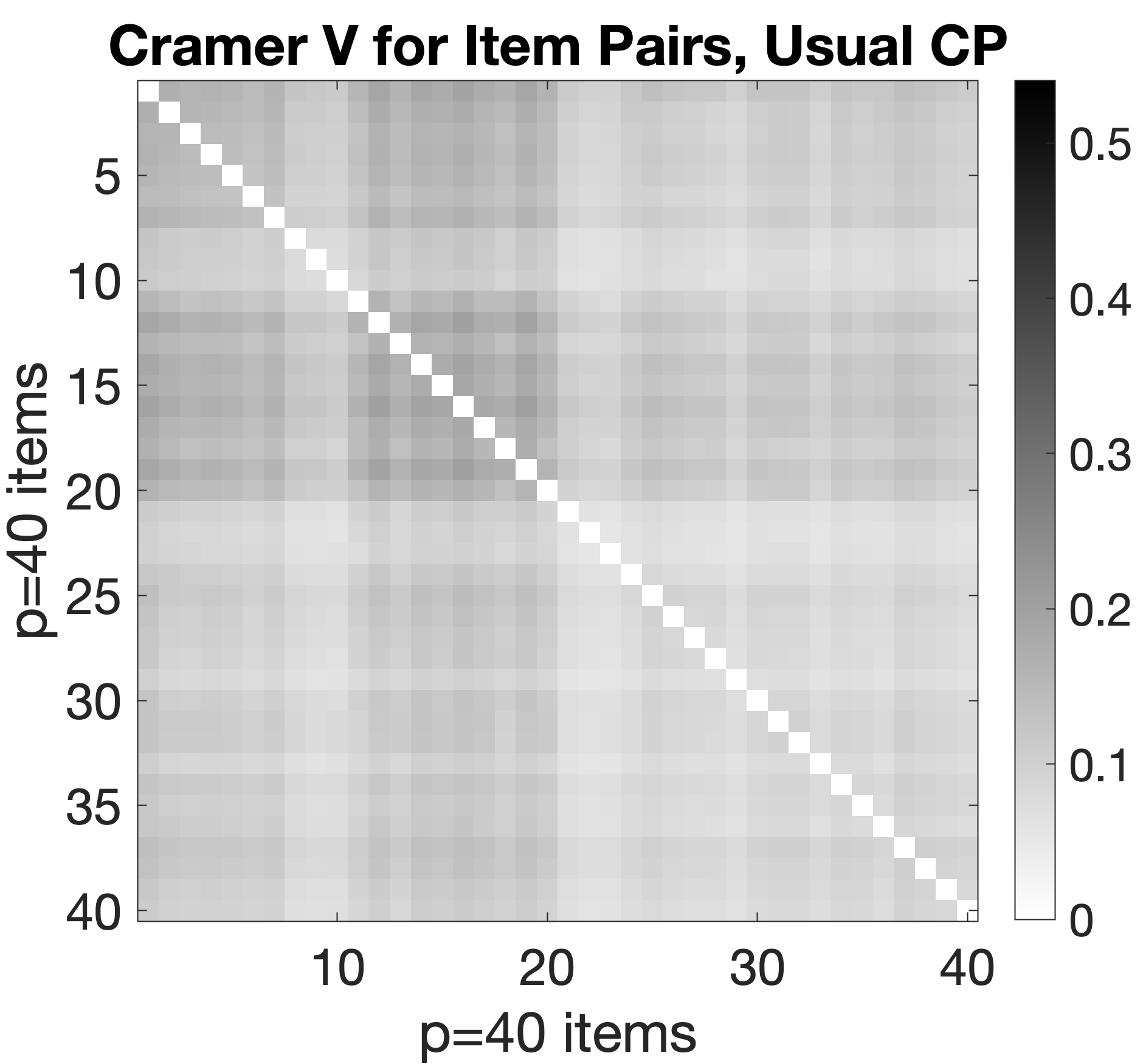}
    ~
    \includegraphics[height=3cm]{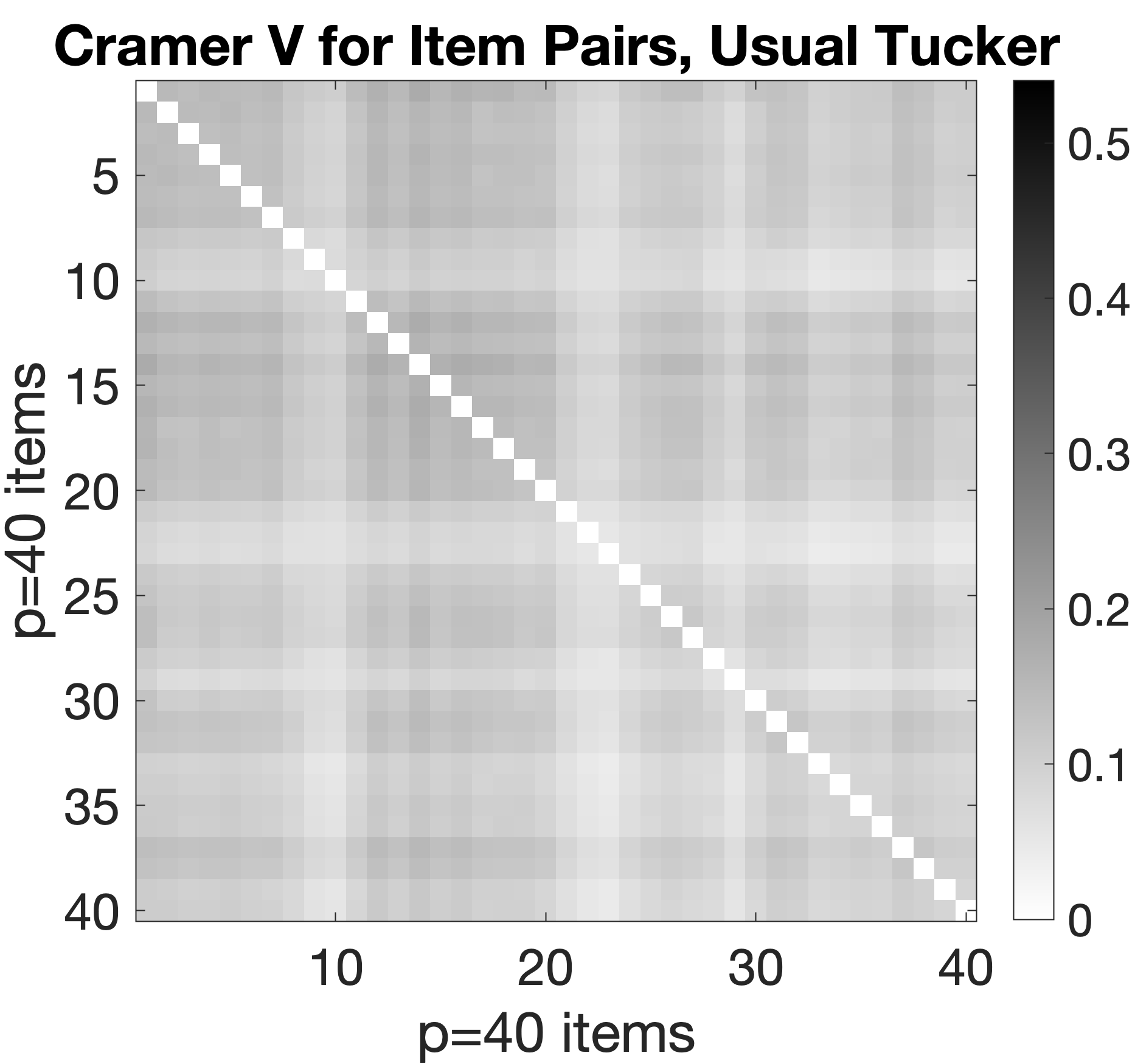}
    }
    
    \bigskip
    \resizebox{0.75\textwidth}{!}{
    \includegraphics[height=3cm]{figures/ipip_crv_grom3.png}
    ~
    \includegraphics[height=3cm]{figures/ipip_crv_sample.png}
    }
    \caption{{Upper two panels: Cramer's V posterior means for item pairs obtained using the usual CP decomposition (latent class model) and the usual Tucker decomposition (grade of membership model). Bottom left: Cramer's V posterior means for item pairs obtained using the Gro-M$^3$. Bottom right: Sample Cramer's V for item pairs calculated directly from data.}}
    \label{fig-ipip-pair}
\end{figure}

\color{black}

\section{Discussion}\label{sec-discuss}

We have proposed a new class of mixed membership models for multivariate categorical data, dimension-grouped mixed membership models (Gro-M$^3$s), studied its model identifiability, and developed a Bayesian inference procedure for Dirichlet Gro-M$^3$s.
On the methodology side, the new model strikes a nice balance between model flexibility and model parsimony.
Considering popular existing latent structure models for multivariate categorical data, the Gro-M$^3$ bridges the parsimonious yet insufficiently flexible Latent Class Model (corresponding to CP decomposition of probability tensors) and the very flexible yet not parsimonious Grade of Membership Model (corresponding to Tucker decomposition of probability tensors).
On the theory side, we establish the identifiability of population parameters that govern the distribution of Gro-M$^3$s. 
The quantities shown to be identifiable include not only the continuous model parameters, but also the key discrete structure -- how the variables' latent assignments are partitioned into groups.
The obtained identifiability conclusions lay a solid foundation for reliable statistical analysis and real-world applications. 
We have performed Bayesian estimation for the new model using a Metropolis-Hastings-within-Gibbs sampler. Numerical studies show that the method can accurately estimate the quantities of interest, empirically validating the identifiability results.

{For the special case of binary responses with $d_1=\cdots=d_p=2$,
as pointed out by a reviewer, models with Bernoulli-to-latent-Poisson link in \cite{zhou2016augmentable} and the Bernoulli-to-latent-Gaussian link in multivariate item response theory models in \cite{embretson2013item} are useful tools that can capture certain lower-dimensional latent constructs. 
Our model differs from these models in terms of statistical and practical interpretation.
In our Gro-M$^3$, each subject's latent variables are a mixed membership vector $\bo\pi_i$ ranging in the probability simplex $\Delta^{K-1}$, and can be interpreted as that each subject is a partial member of each of the $K$ extreme latent profiles.
For $k\in[K]$, the $k$th extreme latent profile also can be directly interpreted by inspecting the estimated item parameters $\{\bo\lambda_{j,k}:~j\in[p]\}$. Geometrically, the entry $\pi_{ik}$ captures the relative proximity of each subject to the $k$th extreme latent behavioral profile.
Such an interpretation of individual-level mixtures are highly desirable in applications such as social science surveys \citep{erosheva2003bayesian} and medical diagnosis \citep{woodbury1978gom}, where each extreme latent profile represents a prototypical response pattern. 
Therefore, in these applications, the mixed membership modeling is more interpretable and preferable to using a nonlinear transformation of certain underlying Gaussian or Poisson latent variables to model binary matrix data (such as the Bernoulli-to-latent-Poisson or Bernoulli-to-latent-Gaussian link).}

{
We remark that our proposed Gro-M$^3$ covers the usual GoM model as a special case. In fact, the GoM model can be readily recovered by setting our grouping matrix $\mathbf{L}$ to be the $p\times p$ identity matrix (i.e., $\mathbf{L} = \mathbf I_p$).
In terms of practical estimation, we can simply fix $\mathbf{L}=\mathbf I_p$ throughout our MCMC iterations and estimate other quantities in the same way as in our current algorithm.
Using this approach, we have compared the performance of our flexible Gro-M$^3$ and the classical GoM model in the real data analyses.
Specifically, for both the NLTCS disability survey data and the IPIP personality test data, fixing $\mathbf{L}=\mathbf I_p$ with $G=p$ variable groups gives larger WAIC values than the selected more parsimonious model with $G \ll p$. This indicates that the traditional GoM model is not favored by the information criterion and gives a poorer model fit to the data.
We also point out that our MCMC algorithm can be viewed as a novel Bayesian factorization algorithm for probability tensors, in a similar spirit to the existing Bayesian tensor factorization methods such as \cite{dunson2009lcm} and \cite{zhoudunson2015}. 
Our Bayesian Gro-M$^3$ factorization outperforms usual probabilistic tensor factorizations in recovering the item dependence structure in the IPIP personality data analysis.
Therefore, we view our proposed MCMC algorithm as contributing a new type of tensor factorization approach with nice uniqueness guarantee (i.e., identifiability guarantee) and a Bayesian factorization procedure with good empirical performance.
}

Our modeling assumption of the variable grouping structure can be useful to other related models.
For example, \cite{manrique2014aoas} proposed a longitudinal MMM to capture heterogeneous pathways of disability and cognitive trajectories of elderly population for disability survey data.
The proposed dimension-grouping assumption can provide an interesting new interpretation to such longitudinal settings. Specifically, when survey items are answered in multiple time points, it may be plausible to assume that a subject's latent profile locally persists for a block of items, before potentially switching to a different profile for the next block of items. This can be readily accommodated by the dimension-grouping modeling assumption, with the slight modification that items belonging to the same group should be forced to be close in time.
Our identifiability results can be applied to this setup. Similar computational procedures can also be developed.
Furthermore, although this work focuses on modeling multivariate categorical data,  the applicability of the new dimension-grouping assumption is not limited to such data. A similar assumption may be made in other mixed membership models; examples include the generalized latent Dirichlet models for mixed data types studied in \cite{zhao2018gmm}.

In terms of identifiability, the current work has focused on the population quantities, including the variable grouping matrix $\LL$, the conditional probability tables $\bo\Lambda$, and the Dirichlet parameters $\aaa$. 
In addition to these \textit{population parameters}, an interesting future question is the identification of individual mixed membership proportions $\{\bo\pi_i;\, i=1,\ldots,n\}$ for subjects \textit{in the sample}. Studying the identification and accurate estimation of $\bo\pi_i$'s presumably requires quite different conditions from ours. A recent work \citep{mao2020jasa} considered a similar problem for mixed membership stochastic block models for network data. 
Finally, in terms of estimation procedures, in this work we have employed a Bayesian approach to Dirichlet Gro-M$^3$s, and the developed MCMC sampler shows excellent computational performance. 
In the future, it would also be interesting to consider method-of-moments estimation for the proposed models related to  \cite{zhao2018gmm} and \cite{tan2017partitioned}.

{This work has focused on proposing a new interpretable and identifiable mixed membership model for multivariate categorical data, and our MCMC algorithm has satisfactory performance in real data applications. 
In the future, it would  be interesting to develop scalable and online variational inference methods, which would make the model more applicable to massive-scale real-world datasets.
We expect that it is possible to develop variational inference algorithms similar in spirit to \cite{blei2003lda} for topic models and \cite{airoldi2008msbm} for mixed membership stochastic block models to scale up computation.
In addition, just as the hierarchical Dirichlet process \citep{teh2006hdp} is a natural nonparametric generalization of the parametric latent Dirichlet allocation \citep{blei2003lda} model,
it would also be interesting to generalize our Gro-M$^3$ to the nonparametric Bayesian setting to automatically infer $K$ and $G$. 
Developing a method to automatically infer $K$ and $G$  will be of great practical value, because in many situations there might not be enough prior knowledge for these quantatities. We leave these directions for future work. 
}

\singlespacing

\section*{Acknowledgements}
This work was partially supported by NIH grants R01ES027498 and R01ES028804, NSF grants  DMS-2210796, SES-2150601 and SES-1846747,  and IES grant  R305D200015. This project also received funding from the European Research Council under
the European Union’s Horizon 2020 research and innovation program (grant agreement No.
856506).
The authors thank the Action Editor Dr. Mingyuan Zhou and three anonymous reviewers for constructive and helpful comments that helped improve this manuscript.

\vskip 0.2in
\bibliographystyle{apalike}
\bibliography{ref}


\clearpage
\begin{center}
    {\LARGE \bf Supplementary Material}
\end{center}
\medskip

This Supplementary Material contains two sections. The first section Supplement A contains the proofs of all the theoretical results in the paper. The second section Supplement B presents a note on the pairwise mutual information measures between categorical variables.

\section*{Supplement A: Proofs of Theoretical Results}

\subsection{Proof of Theorem \ref{thm-attr1}}
For notational simplicity, from now on we will omit the subscript $i$ of subject-specific random variables without loss of generality; all such variables including $\bo y$ and $\bo z$ should be understood as associated with a random subject.
Denote by $\bo z = (z_1,\ldots,z_G) \in[K]^G$ a configuration of the latent profiles realized for the $G$ groups of variables.
Recall that given a fixed grouping matrix $\LL$, the associated group assignment vector $\bo s=(s_1,\ldots,s_p)$ is defined as $s_j=g$ if and only if $\ell_{j,g}=1$. 
We next introduce a new notation.
For each variable $j\in[p]$, each category $c_j\in[d_j]$, and each possible latent profile configuration $\bo z\in \{1,\ldots,K\}^G$, define a new parameter $\gamma_{j, c_j, \bo z}$ to be
\begin{align}\label{eq-newlambda}
    \gamma_{j, c_j, \bo z}
    =&~ \lambda_{j, c_j, z_{s_j}}.
\end{align}
Collect all the $\gamma$-parameters in $\bo\Gamma=(\gamma_{j, c_j, \bo z})$, then $\bo\Gamma$ is a three-way array (which is a tensor of size $p\times d\times K^G$ if $d_1=\cdots=d_p=d$) since $j\in[p]$, $c_j\in[d_j]$, and $\bo z\in [K]^G$.
For each $j\in[p]$, we will denote the $d_j\times K^G$ matrix  $\bo\Gamma_{j,:,:}$ by $\bo\Gamma_j$ for simplicity.
The representation in \eqref{eq-newlambda} implies that many entries in $\bo\Gamma$ are equal. 
Specifically, for two arbitrary latent assignment vectors $\bo z=(z_{1}, \ldots, z_{G})$ and $\bo z'=(z'_{1}, \ldots, z'_{G})$ with $\bo z \neq \bo z'$, as long as $z_{s_j} = z'_{s_j}$ there would be $\gamma_{j, c_j, \bo z} = \gamma_{j, c_j, \bo z'}$. 
We choose to use the over-parameterization in \eqref{eq-newlambda} since this notation facilitates the study of identifiability through the underlying tensor decomposition structure, as will be revealed soon. In particular, the $\bo\Gamma_j$'s have the following property.

\begin{lemma}\label{lem-gamma}
Under the definition in \eqref{eq-newlambda}, 
for any set of indices $S\subseteq [p]$ such that $\{s_{j}:\, j\in S\} \supseteq  [G]$, there is
\begin{align}\label{eq-gamma}
    \bigotimes_{g=1}^G \left(\bigodot_{j\in S:\, s_j=g} \bo\Lambda_j \right) =  \bigodot_{j\in S} \bo\Gamma_{j}.
\end{align}
\end{lemma}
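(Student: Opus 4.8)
The plan is to establish the matrix identity \eqref{eq-gamma} column by column, using only the standard column structure of Kronecker and Khatri-Rao products together with the defining relation \eqref{eq-newlambda}. First I would record the two elementary facts to be used: (i) the $k$-th column of a Khatri-Rao product $\mathbf A\bigodot\mathbf B$ equals $\bo a_{\bcolon,k}\bigotimes\bo b_{\bcolon,k}$; and (ii) indexing the columns of a Kronecker product $\mathbf A\bigotimes\mathbf B$ by pairs $(k_1,k_2)$, the $(k_1,k_2)$-th column equals $\bo a_{\bcolon,k_1}\bigotimes\bo b_{\bcolon,k_2}$. Iterating (ii), the columns of $\bigotimes_{g=1}^G\mathbf M_g$ are naturally indexed by tuples $(k_1,\ldots,k_G)\in[K]^G$, the corresponding column being $\bigotimes_{g=1}^G(\mathbf M_g)_{\bcolon,k_g}$.

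Next I would set $S_g:=\{j\in S:\,s_j=g\}$, so that $S$ is the disjoint union of $S_1,\ldots,S_G$; the hypothesis $\{s_j:\,j\in S\}\supseteq[G]$ is exactly what guarantees that every $S_g$ is nonempty, so each Khatri-Rao product $\bigodot_{j\in S_g}\bo\Lambda_j$ appearing on the left of \eqref{eq-gamma} is a legitimate nonempty product, and a dimension count confirms both sides of \eqref{eq-gamma} are $\bigl(\prod_{j\in S}d_j\bigr)\times K^G$ matrices. Applying (i) with $\mathbf M_g=\bigodot_{j\in S_g}\bo\Lambda_j$, the $(k_1,\ldots,k_G)$-th column of the left-hand side is $\bigotimes_{g=1}^G\bigl(\bigotimes_{j\in S_g}\bo\lambda_{j,\bcolon, k_g}\bigr)$. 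On the right-hand side, \eqref{eq-newlambda} gives that the $\bo z$-th column of $\bo\Gamma_j$ is $\bo\gamma_{j,\bcolon,\bo z}=\bo\lambda_{j,\bcolon, z_{s_j}}$, so by (i) again the $\bo z$-th column of $\bigodot_{j\in S}\bo\Gamma_j$ is $\bigotimes_{j\in S}\bo\lambda_{j,\bcolon, z_{s_j}}$. Matching $\bo z=(k_1,\ldots,k_G)$ and using $z_{s_j}=k_g$ whenever $j\in S_g$, both columns are Kronecker products of the same family of vectors $\{\bo\lambda_{j,\bcolon, k_{s_j}}:\,j\in S\}$; they can differ only in the order in which the $|S|$ factors appear (the left side multiplies them group-by-group, the right side in the natural order of $S$), which amounts to left multiplication by a single fixed permutation matrix, the same for every column. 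Under the ordering convention used throughout for the arms of the unfolded tensor this reordering is absorbed, and the two matrices coincide.

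The argument is essentially bookkeeping, and the one place requiring care is keeping track of the Kronecker-factor orderings in the last step; there one should make explicit that the permutation relating the two sides is value-independent (a ``perfect shuffle'') and hence inconsequential, since none of the subsequent rank computations or appeals to Kruskal's theorem is sensitive to the ordering of the rows. I do not expect any genuine obstacle beyond this.
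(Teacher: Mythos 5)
Your proposal is correct and follows essentially the same route as the paper's proof, which also verifies the identity by a direct dimension count followed by matching the $(\bo c,\bo z)$-indexed entries of both sides via $\gamma_{j,c_j,\bo z}=\lambda_{j,c_j,z_{s_j}}$ and the regrouping $\prod_{j\in S}\lambda_{j,c_j,z_{s_j}}=\prod_{g=1}^G\prod_{j\in S:\,s_j=g}\lambda_{j,c_j,z_g}$. The only difference is that you make explicit the fixed row permutation relating the group-by-group and natural orderings of the Kronecker factors, whereas the paper absorbs this into a ``without loss of generality'' ordering of $S$; your version is slightly more careful on this point but substantively identical.
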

%

%
Now we can equivalently rewrite the previous model specification \eqref{eq-ctucker} as follows,
\begin{align}\notag
    &~\mathbb P^{\text{C-M}^3}(y_1=c_1, \ldots, y_p = c_p \mid \LL, \bo\Lambda, \bo\Phi) 
    = \pi_{c_1,\ldots,c_p}
    \\ \notag
    =&~ \sum_{z_1=1}^K \cdots\sum_{z_G=1}^K \phi_{z_1,\ldots,z_G}\prod_{j=1}^p \lambda_{j, c_j, z_{s_j}}
    = \sum_{z_1=1}^K \cdots\sum_{z_G=1}^K \phi_{z_1,\ldots,z_G}\prod_{j=1}^p \gamma_{j, c_j, \bo z}
    \\ \label{eq-c3m-repar}
    =&~ \sum_{\bo z\in[K]^G} \phi_{\bo z} \prod_{j=1}^p \gamma_{j, c_j, \bo z}
    =
    \mathbb P(y_1=c_1, \ldots, y_p = c_p \mid \bo\Gamma, \bo\Phi),
\end{align}
where $\bo c=(c_1,\ldots, c_p)\in\times_{j=1}^p [d_j]$.
Denote by $\bo\Phi=(\phi_{z_1,\ldots,z_G})$ a $G$-th order tensor of size $K\times \cdots \times K$. 
Denote by $\vect(\bo\Pi)$ the vectorized version of $\bo\Pi$, so $\vect(\bo\Pi)$ is a vector of length $\prod_{j=1}^p d_j$; in particular, this vector has entries defined as follows,
\begin{align}\label{eq-defvec}
    \vect(\bo\Pi)_{c_1 + (c_2-1)d_1 + \cdots + (c_p-1)d_1\cdots d_{p-1}} = 
    \pi_{c_1,c_2,\cdots,c_p}
\end{align}
for any $\bo c=(c_1,\ldots, c_p) \in \times_{j=1}^p [d_j]$.
Suppose alternative parameters $(\overline\LL, \overline{\bo\Lambda}, \overline{\bo\Phi})$ lead to the same distribution of the observed variables; that is
    $\mathbb P(\bo y = \bo c\mid \LL, \bo\Lambda, \bo\Phi)
    =
    \mathbb P(\bo y = \bo c\mid \overline\LL, \overline{\bo\Lambda}, \overline{\bo\Phi})$
holds for each possible response pattern $\bo c\in \times_{j=1}^p [d_j]$.
Then by the equivalence in \eqref{eq-c3m-repar}, we also have 
$\mathbb P(\bo y = \bo c\mid \bo\Gamma, \bo\Phi)
    =
    \mathbb P(\bo y = \bo c\mid  \overline{\bo\Gamma}, \overline{\bo\Phi})$ for all $\bo c\in \times_{j=1}^p [d_j]$.

The following two lemmas will be useful.

\begin{lemma}\label{lem-vect}
Without loss of generality, suppose the first $\sum_{j=1}^p \ell_{j,1}$ variables belong to the first group, the second $\sum_{j=1}^p \ell_{j,2}$ variables belong to the second group, etc. That is, the matrix $\LL$ takes a block-diagonal form.
The Gro-M$\,^3$ in \eqref{eq-c3m-repar} implies the following identity
\begin{align}\label{eq-vecpi}
    \vect(\bo\Pi) = 
     \left\{\bigotimes_{g=1}^G \bigodot_{j:\, \ell_{j,g}=1} \bo\Lambda_j\right\}
     \cdot 
     \vect(\bo\Phi).
\end{align}
\end{lemma}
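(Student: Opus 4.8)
The plan is to derive \eqref{eq-vecpi} from two ingredients already available in the excerpt: the reparametrized form \eqref{eq-c3m-repar} of the model, and the algebraic identity of Lemma~\ref{lem-gamma} relating the Khatri--Rao product of the over-parametrized arms $\bo\Gamma_j$ to the Kronecker product of the group-wise Khatri--Rao products of the $\bo\Lambda_j$. So the proof splits into a ``CP-vectorization'' step and an ``apply Lemma~\ref{lem-gamma}'' step.

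First I would start from \eqref{eq-c3m-repar}, namely $\pi_{c_1,\ldots,c_p}=\sum_{\bo z\in[K]^G}\phi_{\bo z}\prod_{j=1}^p\gamma_{j,c_j,\bo z}$. Treating the multi-index $\bo z\in[K]^G$ as a single label running over $K^G$ values (ordered so that $z_1$ varies fastest, matching the convention \eqref{eq-defvec} used for $\vect(\bo\Phi)$), this is exactly a CP/PARAFAC decomposition of the probability tensor $\bo\Pi$ with loading matrices $\bo\Gamma_1,\ldots,\bo\Gamma_p$ (of sizes $d_1\times K^G,\ldots,d_p\times K^G$) and weight vector $\vect(\bo\Phi)$. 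I would then establish the vectorized form $\vect(\bo\Pi)=\bigl(\bigodot_{j=1}^p\bo\Gamma_j\bigr)\vect(\bo\Phi)$ by a direct entrywise check: by the definition of the column-wise Kronecker product, the $\bo c$-th row of $\bigodot_{j=1}^p\bo\Gamma_j$ has $\bo z$-entry $\prod_{j=1}^p\gamma_{j,c_j,\bo z}$, so multiplying by $\vect(\bo\Phi)$ reproduces $\sum_{\bo z}\phi_{\bo z}\prod_{j=1}^p\gamma_{j,c_j,\bo z}=\pi_{c_1,\ldots,c_p}$; the block-diagonal form of $\LL$ ensures the coordinate ordering $(c_1,\ldots,c_p)$ induced by iterating this product is the one used to define $\vect(\bo\Pi)$ in \eqref{eq-defvec}.

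Second I would invoke Lemma~\ref{lem-gamma} with $S=[p]$. Since every group is occupied we trivially have $\{s_j:\,j\in[p]\}\supseteq[G]$, and for $S=[p]$ the set $\{j\in S:\,s_j=g\}$ is exactly $\{j:\,\ell_{j,g}=1\}$, so the lemma gives $\bigodot_{j=1}^p\bo\Gamma_j=\bigotimes_{g=1}^G\bigl(\bigodot_{j:\,\ell_{j,g}=1}\bo\Lambda_j\bigr)$. Substituting this into the identity from the previous step yields \eqref{eq-vecpi}.

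The main obstacle is bookkeeping rather than substance: one must keep three index orderings mutually consistent --- the one defining $\vect(\bo\Pi)$, the one defining $\vect(\bo\Phi)$, and the one produced by iterating the $\bigotimes$ (across groups) and $\bigodot$ (within groups) operations. The block-diagonal assumption on $\LL$, stated in the lemma precisely for this reason, is what lets these orderings coincide without inserting a permutation matrix; once the conventions are pinned down, the argument is an entrywise verification plus the cited Lemma~\ref{lem-gamma}, with no analytic difficulty.
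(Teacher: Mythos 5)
Your proposal is correct and follows essentially the same route as the paper: the paper's own proof is an entrywise verification that expands $\prod_{j=1}^p\gamma_{j,c_j,\bo z}=\prod_{g=1}^G\prod_{j:\,\ell_{j,g}=1}\lambda_{j,c_j,z_g}$ and matches it to the corresponding entry of $\bigotimes_{g=1}^G\bigodot_{j:\,\ell_{j,g}=1}\bo\Lambda_j$, which is exactly the content of your two steps (CP-vectorization in the $\bo\Gamma_j$'s followed by Lemma~\ref{lem-gamma} with $S=[p]$), just written inline rather than modularized. Your explicit attention to keeping the three index orderings consistent is a sound handling of the only real bookkeeping point.
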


\begin{lemma}\label{lem-krkr}
    Suppose there are two disjoint sets of $G$ observed variables $S^{(1)} = \{j_1^{(1)}, \ldots, j_G^{(1)}\}$ and $S^{(2)} = \{j_1^{(2)}, \ldots, j_G^{(2)}\}$ satisfying $s_{j^{(1)}_g} = s_{j^{(2)}_g} = g$ for each $g=1,\ldots,G$. Then 
    \begin{align}\label{eq-krkr}
      \bigotimes_{g=1}^G 
      \left\{\bo\Lambda_{j^{(1)}_g} \bigodot \bo\Lambda_{j^{(2)}_g} \right\}
      =
      \left\{\bigotimes_{g=1}^G \bo\Lambda_{j^{(1)}_g}\right\} \bigodot 
      \left\{\bigotimes_{g=1}^G \bo\Lambda_{j^{(2)}_g}\right\} ~~\text{up to a permutation of rows}.
    \end{align}
    If there further is $s_{j^{(3)}_G} = G$ for some $j^{(3)}_G \in [p]$, then up to a permutation of rows there is
    \begin{align}\label{eq-krkr2}
      &~\bigotimes_{g=1}^{G-1}
      \left\{\bo\Lambda_{j^{(1)}_g} \bigodot \bo\Lambda_{j^{(2)}_g} \right\}
      \bigotimes 
      \left\{\bo\Lambda_{j^{(1)}_G} \bigodot \bo\Lambda_{j^{(2)}_G}
      \bigodot \bo\Lambda_{j^{(3)}_G}
      \right\}
      \\ \notag
      =&~
      \left\{\bigotimes_{g=1}^{G} \bo\Lambda_{j^{(1)}_g}\right\} \bigodot 
      \left\{\bigotimes_{g=1}^{G-1} \bo\Lambda_{j^{(2)}_g} \bigotimes \left(\bo\Lambda_{j^{(2)}_G}\bigodot \bo\Lambda_{j^{(3)}_G} \right) \right\}.
    \end{align}
\end{lemma}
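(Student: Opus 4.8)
Here is how I would prove Lemma~\ref{lem-krkr}. Everything reduces to one elementary fact about how the Kronecker product $\bigotimes$ and the Khatri--Rao product $\bigodot$ interact, which is then fed through a short induction; no earlier result in the paper is needed. The basic fact is: whenever $A,B$ have the same number of columns and $C,D$ have the same number of columns, there is a permutation matrix $\bo P$ (depending only on the row counts of $A,B,C,D$) with
\begin{equation}\label{eq-mixprod-sketch}
    (A\bigodot B)\bigotimes(C\bigodot D) \;=\; \bo P\big[(A\bigotimes C)\bigodot(B\bigotimes D)\big].
\end{equation}
I would prove \eqref{eq-mixprod-sketch} by comparing entries with explicit multi-indices: labeling a row of $A$ by $i_A$, of $B$ by $i_B$, and so on, and letting $k$ range over the shared columns of $A,B$ and $k'$ over those of $C,D$, the entry of $(A\bigodot B)\bigotimes(C\bigodot D)$ in row $((i_A,i_B),(i_C,i_D))$ and column $(k,k')$ equals $A_{i_A,k}B_{i_B,k}C_{i_C,k'}D_{i_D,k'}$, while the entry of $(A\bigotimes C)\bigodot(B\bigotimes D)$ in row $((i_A,i_C),(i_B,i_D))$ and column $(k,k')$ equals $A_{i_A,k}C_{i_C,k'}B_{i_B,k}D_{i_D,k'}$. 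The two arrays carry exactly the same entries and differ only by the row relabeling $(i_A,i_B,i_C,i_D)\mapsto(i_A,i_C,i_B,i_D)$, which is the asserted permutation of rows.

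Next I would promote \eqref{eq-mixprod-sketch} to a $G$-fold version: for matrices $A_1,\dots,A_G,B_1,\dots,B_G$ with $A_g,B_g$ sharing their column count for each $g$, there is a permutation matrix $\bo P$ (again depending only on the dimensions) with
\begin{equation}\label{eq-mixprodG-sketch}
    \bigotimes_{g=1}^G\big(A_g\bigodot B_g\big) \;=\; \bo P\Big[\Big(\bigotimes_{g=1}^G A_g\Big)\bigodot\Big(\bigotimes_{g=1}^G B_g\Big)\Big].
\end{equation}
The proof is induction on $G$, with $G=1$ trivial. For the step, write $P_t=\bigotimes_{g\le t}A_g$ and $Q_t=\bigotimes_{g\le t}B_g$, so that $\bigotimes_{g\le t+1}(A_g\bigodot B_g)=\big(\bigotimes_{g\le t}(A_g\bigodot B_g)\big)\bigotimes(A_{t+1}\bigodot B_{t+1})$; substitute the inductive hypothesis into the first factor, then apply \eqref{eq-mixprod-sketch} with $(A,B,C,D)=(P_t,Q_t,A_{t+1},B_{t+1})$, and use the Kronecker mixed-product rule $(\bo P M)\bigotimes N=(\bo P\bigotimes\II)(M\bigotimes N)$ to collect all permutation matrices into a single one. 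Since $P_t\bigotimes A_{t+1}=P_{t+1}$ and $Q_t\bigotimes B_{t+1}=Q_{t+1}$, this is \eqref{eq-mixprodG-sketch} at stage $t+1$.

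Finally I would read off the two claims. For \eqref{eq-krkr}, apply \eqref{eq-mixprodG-sketch} with $A_g=\bo\Lambda_{j^{(1)}_g}$ and $B_g=\bo\Lambda_{j^{(2)}_g}$; since every $\bo\Lambda_j$ has $K$ columns, all Khatri--Rao products in sight are well defined, and \eqref{eq-mixprodG-sketch} is verbatim \eqref{eq-krkr}. For \eqref{eq-krkr2}, apply \eqref{eq-mixprodG-sketch} with $A_g=\bo\Lambda_{j^{(1)}_g}$ for all $g$, $B_g=\bo\Lambda_{j^{(2)}_g}$ for $g\le G-1$, and $B_G=\bo\Lambda_{j^{(2)}_G}\bigodot\bo\Lambda_{j^{(3)}_G}$ (still $K$ columns); the associativity of $\bigodot$ identifies $A_G\bigodot B_G$ with $\bo\Lambda_{j^{(1)}_G}\bigodot\bo\Lambda_{j^{(2)}_G}\bigodot\bo\Lambda_{j^{(3)}_G}$, and then \eqref{eq-mixprodG-sketch} is exactly \eqref{eq-krkr2}. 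It is worth noting that neither the disjointness of $S^{(1)}$ and $S^{(2)}$ nor the conditions $s_{j^{(1)}_g}=s_{j^{(2)}_g}=g$ and $s_{j^{(3)}_G}=G$ enter the argument — they merely record how the lemma will be invoked in the identifiability proofs. The only part requiring genuine care is the index bookkeeping in \eqref{eq-mixprod-sketch}--\eqref{eq-mixprodG-sketch}, i.e.\ checking that the successive row relabelings compose into one honest permutation matrix of the correct size; that is mechanical, but is where I expect the (modest) main obstacle to lie.
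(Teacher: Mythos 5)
Your proof is correct and takes essentially the same route as the paper's: the paper checks that both sides have matching dimensions and then asserts that the equality follows ``by writing out and checking the individual elements,'' which is precisely the entrywise multi-index verification you carry out via the mixed-product identity \eqref{eq-mixprod-sketch} and the induction on $G$. If anything, your write-up is more complete than the paper's, and your observation that the hypotheses on $s_{j}$ and the disjointness of $S^{(1)},S^{(2)}$ play no role in the algebra (they only record how the lemma is invoked) is accurate.
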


We continue with the proof of Theorem \ref{thm-attr1}. 
Under the conditions of the theorem, without loss of generality we can assume that for each $g\in[G]$, the first three variables (among the $p$ ones) belonging to the $g$th group have their corresponding $\bo\Lambda_j$ full-column-rank; denote the indices of these three variables by $j^{(1)}_g, j^{(2)}_g, j^{(3)}_g$. For example, if $\LL$ takes the block diagonal form, then for $g=1$ such three variables are indexed by $\left\{j^{(1)}_1,\, j^{(2)}_1,\, j^{(3)}_1\right\} = \left\{1,2,3\right\}$; for $g=2$ they are indexed by $\left\{j^{(1)}_2,\, j^{(2)}_2,\, j^{(3)}_2\right\} = \left\{\sum_{j=1}^p \ell_{j,1}+1, \sum_{j=1}^p \ell_{j,1}+2, \sum_{j=1}^p \ell_{j,1}+3\right\}$, etc.
Define the following sets of variable indices,
\begin{align*}
    S^{(m)} = \left\{j^{(m)}_1, \ldots, j^{(m)}_G\right\} \text{ for }m=1,2,3;\quad
    S^{(0)} = \{1,\ldots,p\}\setminus\bigcup_{m=1}^3  S^{(m)}.
\end{align*}
Lemmas \ref{lem-vect} and  \ref{lem-krkr} imply that we can write $\vect(\bo\Pi)$ under the true parameters as follows
\begin{align}\notag
&~\vect(\bo\Pi)=
    \left\{\bigotimes_{g=1}^G \bigodot_{j:\, \ell_{j,g}=1} \bo\Lambda_j\right\}
     \cdot 
    \vect(\bo\Phi)\\ \notag
    = &~
    \left[\left\{\bigotimes_{g=1}^G \bo\Lambda_{j_g^{(1)}}\right\}
    \bigodot
    \left\{\bigotimes_{g=1}^G \bo\Lambda_{j_g^{(2)}}\right\}
    \bigodot
    \left\{ \bigotimes_{g=1}^G
    \left(\bo\Lambda_{j_g^{(3)}} \bigodot\left(\bigodot_{j\in S^{(0)}:\, \ell_{j,g}=1} \bo\Lambda_j\right) \right)
    \right\}
    \right]\cdot 
    \vect(\bo\Phi)\\[2mm] \label{eq-3way}
    \stackrel{(\star)}{=} &~
    \left[\left\{\bigodot_{g=1}^G \bo\Gamma_{j_g^{(1)}}\right\}
    \bigodot
    \left\{\bigodot_{g=1}^G \bo\Gamma_{j_g^{(2)}}\right\}
    \bigodot
    \left\{ \bigodot_{j\in S^{(3)} \cup S^{(0)}}
    \bo\Gamma_{j} 
    \right\}
    \right]\cdot 
    \vect(\bo\Phi).
\end{align}
The last equality $(\star)$ above follows from Lemma \ref{lem-gamma}, by noting that for each $m=1,2,3$, the index set $\{j_1^{(m)},\ldots,j_G^{(m)}\} = [K]$. The last equality in the above display results from the property of the Khatri-Rao product.
Define
\begin{align*}
    f^{(1)}(\bo\Gamma) := \bigodot_{g=1}^G \bo\Gamma_{j_g^{(1)}} = \bigodot_{j\in S^{(1)}} \bo\Gamma_j,
    \quad
    f^{(2)}(\bo\Gamma) := \bigodot_{g=1}^G \bo\Gamma_{j_g^{(2)}} = \bigodot_{j\in S^{(2)}} \bo\Gamma_j,
    \quad
    f^{(3)}(\bo\Gamma) := \bigodot_{j\in S^{(3)} \cup S^{(0)}}
    \bo\Gamma_{j}.
\end{align*}
It can be seen that the definitions of the above three functions $f^{(1)}(\cdot),~f^{(2)}(\cdot),~f^{(3)}(\cdot)$ of $\bo\Gamma$ only depend on the two sets of variable indices $S^{(1)}$ and $S^{(2)}$, which in turn are determined by the true grouping matrix $\LL$. 
Now \eqref{eq-3way} can be further written as
\begin{align*}
    \vect(\bo\Pi) = \left( f^{(1)}(\bo\Gamma) \bigodot f^{(2)}(\bo\Gamma) \bigodot f^{(3)}(\bo\Gamma) \right)\cdot \vect(\bo\Phi)
    = \bigodot_{j=1}^p \bo\Gamma_{j} \cdot \vect(\bo\Phi).
\end{align*}
So for true parameters $(\bo\Gamma, \bo\Phi)$ and alternative parameters $(\overline{\bo\Gamma}, \overline{\bo\Phi})$ that lead to the same distribution of the observed $\bo y$, we have
\begin{align}\label{eq-gamdec}
    \vect(\bo\Pi) = &\left( f^{(1)}(\bo\Gamma) \bigodot f^{(2)}(\bo\Gamma) \bigodot f^{(3)}(\bo\Gamma) \right)\cdot \vect(\bo\Phi)
    \\ \notag
    = &\left( f^{(1)}(\overline{\bo\Gamma}) \bigodot f^{(2)}(\overline{\bo\Gamma}) \bigodot f^{(3)}(\overline{\bo\Gamma}) \right)\cdot \vect(\overline{\bo\Phi}).
\end{align}
Recall that under the assumptions in the theorem and the current notation, for each $j\in S^{(1)}\cup S^{(2)}\cup S^{(3)} $ the matrix $\bo\Lambda_j$ has full column rank $K$. According to the property of the Kronecker product, the matrices $\bigotimes_{g=1}^G \bo\Lambda_{j_g^{(1)}}$ and $\bigotimes_{g=1}^G \bo\Lambda_{j_g^{(2)}}$ each has full column rank $K^G$.
Further, since $\bo\Lambda_{j_g^{(3)}}$ has full column rank $K$, the Khatri-Rao product $\bo\Lambda_{j_g^{(3)}} \bigodot\left(\bigodot_{j\in S^{(0)}:\, \ell_{j,g}=1} \bo\Lambda_j\right)$ must have full column rank $K$. Therefore the matrix $\left\{\bigotimes_{g=1}^G \left(\bo\Lambda_{j_g^{(3)}} \bigodot\left(\bigodot_{j\in S^{(0)}:\, \ell_{j,g}=1} \bo\Lambda_j\right) \right)\right\}$ also has full column rank $K^G$. 
By definition of $f^{(m)}(\bo\Gamma)$'s, the above full-rank assertions indeed mean that $f^{(1)}(\bo\Gamma)$, $f^{(2)}(\bo\Gamma)$, and $f^{(3)}(\bo\Gamma)$ all have full column rank $K^G$.

We next invoke a useful lemma on the uniqueness of three-way tensor decompositions, the Kruskal's theorem established in \cite{kruskal1977three}, and then proceed similarly as the proof procedures in \cite{allman2009}. For a matrix $\bm M$, its Kruskal rank is defined to be the largest number $r$ such that any $r$ columns of $\bm M$ are linearly independent. Denote the Kruskal rank of matrix $\bm M$ by $\text{rank}_K(\bm M)$.

\begin{lemma}[Kruskal's Theorem]\label{lem-kruskal}
	Suppose $\bm M_1, \bm M_2, \bm M_3$ are three matrices of dimension $a_m\times K$ for $m=1,2,3$, $\bm N_1, \bm N_2, \bm N_3$ are three matrices each with $K$ columns, and $\bigodot_{m=1}^3 \bm M_m = \bigodot_{m=1}^3 \bm N_m$. If $\rank_{K}( \bm M_1) + \rank_{K}(\bm  M_2) + \rank_{K}(\bm  M_3) \geq 2K + 2$, then there exists a permutation matrix $\bm P$ and three invertible diagonal matrices $\bm D_m$ with $\bm D_1 \bm D_2 \bm D_3 =\mathbf I_K$ and $\bm N_m =\bm  M_m \bm D_m \bm P$ for each $m=1,2,3$.
\end{lemma}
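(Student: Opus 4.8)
Kruskal's uniqueness theorem is a classical result, stated here without proof and cited to \citep{kruskal1977three, allman2009}; for completeness I outline the standard argument, whose two ingredients are Kruskal's \emph{permutation lemma} and a careful accounting of the ranks of contracted tensor slices. Throughout, the hypothesis $\bigodot_{m=1}^{3}\bm M_m=\bigodot_{m=1}^{3}\bm N_m$ is read as equality of the associated order-three CP tensors, $\sum_{k=1}^{K}\bm M_1^{(k)}\circ\bm M_2^{(k)}\circ\bm M_3^{(k)}=\sum_{k=1}^{K}\bm N_1^{(k)}\circ\bm N_2^{(k)}\circ\bm N_3^{(k)}$, where a superscript $(k)$ selects the $k$th column.

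First I would record the ``working conditions'' implied by the rank hypothesis. Since each $\rank_K(\bm M_m)\le K$, the bound $\sum_m\rank_K(\bm M_m)\ge 2K+2$ forces each $\rank_K(\bm M_m)\ge 2$; combined with the Khatri--Rao Kruskal-rank inequality $\rank_K(\bm A\bigodot\bm B)\ge\min\{K,\ \rank_K(\bm A)+\rank_K(\bm B)-1\}$, this shows that each $\bm M_m$ has only nonzero and pairwise non-parallel columns, that $\bm M_i\bigodot\bm M_j$ has full column rank $K$ for each pair $i\neq j$, and that $\bigodot_{m=1}^{3}\bm M_m$ -- the matrix whose columns are the vectorized rank-one tensors $\bigotimes_m\bm M_m^{(k)}$ -- has full column rank $K$, so those $K$ rank-one tensors are linearly independent. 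A short bootstrap comparing the ranks of the mode-$m$ matricizations of the two CP tensors transfers the analogous facts to the $\bm N_m$'s.

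Second I would invoke the permutation lemma: if $\bm A$ and $\bm A'$ each have $K$ columns, $\bm A$ has no zero column, and for every vector $\bm v$ the implication ``$(\bm A')^{\top}\bm v$ has at least $K-\rank_K(\bm A)$ zero entries'' $\Rightarrow$ ``$(\bm A')^{\top}\bm v$ has at least as many zero entries as $\bm A^{\top}\bm v$'' holds, then $\bm A'=\bm A\bm P\bm D$ for some permutation matrix $\bm P$ and nonsingular diagonal $\bm D$. To verify its hypothesis for $(\bm A,\bm A')=(\bm M_3,\bm N_3)$, I would fix $\bm v$ and contract the tensor equality in mode $3$, obtaining the matrix identity $\bm M_1\diag(\bm M_3^{\top}\bm v)\bm M_2^{\top}=\bm N_1\diag(\bm N_3^{\top}\bm v)\bm N_2^{\top}$. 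The left side has rank at most the number of nonzero entries of $\bm M_3^{\top}\bm v$ (count the nonzero diagonal entries), while, by the full-rank properties of $\bm N_1$, $\bm N_2$ established above, the right side has rank exactly the number of nonzero entries of $\bm N_3^{\top}\bm v$ as long as that number is small enough; equating ranks and ranging over $\bm v$ yields precisely the domination of zero counts that the permutation lemma needs. This gives $\bm N_3=\bm M_3\bm P_3\bm D_3$, and by symmetry among the modes $\bm N_m=\bm M_m\bm P_m\bm D_m$ for $m=1,2$ as well.

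Finally, substituting $\bm N_m=\bm M_m\bm P_m\bm D_m$ back into the tensor equality turns it into $\sum_{k}\bigotimes_{m}\bm M_m^{(k)}=\sum_{k}\big(\prod_m d_{m,k}\big)\bigotimes_{m}\bm M_m^{(\tau_m(k))}$, with $\tau_m$ the permutation carried by $\bm P_m$; using linear independence of the $K$ rank-one tensors $\bigotimes_m\bm M_m^{(j)}$ together with $\rank_K(\bm M_m)\ge 2$ (so a ``mismatched'' product cannot be proportional to any of them), one argues that the $\tau_m$ must all coincide -- yielding a common permutation $\bm P$ -- and that $\prod_m d_{m,k}=1$, i.e.\ $\bm D_1\bm D_2\bm D_3=\mathbf I_K$, which is the claim. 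The main obstacle is the permutation lemma and the slice-rank bookkeeping that verifies its hypothesis over \emph{all} functionals $\bm v$ and then aligns the three modes: this combinatorial core is the genuinely delicate part of Kruskal's argument (later streamlined by Stegeman and Sidiropoulos), whereas the opening reductions and the closing consistency check are comparatively routine.
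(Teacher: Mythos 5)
The paper does not prove this lemma: it is quoted verbatim as Kruskal's classical uniqueness theorem and attributed to \cite{kruskal1977three} (see also \cite{allman2009}), so there is no in-paper argument to compare against. Your decision to treat it the same way — cite it, then sketch the standard proof for completeness — is exactly right, and your outline is a faithful summary of the classical argument: the reduction showing each $\rank_K(\bm M_m)\ge 2$ and each pairwise Khatri--Rao product $\bm M_i\bigodot\bm M_j$ has full column rank $K$ (via the bound $\rank_K(\bm A\bigodot\bm B)\ge\min\{K,\rank_K(\bm A)+\rank_K(\bm B)-1\}$), the transfer of these facts to the $\bm N_m$ by comparing matricization ranks, the mode-$3$ contraction $\bm M_1\diag(\bm M_3^{\top}\bm v)\bm M_2^{\top}=\bm N_1\diag(\bm N_3^{\top}\bm v)\bm N_2^{\top}$ feeding the permutation lemma, and the final alignment of the three permutations with $\bm D_1\bm D_2\bm D_3=\mathbf I_K$. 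Two places where your sketch elides genuine work, and where a full write-up would need care: the exact threshold in the permutation lemma's hypothesis (the standard formulation quantifies over $\bm v$ with $\omega(\bm N_3^{\top}\bm v)\le K-\rank(\bm M_3)+1$, with $\omega$ counting nonzeros, and the rank appearing there is the ordinary rank, not the Kruskal rank), and the claim that the right-hand slice has rank \emph{equal} to $\omega(\bm N_3^{\top}\bm v)$ — a sum of linearly independent rank-one matrices can still have deficient matrix rank, so Kruskal's proof instead uses a lower bound of the form $\min(\omega,k_{\bm N_1})+\min(\omega,k_{\bm N_2})-\omega$. You correctly flag this bookkeeping as the delicate core, so the sketch is acceptable as stated; for the purposes of this paper, citing the result, as both you and the authors do, is the appropriate level of rigor.
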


If a matrix has full column rank $K$, then it must also have Kruskal rank $K$ by definition. As a corrolary of Lemma \ref{lem-kruskal}, if the three matrices $\bm M_1,\,\bm M_2,\,\bm M_3$ all have full column rank $K$, then the condition $\rank_{K}(\bm M_1) + \rank_{K}(\bm M_2) + \rank_{K}(\bm M_3)=3K \geq 2K + 2$ is satisfied and the uniqueness conclusion follows.
We now take $\bm M_m = f^{(m)}(\bo\Gamma)$, $\bm N_m=f^{(m)}(\overline{\bo\Gamma})$ for $m=1,2$, and define
\begin{align*}
    \bm M_3 = f^{(3)}(\bo\Gamma) \cdot \diag(\vect(\bo\Phi)),\quad 
    \bm N_3 = f^{(3)}(\overline{\bo\Gamma}) \cdot \diag(\vect(\overline{\bo\Phi})),
\end{align*}
then there is $\vect(\bo\Pi) = \bigodot_{m=1}^3 \bm M_m = \bigodot_{m=1}^3 \bm N_m$.
According to our argument right after \eqref{eq-gamdec}, $\rank_K(\bm M_m)= \rank_K(f^{(m)}(\bo\Gamma)) = K$ for $m=1,2$. As for $\bm M_3$, since $f^{(3)}(\bo\Gamma)$ has full column rank $K$ and the entries of $\bo\Phi$ are positive, the $\bm M_3$ also has full column rank $K$. 
Therefore, we can invoke Lemma \ref{lem-kruskal} to establish that there exists a permutation matrix $\bm P$ and three invertible diagonal matrices $\bm D_m$ with $\bm D_1 \bm D_2 \bm D_3 =\mathbf I_K$ such that
\begin{align*}
    f^{(m)}(\overline{\bo\Gamma}) = \bm N_m = \bm  M_m \bm D_m \bm P
    =f^{(m)}({\bo\Gamma}) \bm D_m \bm P
\end{align*}
for $m=1,2,3$.

The next step is to show that the diagonal matrices $\bm D_i$ are all identity matrices.
Note that each column of the $\prod_{j\in S^{(1)}} d_j\times K$ matrix $f^{(1)}({\bo\Gamma}) = \bigodot_{g=1}^G \bo\Gamma_{j_g^{(1)}} = \bigotimes_{g=1}^G \bo\Lambda_{j_g^{(1)}}$ 
characterizes the conditional joint distribution of $\{y_j:\, j\in S^{(1)}\}$ given the latent assignment vector $\bo z\in[K]^G$ under the true $\bo\Lambda$-parameters. 
And similarly, each column of $f^{(1)}(\overline{\bo\Gamma}) = \bigodot_{g=1}^G \overline{\bo\Gamma}_{j_g^{(1)}} = \bigotimes_{g=1}^G \overline{\bo\Lambda}_{j_g^{(1)}}$ 
characterizes the conditional joint distribution of $\{y_j:\, j\in S^{(1)}\}$ given $\bo z\in[K]^G$ under the alternative $\overline{\bo\Lambda}$.
Therefore the sum of each column of $f^{(1)}({\bo\Gamma})$ or that of $f^{(1)}(\overline{\bo\Gamma})$ equals one, which implies the diagonal matrix $\bm D_m$ is an identity matrix for $m=1$ or 2. 
Since Lemma \ref{lem-kruskal} ensures $\bm D_1\bm D_2\bm D_3=\II_K$, we also obtain $\bm D_3=\II_K$.  
By far we have obtained $f^{(m)}(\overline{\bo\Gamma}) = f^{(m)}({\bo\Gamma}) \bm P$ for $m=1,2$ and
\begin{align}\label{eq-f3gam}
    f^{(3)}(\overline{\bo\Gamma})\bcdot\text{diag}(\vect(\overline{\bo\Phi})) = f^{(3)}({\bo\Gamma})\bcdot \text{diag}(\vect({\bo\Phi})) \bm P.
\end{align} 
Note that the permutation matrix $\bm P$ has rows and columns both indexed by latent assignment vectors $\zz\in[K]^G$.
For $m=3$,
consider an arbitrary $\zz_1\in[K]^G$ and assume without loss of generality that $\zz_2\in[K]^G$ satisfies that the $(\zz_2,\zz_1)$th entry of matrix $\bm P$ is $\bm P_{\zz_2,\zz_1}=1$. 
Then the $\zz_1$th column of the matrix equality $f^{(3)}(\overline{\bo\Gamma})\bcdot\text{diag}(\vect(\overline{\bo\Phi})) = f^{(3)}({\bo\Gamma})\bcdot \text{diag}(\vect({\bo\Phi})) \bm P$ takes the form 
$$
f^{(3)}(\overline{\bo\Gamma})_{\cc,\zz_1}\cdot\vect(\overline{\bo\Phi})_{\zz_1}
=f^{(3)}({\bo\Gamma})_{\cc,\zz_2}\cdot\vect({\bo\Phi})_{\zz_2}
$$
for each $\cc\in\times_{j=1}^p [d_j]$; summing the above equality over the index $\cc\in\times_{j\in\mathcal A_1}[d_j]$ gives $\vect(\overline{\bo\Phi})_{\zz_1}=\vect(\bo\Phi)_{\zz_2}$. Note we have generally established $\vect(\overline{\bo\Phi})_{\zz_1}=\vect({\bo\Phi})_{\zz_2}$ whenever $\bm P_{\zz_2,\zz_1}=1$, which essentially implies $\vect(\overline{\bo\Phi})^\top=\vect({\bo\Phi})^\top\cdot \bm P$. Note that this shows the identifiability of the tensor core in our hybrid tensor decomposition formulation of the Gro-M$^3$.
Further, $\vect(\overline{\bo\Phi})^\top=\vect({\bo\Phi})^\top\cdot \bm P$ implies that $$\diag(\vect(\overline{\bo\Phi}))
=\diag(\vect({\bo\Phi})\cdot \bm P)
=\diag(\vect({\bo\Phi}))\cdot \bm P.$$ 
Combining the above display to the previous \eqref{eq-f3gam}, since $\diag(\vect(\overline{\bo\Phi}))$ is a diagonal matrix with positive diagonal entries, we can right multiply the inverse of this matrix with the LHS of \eqref{eq-f3gam} and meanwhile right multiply the inverse of $\diag(\vect({\bo\Phi}))\cdot \bm P$ with the RHS of \eqref{eq-f3gam}; this gives $f^{(3)}(\overline{\bo\Gamma}) = f^{(3)}({\bo\Gamma})\bm P$.

Our final step of proving the theorem is to show that the established $f^{(m)}(\overline{\bo\Gamma}) = f^{(m)}({\bo\Gamma})\bm P$ for $m=1,2,3$ implies the identifiability of $\bo\Lambda$ and $\LL$. 
First, since $f^{(m)}({\bo\Gamma})$ is defined as certain Khatri-Rao products of the individual $\bo\Gamma_j$'s, we claim that the $f^{(m)}(\overline{\bo\Gamma}) = f^{(m)}({\bo\Gamma})\bm P$ indeed implies that ${\overline{\bo\Gamma}}_j = {{\bo\Gamma}}_j \bm P$ for each $j\in[p]$. 
To see this, note that each column of $f^{(m)}(\overline{\bo\Gamma})$ and  $f^{(m)}({\bo\Gamma})\bm P$ characterizes the conditional joint distribution of variables $\{y_j:\, j\in S^{(m)}\}$ given the $\zz$. So the conditional marginal distribution $\bo\Gamma_j$ can be obtained by summing up appropriate row vectors of the matrices $f^{(m)}(\overline{\bo\Gamma})$ and  $f^{(m)}({\bo\Gamma})\bm P$, corresponding to marginalizing out other variables except the $j$th one.
Now without loss of generality we can assume that $\bm P=\II_{K^G}$, then ${\overline{\bo\Gamma}}_j = {\bo\Gamma}_j \bm P$ gives $\bar\gamma_{j,c_j,\zz} = \gamma_{j,c_j,\zz}$ for all $\zz\in[K]^G$.
Now it only remains to show that $\bo\Gamma$ uniquely determines $\bo\Lambda$ and $\LL$. By definition \eqref{eq-newlambda} there is $\gamma_{j,c_j,\zz} = \lambda_{j,c_j,z_{s_j}}$. For arbitrary $s_j$ and $\overline s_j$, we first consider an arbitrary latent assignment $\zz\in[K]^G$ such that $z_{s_j} = z_{\overline s_j}$, then 
$$
\lambda_{j,c_j,k} = \lambda_{j,c_j,z_{s_j}} = \gamma_{j,c_j,\zz} = 
\overline \gamma_{j,c_j,\zz}
= \overline\lambda_{j,c_j,z_{\overline s_j}} 
= \overline\lambda_{j,c_j,k}.
$$
The above reasoning proves the identifiability of $\bo\Lambda$. Thus far we have proved part (a) of Theorem \ref{thm-attr1}.

We next prove part (b) of the theorem.
We use proof by contradiction to show the identifiability of the grouping matrix $\LL$ (or equivalently, the identifiability of the vector $\bo s$). If there exists some $j\in[p]$ such that the $j$th rows of $\LL$ and $\overline\LL$ are different, then $s_j \neq \overline s_j$; denote $s_j =: g$ and $\overline s_j=:g'$. 
Next for arbitrary two different indices $k, k'\in[K]$ and $k\neq k'$, we consider a latent assignment $\zz\in[K]^G$ such that $z_g=k$ and $z_g=k'$. Then there are 
$$
\lambda_{j,c_j,k} = \lambda_{j,c_j,z_{s_j}} = \gamma_{j,c_j,\zz} = 
\overline \gamma_{j,c_j,\zz}
= \overline\lambda_{j,c_j,z_{\overline s_j}} 
= \overline\lambda_{j,c_j,k'}~\text{ for all }~ c_j\in[d_j].
$$
Since $k\neq k'$, the above equality means the $k$th and $k'$th columns of $\bo\Lambda_j$ are identical. Since $k$ and $k'$ are two arbitrary indices, this means all the column vectors in the matrix $\bo\Lambda_j$ are identical.
This contradicts the assumption in part (b) of the theorem.
Therefore we have shown that $s_j=\overline s_j$ must hold for an arbitrary $j\in[p]$. This proves the identifiability of $\LL$ from $\bo\Gamma$. 
This completes the proof of Theorem \ref{thm-attr1}.


\subsection{Proof of Theorem \ref{thm-attr-finer}}
The proof of this theorem is similar in spirit to the previous Theorem \ref{thm-attr1} by exploiting the inherent tensor decomposition structure, but differing in taking advantage of the more dimension-grouping structure under the assumptions here. 
Recall $\mca_g=\{g\in[p]:\, \ell_{j,g}=1\} = \cup_{m=1}^3 \mca_{g,m}$.
We write $\vect(\bo\Pi)$ under the true parameters as
\begin{align*}
    &~\vect(\bo\Pi) 
    = \left\{\bigotimes_{g=1}^G \bigodot_{j\in\mca_g} \bo\Lambda_j\right\}
     \cdot \vect(\bo\Phi)
    \\ \notag
    = &~ \left\{\bigotimes_{g=1}^G \left(\bigodot_{m=1}^3 \bigodot_{j\in\mca_{g,m}} \bo\Lambda_j\right)\right\}
     \cdot \vect(\bo\Phi)
    \\ \notag
    = &~
    \left[
    \left\{\bigotimes_{g=1}^G \left(\bigodot_{j\in\mca_{g,1}}\bo\Lambda_{j}\right)\right\}
    \bigodot
    \left\{\bigotimes_{g=1}^G \left(\bigodot_{j\in\mca_{g,2}}\bo\Lambda_{j}\right)\right\}
    \bigodot
    \left\{\bigotimes_{g=1}^G \left(\bigodot_{j\in\mca_{g,3}}\bo\Lambda_{j}\right)\right\}
    \right]\cdot 
    \vect(\bo\Phi).
\end{align*}
Since each $\mca_{g,m}$ is nonempty under the assumption in the theorem and $\cup_{g=1}^G \mca_{g,m} \supseteq [G]$, we can use Lemma \ref{lem-gamma} to obtain that 
$$
\bigodot_{j\in \cup_{g=1}^G \mca_{g,m}} \bo\Gamma_j
=
\bigotimes_{g=1}^G \left(\bigodot_{j\in\mca_{g,m}}\bo\Lambda_{j}\right)
=
\bigotimes_{g=1}^G \tilde{\bo\Lambda}_{g,m},
$$
where the second equality above follows from the definition of $\tilde{\bo\Lambda}_{g,m}$ in the theorem.
We now define
$$
 f^{(m)}(\bo\Gamma) := \bigodot_{j\in \cup_{g=1}^G \mca_{g,m}} \bo\Gamma_j, \quad m=1,2,3.
$$
Since the theorem has the assumption that each $\tilde{\bo\Lambda}_{g,m}$ has full column rank $K$, we have that $f^{(m)}(\bo\Gamma)$ has full rank $K$. 
Note that each $f^{(m)}(\cdot)$ is the Khatri-Rao product of certain $\bo\Gamma_j$'s and $f^{(m)}$ depends on the true grouping matrix $\LL$. Also note that there is
\begin{align*}
    \vect(\bo\Pi) = &\left( f^{(1)}(\bo\Gamma) \bigodot f^{(2)}(\bo\Gamma) \bigodot f^{(3)}(\bo\Gamma) \right)\cdot \vect(\bo\Phi)
    \\ \notag
    = &\left( f^{(1)}(\overline{\bo\Gamma}) \bigodot f^{(2)}(\overline{\bo\Gamma}) \bigodot f^{(3)}(\overline{\bo\Gamma}) \right)\cdot \vect(\overline{\bo\Phi}).
\end{align*}
Now the problem is in exactly the same formulation as that in the proof of Theorem \ref{thm-attr1}, so we can proceed in the same way to establish the identifiability of $\bo\Phi$ and individual $\bo\Gamma_j$'s. The identifiability of $\bo\Gamma$ further gives the identifiability of $\bo\Lambda$ and $\LL$. This finishes the proof of Theorem \ref{thm-attr-finer}.
\qed

\subsection{Proof of Theorem \ref{thm-attrgen}}
We first prove the following claim: 

\noindent\textbf{Claim 1.} Under condition \eqref{eq-djk} that $\prod_{j\in\mca_{g,m}} d_j\geq K$ in the theorem, the following matrix $\tilde{\bo\Lambda}_{g,m}$ has full column rank $K$ for generic parameters,
\begin{align*}
    \tilde{\bo\Lambda}_{g,m} = \bigodot_{j\in\mca_{g,m}} \bo\Lambda_j.
\end{align*}
The $\tilde{\bo\Lambda}_{g,m}$ above has the same definition as that in Theorem \ref{thm-attr-finer}.
The proof of this claim is similar in spirit to that of Lemma 13 in \cite{allman2009}. 
Note that the statement that the $\prod_{j\in  \mca_{g,m}} d_j \times K$ matrix $\tilde{\bo\Lambda}_{g,m}$ does not have full column rank is equivalent to the statement that the maps sending $\tilde{\bo\Lambda}_{g,m}$ to its $K\times K$ minors are all zero maps. There are 
$$\binom{\prod_{j\in \mca_{g,m}} d_j}{K}$$
such maps, and each of this map is a polynomial with indeterminants   $\lambda_{j,c_j,k}$'s.
To show that $\tilde{\bo\Lambda}_{g,m}$ has full column rank $K$ for generic parameters, we just need to show that these maps are not all zero polynomials. 
According to the property of the polynomial maps, it indeed suffices to find one particular set of $\{\bo\Lambda_j;\, j\in \mca_{g,m}\}$ such that the resulting Khatri-Rao product $\tilde{\bo\Lambda}_{g,m}$ has full column rank.

Consider a set of distinct prime numbers denoted by $\{a_{j,c};\,j=1,\ldots,p,\, c=1,\ldots,d_j\}$. Define
 \begin{align}\label{eq-vdm}
 	\bo\Lambda_j^\star = 
 	\begin{pmatrix}
 		1 & a_{j,1}   &  a^2_{j,1}  & \cdots & a^{K-1}_{j,1} \\
 		1 & a_{j,2}   &  a^2_{j,2}  & \cdots & a^{K-1}_{j,2} \\
 		\vdots & \vdots   &  \vdots  & \vdots & \vdots \\
 		1 & a_{j,d_j} &  a^2_{j,d_j}  & \cdots & a^{K-1}_{j,d_j} \\
 	\end{pmatrix},
 \end{align}
 then $\bo\Lambda_j^\star$ is a $d_j\times K$ Vandermonde matrix. 
 Generally, for a $d$-dimensional vector $\bo b$, let $\text{VDM}(\bo b) = \text{VDM}(b_1,\ldots, b_d)$ denote the the $d\times d$ Vandermonde matrix with the $(i,c)$th entry being $b_i^{c-1}$, so the $\bo\Lambda_j^\star$ defined in \eqref{eq-vdm} can be written as $\bo\Lambda_j^\star=\text{VDM}(a_{j,1},\ldots,a_{j,d_j})$.
 Now consider a $\tilde{\bo\Lambda}^\star_{g,m}$ defined as 
 $$
 \tilde{\bo\Lambda}^\star_{g,m} = \bigodot_{j\in\mca_{g,m}} \bo\Lambda^\star_j.
 $$
 Under the assumption \eqref{eq-djk} in the theorem that $\prod_{j\in \mca_{g,m}} d_j\geq K$, the $K$ columns of $\tilde{\bo\Lambda}_{g,m}$ are indeed the first $K$ columns in the following Vandermonde matrix
 \begin{align*}
 	\bm V_{g,m}=\text{VDM}\left(\prod_{j\in \mca_{g,m}} a_{j,1},\ldots,\prod_{j\in \mca_{g,m}} a_{j,d_j}\right).
 \end{align*}
Since by construction the $a_{j,c}$'s are distinct prime numbers, for each $j\in S_m$ the $d_j$ products $\prod_{j\in \mca_{g,m}} a_{j,1},~\allowbreak\ldots,~ \allowbreak \prod_{j\in \mca_{g,m}} a_{j,d_j}$ are also distinct numbers. Therefore the $\bm V_{g,m}$ defined above has full rank $\prod_{j\in S_m} d_j$. Since $\prod_{j\in \mca_{g,m}} d_j\geq K$ and $\tilde{\bo\Lambda}^\star_{g,m}$ has columns from the first $K$ columns of $\bm V_{g,m}$, we have that $\tilde{\bo\Lambda}^\star_{g,m}$ has full column rank $K$ for this particular choice of parameters. 
Note that the $\bo\Lambda_j^\star$ defined in \eqref{eq-vdm} does not have each column summing up to one, as is required in the parameterization of the probability tensor. But performing a positive rescaling of the each column of $\bo\Lambda_j^\star$ to a conditional probability table $\bo\Lambda_j$ would not change the above reasoning and conclusion about matrix rank; so we have proved the earlier Claim 1 that each $\tilde{\bo\Lambda}_{g,m}$ has full column rank $K$ for generic parameters. Given this conclusion, for generic parameters in the parameter space the situation is reduced back to that under Theorem \ref{thm-attr-finer}. So the identifiability condition in Theorem \ref{thm-attr1} carries over, and we can obtain the conclusion that identifiability holds here for generic parameters. This completes the proof of Theorem \ref{thm-attrgen}.
\qed

\subsection{Proof of Proposition \ref{prop-dir}}
Recall that under the conditions of the previous theorem, we already have the conclusion that $\bo\Lambda$ and $\bo\Phi$ are identifiable. Next the question boils down to whether $(\alpha_1,\ldots,\alpha_K)$ are identifiable from $\bo\Phi=(\phi_{k_1,\ldots,k_G})$.
By the definition, we have
\begin{align*}
    \phi_{k_1,\ldots,k_G} = \mathbb E_{\bo\pi\sim\text{Dir}(\aaa)}\left[\pi_{k_1}\cdots \pi_{k_G}\right]
    = \int_{\Delta^{K-1}} \pi_{k_1}\cdots \pi_{k_G} d\text{Dir}_{\aaa}(\bo\pi).
\end{align*}
First we consider the case of $G=2$. Denote $\alpha_0=\sum_{k=1}^K\alpha_k$. Then according to the moment property of the Dirichlet distribution, 
there is
\begin{align*}
    \mathbb E_{\bo\pi\sim\text{Dir}(\aaa)}\left[\pi_{k}\pi_{\ell}\right]
    =
    \begin{cases}
    \dfrac{\alpha_k\alpha_{\ell}}{\alpha_0(\alpha_0+1)}, & \text{if }k\neq\ell;\\[3mm]
    \dfrac{\alpha_k(\alpha_k+1)}{\alpha_0(\alpha_0+1)}, & \text{if }k=\ell.
    \end{cases}
\end{align*}
Therefore for $k\neq \ell$, consider $x$ and $y$ defined as follows,
\begin{align*}
    x:=\dfrac{\mathbb E_{\bo\pi\sim\text{Dir}(\aaa)}\left[\pi_{k}^2\right]}{\mathbb E_{\bo\pi\sim\text{Dir}(\aaa)}\left[\pi_{k}\pi_{\ell}\right]} 
    &= \dfrac{\alpha_k+1}{\alpha_\ell}, \\[2mm]
    y:=\dfrac{\mathbb E_{\bo\pi\sim\text{Dir}(\aaa)}\left[\pi_{\ell}^2\right]}{\mathbb E_{\bo\pi\sim\text{Dir}(\aaa)}\left[\pi_{k}\pi_{\ell}\right]} 
    &= \dfrac{\alpha_\ell+1}{\alpha_k}.
\end{align*}
Since $x$ and $y$ are already identified, then we can solve for $\alpha_k$ and $\alpha_\ell$ as follows
\begin{align*}
    \alpha_k = \frac{x+1}{xy-1},\quad
    \alpha_\ell = \frac{y+1}{xy-1}.
\end{align*}
Since the above reasoning holds for arbitrary pairs of $(k,\ell)$ with $k\neq\ell$, we have obtained the identifiability of the entire vector $\aaa=(\alpha_1,\ldots,\alpha_K)$.

Next we consider the general case of $G>2$. 
For arbitrary $1\leq k\neq \ell\leq K$, consider two sequences $(k,k,k_3,\ldots,k_G)$,  $(k,\ell,k_3,\ldots,k_G)\in[K]^G$.
According to the property of the Dirichlet distribution, we have
\begin{align}\label{eq-dir-u}
    \frac{\mathbb E_{\bo\pi\sim\text{Dir}(\aaa)}\left[\pi_{k}\pi_k \pi_{k_3}\cdots\pi_{k_G}\right]}
    {\mathbb E_{\bo\pi\sim\text{Dir}(\aaa)}\left[\pi_{k}\pi_\ell \pi_{k_3}\cdots\pi_{k_G}\right]}
    =
    \frac{\alpha_k+1}{\alpha_\ell},\\[2mm]
    \label{eq-dir-v}
    \frac{\mathbb E_{\bo\pi\sim\text{Dir}(\aaa)}\left[\pi_{\ell}\pi_\ell \pi_{k_3}\cdots\pi_{k_G}\right]}
    {\mathbb E_{\bo\pi\sim\text{Dir}(\aaa)}\left[\pi_{k}\pi_\ell \pi_{k_3}\cdots\pi_{k_G}\right]}
    =
    \frac{\alpha_\ell+1}{\alpha_k}.
\end{align}
Now that the left hand sides of the above two equations are identified by the previous theorem, we denote them by $u:=\text{LHS of }\eqref{eq-dir-u}$ and $v:=\text{LHS of }\eqref{eq-dir-v}$. The $u$ and $v$ are identified constants. Solving for $\alpha_k$ and $\alpha_\ell$ gives
$$
\alpha_k = \frac{u+1}{uv-1},\quad
    \alpha_\ell = \frac{v+1}{uv-1}.
$$
Since $k,\ell$ are arbitrary, we have shown that the entire vector $\aaa=(\alpha_1,\ldots,\alpha_K)$ is identifiable. This completes the proof of Proposition \ref{prop-dir}.
\qed

\subsection{Proof of Supporting Lemmas}
\begin{proof}[Proof of Lemma \ref{lem-gamma}]
First note that the $\bigodot_{j\in S} \bo\Gamma_{j}$ on the right hand side of \eqref{eq-gamma} has size $\prod_{j\in S}d_j \times K^G$. Further, since $\{s_{j}:\, j\in S\} \supseteq  [G]$, the set $\{j\in S:\, s_j=g\}$ is nonempty. So the $\bigodot_{j\in S:\, s_j=g} \bo\Lambda_j$ has $K$ columns and hence the left hand side of \eqref{eq-gamma} also has size $\prod_{j\in S}d_j \times K^G$.
Without loss of generality, suppose $S=\{1,2,\ldots,|S|\}$, where $|S|$ denotes the cardinality of the set $S$.
The $(c_1 + (c_2-1)d_1 + \cdots + (c_{|S|}-1)d_1\cdots d_{|S|-1}, \; z_1 + (z_2-1)K + \cdots + (z_G-1)K^{G-1})$th entry of the RHS of \eqref{eq-gamma} is 
$\prod_{j\in S} \gamma_{j,c_j,\bo z}$, which by definition equals $\prod_{j\in S} \lambda_{j,c_j,z_{s_j}} = \prod_{g=1}^G \prod_{j\in S:\, s_j=g} \lambda_{j,c_j,z_{g}}$; this is exactly the $(c_1 + (c_2-1)d_1 + \cdots + (c_{|S|}-1)d_1\cdots d_{|S|-1}, \; z_1 + (z_2-1)K + \cdots + (z_G-1)K^{G-1})$th entry of the LHS of \eqref{eq-gamma}. This completes the proof of Lemma \ref{lem-gamma}.
\end{proof}

\begin{proof}[Proof of Lemma \ref{lem-vect}]
First note that both hand sides of \eqref{eq-vecpi} are vectors of size $\prod_{j=1}^p d_j \times 1$.
To see this for the right hand side of \eqref{eq-vecpi}, note the matrix $\bigodot_{j:\, \, \ell_{j,g}=1} \bo\Lambda_j$ has size $\prod_{j:\, \ell_{j,g}=1} d_j \times K^{\sum_{j=1}^p \ell_{j,g}}$, and hence the matrix $\bigotimes_{g=1}^G \bigodot_{j:\, \ell_{j,g}=1} \bo\Lambda_j$ has size 
$$\prod_{g=1}^G\prod_{j:\, \ell_{j,g}=1} d_j \times K^{\sum_{g=1}^G \sum_{j=1}^p \ell_{j,g}},$$
which is just $\prod_{j=1}^p d_j \times K^G.$
Further note that the vector $\vect(\bo\Phi)$ has size $K^G \times 1$, so the $\left\{\bigotimes_{g=1}^G \bigodot_{j:\, \ell_{j,g}=1}  \bo\Lambda_j\right\} \vect(\bo\Phi)$ on the right hand side of \eqref{eq-vecpi} has size $\prod_{j=1}^p d_j \times 1$, matching the size of the left hand side. Next consider the individual entries of both hand sides of \eqref{eq-vecpi}.
First, by definition of the vec() operator, the $[c_1 + (c_2-1)d_1 + \cdots + (c_p-1)d_1\cdots d_{p-1}]$-th entry of the left hand side of \eqref{eq-vecpi} is $\pi_{c_1,\ldots,c_p}$.
%
Next, according to \eqref{eq-c3m-repar}, the $\pi_{c_1,\ldots,c_p}$ can be written in the following way,
\begin{align*}
\pi_{c_1,\ldots,c_p}=   
    &~\sum_{z_1=1}^K \cdots\sum_{z_G=1}^K \phi_{z_1,\ldots,z_G} \prod_{j=1}^p \gamma_{j, c_j, \bo z}\\
    =
    &~
    \sum_{z_1=1}^K \cdots\sum_{z_G=1}^K
    \vect(\bo\Phi)_{z_1 + (z_2-1)K + \cdots + (z_G-1) K^{G-1}}
    \times
    \prod_{g=1}^G \prod_{j:\, \ell_{j,g}=1} \lambda_{j,c_j, z_g}
    \\
    =
    &~
    \sum_{z_1=1}^K \cdots\sum_{z_G=1}^K
    \vect(\bo\Phi)_{z_1 + (z_2-1)K + \cdots + (z_G-1)K^{G-1}}
    \\
    &\qquad \times \left\{ 
    \bigotimes_{g=1}^G \bigodot_{j:\, \ell_{j,g}=1} \bo\Lambda_j
    \right\}_{c_1 + (c_2-1)d_1 + \cdots + (c_p-1)d_1\cdots d_{p-1}, \; z_1 + (z_2-1)K + \cdots + (z_G-1)K^{G-1}}
    \\[2mm]
    =&~
    \vect(\bo\Phi)^\top \cdot \left\{ 
    \bigotimes_{g=1}^G \bigodot_{j:\, \ell_{j,g}=1} \bo\Lambda_j
    \right\}^\top_{c_1 + (c_2-1)d_1 + \cdots + (c_p-1)d_1\cdots d_{p-1},\, \bcolon}.
\end{align*}
The last row in the above display exactly equals the $[c_1 + (c_2-1)d_1 + \cdots + (c_p-1)d_1\cdots d_{p-1}]$-th entry of the RHS of \eqref{eq-vecpi}. This proves the equality in \eqref{eq-vecpi} and completes the proof of Lemma \ref{lem-vect}.
\end{proof}

\begin{proof}[Proof of Lemma \ref{lem-krkr}]
In the LHS of \eqref{eq-krkr}, the term $\bo\Lambda_{j^{(1)}_g} \bigodot \bo\Lambda_{j^{(2)}_g}$ has size $d_{j_g^{(1)}}d_{j_g^{(2)}} \times K$ and hence the Kronecker product $\bigotimes_{g=1}^G \left\{\bo\Lambda_{j^{(1)}_g} \bigodot \bo\Lambda_{j^{(2)}_g} \right\}$ has size $\prod_{j\in S^{(1)} \cup S^{(1)}} d_j \times K^G$. 
In the RHS of \eqref{eq-krkr}, the term $ \left\{\bigotimes_{g=1}^{G} \bo\Lambda_{j^{(1)}_g}\right\}$ has size $\prod_{g=1}^G d_{j_g^{(1)}} \times K^G$, and hence the Khatri-Rao product of two such terms
$$
\left\{\bigotimes_{g=1}^{G} \bo\Lambda_{j^{(1)}_g}\right\} \bigodot \left\{\bigotimes_{g=1}^{G}\bo\Lambda_{j^{(2)}_g} \right\}
$$
has size $\left(\prod_{g=1}^G d_{j_g^{(1)}} d_{j_g^{(2)}}\right) \times K^G$.
So both hand sides of \eqref{eq-krkr} has size $\prod_{j\in S^{(1)} \cup S^{(2)}} d_j \times K^G$. 
The equality \eqref{eq-krkr} can be similarly shown as in the proof of Lemma \ref{lem-vect} by writing out and checking the individual elements of the two matrices on the LHS and RHS of \eqref{eq-krkr}.

Similarly, the LHS and RHS of \eqref{eq-krkr2} both have size $\prod_{j\in S^{(1)} \cup S^{(1)}\cup\left\{j_G^{(3)}\right\}} d_j \times K^G$ and the equality can be similarly shown as in the proof of Lemma \ref{lem-vect}.
\end{proof}


\section{Supplement B: Pairwise Cramer's V between Categorical Variables}\label{sec-mi}

According to the definition of mutual information in information theory, for two discrete variables $y_{j}\in[d_{j}]$ and $y_{m}\in[d_{m}]$, their Cramer's V is 
\begin{align}\label{eq-mi-pop}
    \text{CRV}(y_{j}, y_{m}) = 
    \left\{\frac{1}{\min(d_j,~ d_m)}
    \sum_{c_1\in[d_{j}]} \sum_{c_2\in[d_{m}]}  
    \frac{\left(p_{(y_{j}, y_{m})}(c_1, c_2)  - p_{y_{j}}(c_1) p_{y_{m}}(c_2)\right)^2}{p_{y_{j}}(c_1) p_{y_{m}}(c_2)} \right\}^{1/2}
\end{align}
where $p_{y_{j}}(c_1) = \mathbb P (y_j = c_1)$ denotes the marginal distribution of $y_j$ and $p_{(y_{j}, y_{m})}(c_1, c_2) = \mathbb P (y_j=c_1, y_m=c_2)$ denotes the joint distribution of $y_j$ and $y_m$. The Cramer's V measures the the inherent dependence expressed in the joint distribution of two variables relative to their marginal distributions under the independence assumption. Therefore, Cramer's V measures the dependence between variables and it equals zero if and only of the two variables are independent; otherwise Cramer's V is positive.

The expression of Cramer's V in \eqref{eq-mi-pop} is the population version. Given a sample $\bo y_1, \ldots, \bo y_n$ with $\bo y_i = (y_{i,1},\ldots, y_{i,p})$, the population quantities of the marginal and joint distributions in \eqref{eq-mi-pop} can be replaced by their sample estimates. That is, the previous $p_{y_{j}}(c_1)$ and $p_{(y_{j}, y_{m})}(c_1, c_2)$ are replaced by the following,
\begin{align*}
    p^{\text{samp}}_{y_{j}}(c_1) 
    =  \frac{1}{n} \sum_{i=1}^n \mathbb I(y_{i,j}=c_1),\quad
    p^{\text{samp}}_{(y_{j}, y_{m})}(c_1, c_2)
    =
     \frac{1}{n} \sum_{i=1}^n\mathbb I (y_{i,j}=c_1, y_{i,m}=c_2).
\end{align*}
Using the sample-based Cramer's V measure, we calculate the Cramer's V for all the pairs of variables when $j$ and $m$ each range from 1 to $p$. For two randomly chosen simulated datasets from the simulations settings $p=30,~ G=6,~ K=3,~ n=1000$ and $p=90,~ G=15,~ K=3,~ n=1000$ described in Section \ref{sec-simu} in the main text, their pairwise Cramer's V plots are displayed in Figure \ref{fig-mi-simu}.

\begin{figure}[h!]
    \centering
    \begin{subfigure}[b]{0.49\textwidth}
    \includegraphics[width=\textwidth]{figures/simu_crv_G6.png}
    \caption{$p=30,~ G=6,~ K=3,~ n=1000$.}
    \end{subfigure}
    \hfill
    \begin{subfigure}[b]{0.49\textwidth}
    \includegraphics[width=\textwidth]{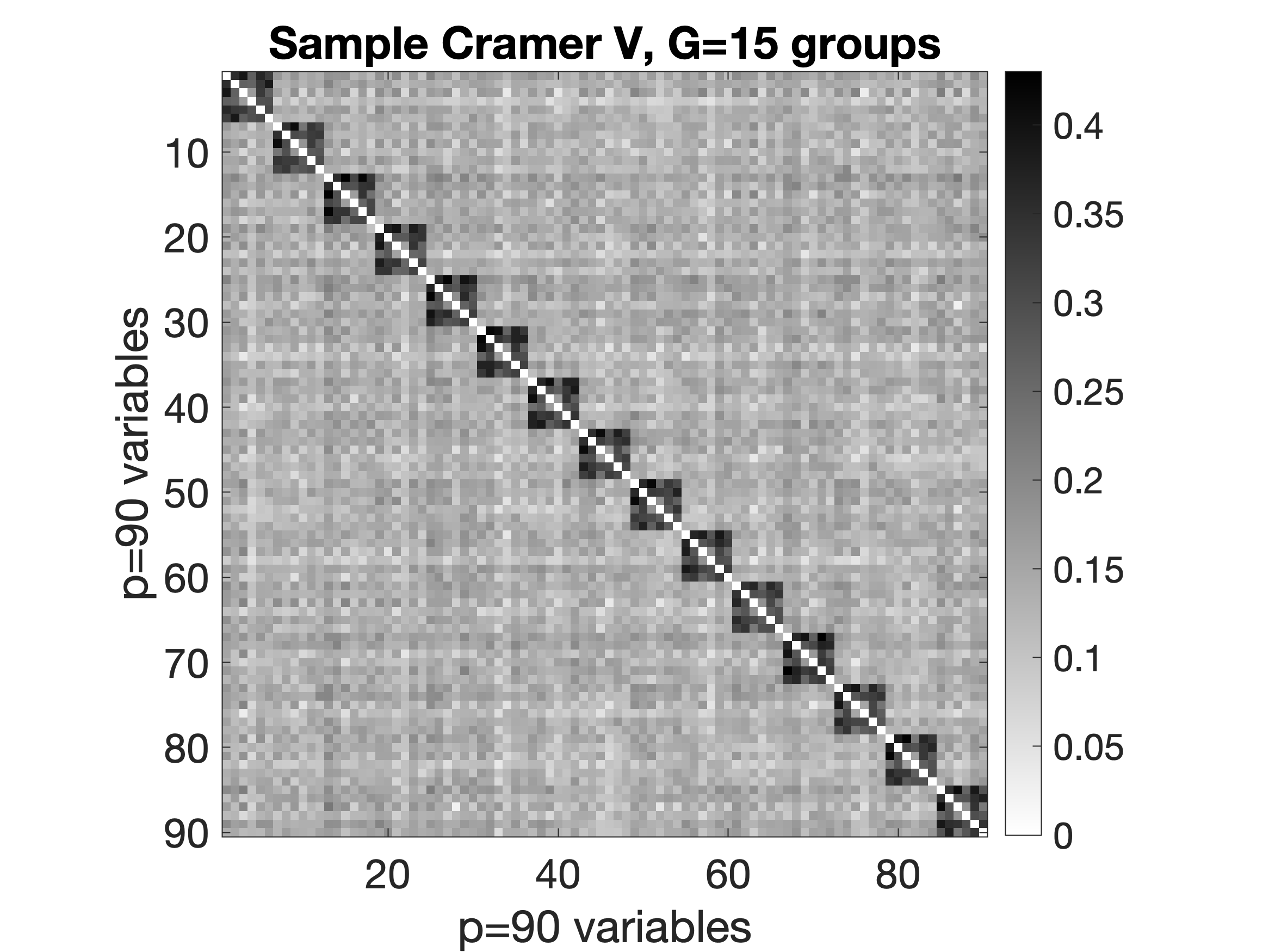}
    \caption{$p=90,~ G=15,~ K=3,~ n=1000$.}
    \end{subfigure}
    
    \caption{Cramer's V of item pairs for two simulated datasets.}
    \label{fig-mi-simu}
\end{figure}

By visual inspection, Figure \ref{fig-mi-simu} shows a block-diagonal structure of the $p\times p$ pairwise Cramer's V matrix for both simulation settings. In each of these settings, the true grouping matrix $\LL$ used to generate data takes the form that the first $p/G$ variables belong to a same group, the second $p/G$ variables belong to another same group, etc. Therefore, Figure \ref{fig-mi-simu} implies in the simulations, the variables belonging to the same group tend to show higher dependence than those variables belonging to different groups.

\end{document}